\renewcommand{\tocsection}[3]{%
  \indentlabel{\@ifnotempty{#2}{\bfseries\ignorespaces#1 #2\quad}}\bfseries#3}
\renewcommand{\tocsubsection}[3]{%
  \indentlabel{\@ifnotempty{#2}{\ignorespaces#1 #2\quad}}#3}
\newcommand\@dotsep{4.5}
\def\@tocline#1#2#3#4#5#6#7{\relax
  \ifnum #1>\c@tocdepth 
  \else
    \par \addpenalty\@secpenalty\addvspace{#2}%
    \begingroup \hyphenpenalty\@M
    \@ifempty{#4}{%
      \@tempdima\csname r@tocindent\number#1\endcsname\relax
    }{%
      \@tempdima#4\relax
    }%
    \parindent\z@ \leftskip#3\relax \advance\leftskip\@tempdima\relax
    \rightskip\@pnumwidth plus1em \parfillskip-\@pnumwidth
    #5\leavevmode\hskip-\@tempdima{#6}\nobreak
    \leaders\hbox{$\m@th\mkern \@dotsep mu\hbox{.}\mkern \@dotsep mu$}\hfill
    \nobreak
    \hbox to\@pnumwidth{\@tocpagenum{\ifnum#1=1\bfseries\fi#7}}\par
    \nobreak
    \endgroup
  \fi}
\renewcommand\csname r@tocindent0\endcsname{0pt}
\def\l@subsection{\@tocline{2}{0pt}{2.5pc}{5pc}{}}
\patchcmd{\@setaddresses}{\indent}{\noindent}{}{}
\patchcmd{\@setaddresses}{\indent}{\noindent}{}{}
\patchcmd{\@setaddresses}{\indent}{\noindent}{}{}
\patchcmd{\@setaddresses}{\indent}{\noindent}{}{}
\DeclareMathOperator{\C}{\mathcal{C}}
\newcommand{\srk}{\mathrm{srk}}
\newcommand{\dsrk}{\mathrm{d}_{\mathrm{srk}}}
\DeclareMathOperator{\Aut}{Aut}
\DeclareMathOperator{\End}{End}
\DeclareMathOperator{\Gal}{Gal}
\DeclareMathOperator{\rk}{rk}
\DeclareMathOperator{\Mat}{Mat}
\DeclareMathOperator{\Ext}{Ext}
\DeclareMathOperator{\GL}{GL}
\DeclareMathOperator{\ww}{w}
\theoremstyle{definition}
\newtheorem{theorem}{Theorem}[section]
\newtheorem{corollary}[theorem]{Corollary}
\newtheorem{definition}[theorem]{Definition}
\newtheorem{proposition}[theorem]{Proposition}
\newtheorem{example}[theorem]{Example}
\newtheorem{remark}[theorem]{Remark}
\newcommand{\fqn}{\mathbb{F}_{q^n}}
\newcommand{\F}{{\mathbb F}}
\newcommand{\Z}{{\mathbb Z}}
\newcommand{\NN}{{\mathbb N}}
\newcommand{\w}{{\mathrm w}}
\newcommand{\xx}{{\mathbf x}}
\newcommand{\yy}{{\mathbf y}}
\newcommand{\bfn}{\mathbf {n}}
\newcommand{\bfm}{\mathbf {m}}
\newcommand{\gcrd}{\mathrm{gcrd}}
\newcommand{\fq}{{\mathbb F}_{q}}
\newcommand{\Fq}{{\mathbb F}_{q}}
\newcommand{\Fm}{{\mathbb F}_{q^m}}
\newcommand{\K}{{\mathbb K}}
\newcommand{\N}{\mathrm{N}}
\newcommand{\spacmn}{\Mat(\bfm,\bfn,\F_q)}
\newcommand{\Fmnk}{[\bfn,k]_{q^m/q}}
\newcommand{\Fmnkd}{[\bfn,k,d]_{q^m/q}}
\newcommand{\lid}{\mathcal{I}_{\ell}}
\newcommand{\rid}{\mathcal{I}_{r}}
\newcommand{\st}{\,:\,}
\newcommand{\fqm}{\mathbb{F}_{q^m}}
\newcommand{\wl}{\mathrm{wt}_{\Lambda}}
\newcommand{\dl}{\mathrm{d}_{\Lambda}}
\newcommand{\RH}{R/RH_{\Lambda}}
\newcommand{\Hl}{H_{\Lambda}}
\newcommand{\Fix}{\mathrm{Fix}}
\DeclareRobustCommand{\intodo}[1]{%
  \todo[inline]{#1}%
}
\title{Invariants for sum-rank metric codes}
\date{}
\author[P. Santonastaso]{Paolo Santonastaso}
\address{Paolo Santonastaso, \textnormal{Dipartimento di Matematica e Fisica, Universit\`a degli Studi della Campania ``Luigi Vanvitelli'', Viale Lincoln, 5, I--\,81100 Caserta, Italy}\newline
\textnormal{Dipartimento di Meccanica, Matematica e Management, Politecnico di Bari, Via Orabona 4, 70125 Bari, Italy}}
\email{paolo.santonastaso@unicampania.it,paolo.santonastaso@poliba.it}
\author[F. Zullo]{Ferdinando Zullo}
\address{Ferdinando Zullo, \textnormal{Dipartimento di Matematica e Fisica, Universit\`a degli Studi della Campania ``Luigi Vanvitelli'', Viale Lincoln, 5, I--\,81100 Caserta, Italy}}
\email{ferdinando.zullo@unicampania.it}
\subjclass[2020]{12E10; 16S36; 94B60} 
\keywords{Sum-rank metric code, code invariant, nuclear parameter}
\begin{document}

\begin{abstract}
The code equivalence problem is central in coding theory and cryptography. While classical invariants are effective for Hamming and rank metrics, the sum-rank metric, which unifies both, introduces new challenges.
This paper introduces new invariants for sum-rank metric codes: generalised idealisers, the centraliser, the center, and a refined notion of linearity. These lead to the definition of nuclear parameters, inspired by those used in division algebra theory, where they are crucial for proving inequivalence.
We also develop a computational framework based on skew polynomials, which is isometric to the classical matrix setting but enables explicit computation of nuclear parameters for known MSRD (Maximum Sum-Rank Distance) codes. This yields a new and effective method to study the code equivalence problem where traditional tools fall short. In fact, using nuclear parameters, we can study the equivalence among the largest families of known MSRD codes.
\end{abstract}

\maketitle

\tableofcontents

\section{Introduction}

In coding theory, one of the fundamental questions is how to determine whether two linear codes are equivalent, that is, whether they can be transformed into one another by a certain isometry.
This problem, known as \emph{code equivalence problem}, plays a central role in the design and analysis of several cryptographic schemes.
In particular, many code-based cryptosystems rely on the assumed hardness of code equivalence under various metrics (e.g., Hamming or rank metric) to hide the structure of secret codes and ensure security; see, for example, \cite{barenghi2023computational,petrank2002code,reijnders2024hardness,biasse2025code}.
Classical invariants used to study code equivalence include the basic code parameters (length, dimension, and minimum distance), the weight distribution (i.e., the number of codewords of each weight), the structure of the automorphism group, and the supports of codewords, particularly those of minimum weight.
These invariants are preserved under code equivalence and are therefore essential tools in both theoretical classification and cryptanalytic attacks on code-based cryptosystems; see \cite{barenghi2023computational}.

This paper regards the study of invariants for sum-rank metric codes. 
The sum-rank metric unifies and generalizes both Hamming metric and rank metric, providing a flexible framework for applications in network coding, distributed storage, and cryptography; see e.g. \cite{martinez2022codes}. The codewords of a sum-rank metric code are vectors of matrices (or vectors over extension fields), and the distance between two codewords is defined as the sum of the ranks of the differences of their corresponding blocks.

Families such as Reed–Solomon codes and Gabidulin codes are examples or building blocks of optimal sum-rank metric codes.
These codes have recently gained significant attention in both theoretical research and practical applications, particularly in designing resilient multi-shot network codes and secure distributed storage systems; see \cite{martinezpenas2019reliable,martinezpenas2019universal}.

In recent years, a rich mathematical theory of sum-rank metric codes has been developed, primarily through the work of Martínez-Peñas \cite{Martinez2018skew,martinez2019theory,martinezpenas2021hamming}, and continues to grow with the emergence of new constructions of optimal codes tailored to various bounds; see \cite{byrne2021fundamental,abiad2023eigenvalue,ott2021bounds,abiad2024linear,camps2022optimal,alfarano2022sum}.
Indeed, within the framework of skew polynomials, the known constructions of MSRD codes—those that attain equality in the Singleton-like bound—are as follows:
\begin{itemize}
    \item Linearized Reed-Solomon (LRS) codes,
    \item (Additive) Twisted Linearized Reed-Solomon (ATLRS) codes,
    \item Twisted Linearized Reed-Solomon (TLRS) codes of TZ-type.
\end{itemize}

In most of the cases, these families provide examples of MSRD codes with the same parameters. This motivates the need to develop suitable invariants capable of distinguishing between such codes. Some of them have been recently developed for the sum-rank metric, such as supports \cite{martinez2019theory}, an Overbeck-like invariant \cite{hormann2022distinguishing}, and latroids \cite{gorla2025latroids}; however, they are generally very difficult to compute.
A similar situation occurred with the rank metric. A key idea was first introduced by Liebhold and Nebe in \cite{liebhold2016automorphism}, and later developed by Lunardon, Trombetti, and Zhou in \cite{lunardon2018nuclei}. They adapted classical invariants from the theory of semifields and division algebras to the rank-metric setting. The core idea is to associate certain algebras related to a rank-metric code that correspond to subgroups of its automorphism group—subgroups that are, in some sense, much easier to compute. These algebras are known as \emph{idealisers}.
They have been widely used to prove the inequivalence of certain MRD codes, such as Gabidulin codes, twisted Gabidulin codes \cite{trombetti2019nuclei}, and codes arising from a broad class of scattered polynomials \cite{longobardi2024scattered}.
In the first part of this paper, we study and characterize the notion of nondegenerate rank-metric codes, which have been mainly considered in the previous papers only for a class of rank-metric codes which are known as $\F_{q^m}$-linear codes; see e.g. \cite{Randrianarisoa2020ageometric,alfarano2022linear}.
We now introduce several new invariants for the sum-rank metric:

\begin{itemize}
\item generalised idealisers (Section~\ref{sec:genidea}),
\item the centraliser (Section~\ref{sec:central+center}),
\item the center (Section~\ref{sec:central+center}),
\item a refined notion of linearity (Section~\ref{sec:linearity}).
\end{itemize}

Generalised idealisers can be seen as an extension of classical idealisers from the rank metric. The key difference is that, in the sum-rank setting, these become sets of $t$-tuples of matrices. Moreover, there can be more than two idealisers - specifically, up to $2^t$, where $t$ corresponds to the number of blocks of a sum-rank metric code.
The notions of centraliser and center have been recently introduced in the rank-metric context as generalizations of the right nucleus and the center of semifields/division algebras (cf.\ \cite{sheekey2020new,thompson2023division}). In this work, we further extend these concepts to the sum-rank metric framework, where we prove that they serve as invariants for codes in this setting.
By analogy with the nuclear parameters of a semifield and those of a rank-metric code, we refer to the sizes of left and right idealisers, as well as those of the centraliser and the center, as the \emph{nuclear parameters} of a sum-rank metric code, since they are invariants under code equivalence.
Linearity is another important invariant. As shown in Example~\ref{ex:notlinear}, there exist equivalent sum-rank metric codes such that one is $\mathbb{F}_{q^m}$-linear and the other is only $\mathbb{F}_q$-linear, viewed as subspaces of the same ambient space. This implies that linearity alone is not sufficient to determine the code equivalence.
To resolve this, we introduce a refined notion of linearity based on the left idealiser. This approach allows us to distinguish codes more effectively and avoid the issue described above.
The second part of the paper is devoted to computing the nuclear parameters of the known families of MSRD codes.
To this end, we introduce the framework of skew polynomials (see \cite{neri2021twisted}), which is isometric to the classical setting. However, this approach provides a novel and powerful method that allows the explicit computation of nuclear parameters of known MSRD code families.
This perspective is new and opens up a completely different approach to proving the inequivalence of certain sum-rank metric codes — based not on traditional invariants, but on the structure and sizes of their nuclear parameters.
Indeed, thanks to the nuclear parameters, we are able to solve the equivalence issue among the largest three families of MSRD codes. We prove that an LRS code cannot be equivalent to an ATLRS code nor to a TLRS code of TZ-type. Moreover, TLRS codes of TZ-type are not equivalent to any TLRS code. Also, within the family of ATLRS codes, there are many inequivalent MSRD codes.

\section{Sum-rank metric Codes}

In this section, we will quickly recall definitions and properties of sum-rank metric codes in the two frameworks of matrices and skew group algebras.

\subsection{Matrix framework} The classical framework for sum-rank metric codes is based on the (external) direct sum of matrix spaces. So, each codeword is represented as a tuple of matrices over a finite field. The sum-rank weight of a codeword is defined as the sum of the ranks of its matrix components. We define it formally as follows.

Let $t$ be a positive integer. Let $m_1,\ldots,m_t,n_1,\ldots,n_t$ be positive integers.
We consider the product of $t$ matrix spaces 
$$\Mat(\textbf{m},\bfn,\F_q)= \bigoplus_{i=1}^t \F_q^{m_i \times n_i},$$
with $\mathbf{m}=(m_1,\ldots,m_t)$ and $\mathbf{n}=(n_1,\ldots,n_t)$. 
The \textbf{sum-rank distance} is the function 
\[
\dsrk : \Mat(\textbf{m},\bfn,\F_q) \times \Mat(\textbf{m},\bfn,\F_q) \longrightarrow \mathbb{N}=\{0,1,2,\ldots\}  
\]
defined by
\[
\dsrk(X,Y) = \sum_{i=1}^t \rk(X_i - Y_i),
\]
where $X=(X_1, \ldots, X_t)$, $Y=(Y_1,\ldots , Y_t)$, with $X_i,Y_i \in \F_q^{m_i \times n_i}$. We define the \textbf{sum-rank weight} of an element $X=(X_1,\ldots,X_t) \in \Mat(\textbf{m},\bfn,\F_q)$ as 
$$\w_{\srk}(X):=\sum_{i=1}^t \rk(X_i).$$
Clearly, $\dsrk(X,Y)= \w_{\srk}(X-Y)$, for every $X,Y \in \Mat(\textbf{m},\bfn,\F_q)$. \\
An \textbf{(additive) sum-rank metric code} $\mathcal{C}$ is an $\F_p$-linear subspace of $\Mat(\textbf{m},\bfn,\F_q)$ endowed with the sum-rank distance. We always suppose $\lvert \mathcal{C} \rvert \geq 2$.
The \textbf{minimum sum-rank distance} of a sum-rank metric code $\mathcal{C}$ is defined as usual via $$\dsrk(\mathcal{C})=\min\{\w_{\srk}(X): X \in \mathcal{C}, X \neq \mathbf{0}\}.$$ 


Clearly, when $t = 1$, the definition of sum-rank metric codes reduces to rank-metric codes, and when $m_i = n_i = 1$ for all $i$, it reduces to block codes of length $t$ equipped with the Hamming metric.

As for the classical Hamming metric codes, the parameters of a sum-rank metric code are related via a Singleton-like bound.

\begin{theorem} [see \textnormal{\cite[Theorem 3.2]{byrne2021fundamental}}] \label{th:SingletonboundmatrixRav}
    Let $\mathcal{C}$ be a sum-rank metric code in $\Mat(\textbf{m},\bfn,\F_q)$ with $m_i \leq n_i$, $m_1\geq \cdots \geq m_t$ and $n_1 \geq \cdots \geq n_t$ and $\dsrk(\mathcal{C})=d$.
    Let $M=m_1+\ldots+m_t$. 
    Let $j$ and $\delta$ be the unique integers satisfying $d-1=\sum_{i=1}^{j-1}m_i+\delta$ and $0 \leq \delta \leq m_j-1$. Then \[\lvert \mathcal{C} \rvert \leq q^{\sum_{i=j}^tm_in_i-n_j \delta}.\]
    If $n_1=\ldots=n_t=n$, then $|\C|\leq q^{n(M-d+1)}$.
\end{theorem}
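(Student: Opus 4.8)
The plan is to adapt the classical puncturing proof of the Singleton bound, the new feature being that the $d-1$ ``coordinate deletions'' one is allowed must be distributed among the $t$ blocks as cheaply as possible. First note that $j$ and $\delta$ are well defined with $j\le t$: every nonzero $X\in\mathcal{C}$ has $\w_{\srk}(X)=\sum_{i=1}^t\rk(X_i)\le\sum_{i=1}^t\min(m_i,n_i)=\sum_{i=1}^t m_i=M$ (using $m_i\le n_i$), so $d\le M$ and hence $d-1\le M-1$ does admit a decomposition $d-1=\sum_{i=1}^{j-1}m_i+\delta$ with $0\le\delta\le m_j-1$ and $j\le t$. Now I would introduce the $\F_q$-linear projection
\[
\pi\colon \Mat(\textbf{m},\bfn,\F_q)\longrightarrow \F_q^{(m_j-\delta)\times n_j}\oplus\bigoplus_{i=j+1}^t\F_q^{m_i\times n_i}
\]
which discards the blocks $X_1,\dots,X_{j-1}$ entirely, discards $\delta$ of the rows of $X_j$ (say the first $\delta$), and keeps $X_{j+1},\dots,X_t$ unchanged. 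The codomain is an $\F_q$-space of dimension $(m_j-\delta)n_j+\sum_{i=j+1}^t m_in_i=\sum_{i=j}^t m_in_i-n_j\delta$, so it has $q^{\sum_{i=j}^t m_in_i-n_j\delta}$ elements, and it therefore suffices to prove that $\pi$ is injective on $\mathcal{C}$.

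For injectivity I would show $\ker\pi\cap\mathcal{C}=\{\mathbf 0\}$. Suppose $X\in\mathcal{C}$ and $\pi(X)=\mathbf 0$. Then $X_i=0$ for $i=j+1,\dots,t$; the block $X_j$ has $m_j-\delta$ of its rows equal to zero, hence at most $\delta$ nonzero rows, so $\rk(X_j)\le\delta$; and for $i=1,\dots,j-1$ one only has the trivial estimate $\rk(X_i)\le\min(m_i,n_i)=m_i$. Adding up,
\[
\w_{\srk}(X)=\sum_{i=1}^t\rk(X_i)\le\sum_{i=1}^{j-1}m_i+\delta=d-1<d=\dsrk(\mathcal{C}),
\]
and since $\dsrk(\mathcal{C})$ is the minimum sum-rank weight of a nonzero codeword this forces $X=\mathbf 0$. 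Consequently $|\mathcal{C}|=|\pi(\mathcal{C})|\le q^{\sum_{i=j}^t m_in_i-n_j\delta}$, which is the claimed bound. When $n_1=\cdots=n_t=n$ it reduces, by factoring out $n$ and using $\sum_{i=1}^{j-1}m_i+\delta=d-1$, to $q^{n(\sum_{i=j}^t m_i-\delta)}=q^{n(M-(d-1))}=q^{n(M-d+1)}$.

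The argument is short, and the one point requiring care is the \emph{optimality} of the chosen puncturing pattern, which is exactly where the hypotheses $m_i\le n_i$, $m_1\ge\cdots\ge m_t$ and $n_1\ge\cdots\ge n_t$ are used: among all ways of deleting rows from the blocks with a total row-budget of $d-1$ such that the surviving entries still detect every codeword (i.e.\ so that the kernel argument above still goes through), deleting all $\sum_{i=1}^{j-1}m_i$ rows of the first $j-1$ blocks together with $\delta$ rows of block $j$ removes the most $\F_q$-entries, since those blocks carry the largest values of $m_in_i$. For $t=1$ this recovers the rank-metric Singleton bound, and for $m_i=n_i=1$ the Hamming Singleton bound, by the very same ``projection is injective'' reasoning.
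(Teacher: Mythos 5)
The paper imports this theorem from \cite[Theorem 3.2]{byrne2021fundamental} and gives no proof of its own, so there is nothing internal to compare against; your puncturing/projection argument is correct (the well-definedness of $j$ and $\delta$, the dimension count of the codomain, and the kernel computation $\w_{\srk}(X)\le\sum_{i=1}^{j-1}m_i+\delta=d-1$ all check out, with additivity of $\mathcal{C}$ justifying the passage from $\ker\pi\cap\mathcal{C}=\{\mathbf 0\}$ to injectivity) and is essentially the standard proof of this bound. Two small remarks: the monotonicity hypotheses $m_1\ge\cdots\ge m_t$, $n_1\ge\cdots\ge n_t$ are not actually needed for your injectivity argument to go through — they only guarantee that the stated bound is the sharpest obtainable from this family of projections — and in your closing optimality heuristic the quantity governing how many entries a deleted row removes from block $i$ is $n_i$, not $m_in_i$, so the greedy pattern is justified by $n_1\ge\cdots\ge n_t$ alone.
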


The sum-rank metric codes whose parameters satisfy the equality in the above bound are called \textbf{Maximum Sum-Rank Distance} (or shortly \textbf{MSRD}) codes.

For sum-rank metric codes we can extend the notion of adjoint of rank-metric codes, with the adjoint operation acting blockwise.
Let $v=(v_1,\ldots,v_t)\in \Z_2^t$. The $v$-\textbf{adjoint} of a code $\C \in \Mat(\bfm,\bfn,\F_q)$ is the code
\[
\C^v=\{(X_1^{v_1},\ldots,X_t^{v_t}) \colon (X_1,\ldots,X_t) \in \C\},
\]
where 
\[
X_i^{v_i}:=\begin{cases}
    X_i & \mbox{ if }v_i=0, \\
    X_i^{\top} & \mbox{ if }v_i =1.
\end{cases}
\]
When $v=(1,\ldots,1)$, the code $\C^{\top}:=\C^v \subseteq \Mat(\bfn,\bfm,\F_q)$ is simply called the \textbf{adjoint code of }$\C$. Precisely, for any element $X=(X_1,\ldots,X_t) \in \Mat(\bfm,\F_q)$, we define
\[
X^{\top}:=(X_1^{1},\ldots,X_t^{1})=(X_1^{\top},\ldots,X_t^{\top}),
\]
so that $\C^{\top}=\{X^{\top} \colon X \in \C\}$.

Consider the nondegenerate bilinear form on $\Mat(\bfm,\bfn,\F_q)$ defined as 
\begin{equation} \label{eq:bilinearmatrix}
\sum_{i=1}^t\mathrm{Tr}_{q/p}(\mathrm{Tr}(X_iY_i^{\top})),
\end{equation}
for every $(X_1,\ldots,X_t),(Y_1,\ldots,Y_t) \in \Mat(\bfm,\bfn,\F_q)$. The \textbf{dual} $\C^{\perp}$ of a sum-rank metric code $\C$ in $\Mat(\bfm,\bfn,\F_q)$ is defined as 
\[
\C^{\perp}=\{ (Y_1,\ldots,Y_t) \in \Mat(\bfm,\bfn,\F_q) \colon \sum_{i=1}^t\mathrm{Tr}_{q/p}(\mathrm{Tr}(X_iY_i^{\top}))=0, \mbox{ for every }(X_1,\ldots,X_t) \in \C \}.
\]

A comprehensive study of the properties of codes in this framework can be found in \cite{byrne2021fundamental,byrne2022anticodes,camps2022optimal,martinez2022codes}.

\subsection{Skew group algebra framework} Let $\theta$ be a generator of $\Gal(\fqm/\fq)$. We denote by $\F_{q^m}[\theta]$ the skew group algebra
\[
\F_{q^m}[\theta]=\left\{\sum_{i=0}^{m-1}a_i\theta^i\colon a_i \in \F_{q^m}\right\}.
\]
The elements of $\F_{q^m}[\theta]$ can be also considered as polynomials in the indeterminate $\theta$ with coefficients in $\F_{q^m}$ and are also called \textbf{$\theta$-polynomials} (or \textbf{$\theta$-linearized polynomials}). 

In recent decades, the skew group algebra framework $\mathbb{F}_{q^m}[\theta]$ has proven to be fundamental in the study of rank-metric codes. This is due to the following reason.

It is well-known that the $\fq$-algebra $\F_{q^m}[\theta]$ is isomorphic to the $\F_q$-algebra $\End_{\F_q}(\F_{q^m})$ of the $\F_q$-endomorphisms of $\F_{q^m}$, see e.g. \cite[Theorem 1.3]{chase1969galois}. More precisely, for any element $f = \sum_{i=0}^{m-1}f_i\theta^i \in  \F_{q^m}[\theta]$, one can consider the map
\[
\begin{tabular}{l c c c }
$\phi_f:$ & $\F_{q^m}$ & $\longrightarrow$ & $\F_{q^m}$ \\
& $\beta$ & $\longmapsto$ & $\sum\limits_{i=0}^{m-1}f_i\theta^i(\beta) $.
\end{tabular}
\]
Then the map $\phi:f \mapsto \phi_f$ is an $\F_q$-algebra isomorphism between $\F_{q^m}[\theta]$ and $\End_{\F_q}(\F_{q^m})$. Therefore, we can identify $f$ with the map $\phi_f$, and so we will refer to $\rk(f)$ and $\ker(f)$ to indicate the rank over $\F_q$ of $\phi_f$ and its kernel, respectively. As an immediate consequence, we get that via $\phi$, the rank-metric space $(\F_q^{m \times m},\rk)$ is $\F_q$-linear isometric to the space $(\F_{q^m}[\theta],\rk)$. 

Therefore, rank-metric codes, originally introduced as sets of $n \times m$ matrices over a finite field $\mathbb{F}_q$, can particularly in the square case $n = m$, be studied as subspaces of linearized polynomials within a skew group algebra. So, the algebra $\mathbb{F}_{q^m}[\theta]$ provides an alternative and powerful framework for analyzing rank-metric codes. For a detailed exposition of this perspective, we refer the reader to \cite{sheekeysurvey, sheekey2016new}.

There is a natural generalization of this framework to the setting of sum-rank metric codes, cf. \cite[Section 2.5]{neri2021twisted}. Indeed, consider $t$ copies of the skew group algebra $\F_{q^m}[\theta]$. We define the \textbf{sum-rank weight} on $(\F_{q^m}[\theta])^t$ as 
\[
\w_{\srk}((f_1,\ldots,f_t)):=\sum_{i=1}^t \rk (f_i).
\]
This induces a metric on $(\F_{q^m}[\theta])^{t}$ defined as 
\begin{equation} \label{eq:sumrankpoly}
\dsrk((f_1,\ldots,f_t),(g_1,\ldots,g_t))=\w_{\srk}((f_1,\ldots,f_t)-(g_1,\ldots,g_t)).
\end{equation}

Since the space $(\F_q^{m \times m},\rk)$ is $\F_q$-linear isometric to the space $(\F_{q^m}[\theta],\rk)$ described above, we obtain that $((\F_{q^m}[\theta])^t,\dsrk)$ is isometric to the space $(\Mat(\bfm,\bfm,\F_q),\dsrk)$, with $\bfm=(m,\ldots,m)$, via the map
\begin{equation} \label{eq:ismotrymatrpoly}
    \begin{array}{l c c c }
\overline{\phi}:& (\F_{q^m}[\theta])^t & \longrightarrow & \Mat(\bfm,\bfm,\F_q) \\
& (f_1,\ldots,f_t) & \longmapsto & (\phi_{f_1},\ldots,\phi_{f_t}).
\end{array}
\end{equation}

Therefore, in the following, we will always consider the two frameworks, $(\mathbb{F}_{q^m}[\theta])^t$ and $\text{Mat}(\bfm,\bfm,\mathbb{F}_q)$, as equivalent. Thus, we will describe properties, examples, and constructions in one setting, but they can naturally be readapted to the other using the isomorphism $\overline{\phi}$.

\section{Additive isometries and nondegenerate sum-rank metric codes}

As traditionally done, we aim to identify the codes that can be transformed into one another via an isometry of the ambient space. To this end, we will first describe the additive isometry of $\text{Mat}(\mathbf{m},\mathbf{n},\F_q)$ and derive two definitions of equivalence among sum-rank metric codes. We will then introduce the concept of nondegeneracy for codes, which, in this context, means that the code cannot be isometrically embedded in a shorter code. Finally, we will establish some sufficient conditions for a code to be nondegenerate.

For our purposes, from now on, we will assume that for any $i$ 
\[\max\{m_i,n_i\}>1,\]
in order to avoid some criticisms in the characterization of isometries that we will see.

\subsection{Additive isometries}

The notion of code equivalence in the sum-rank metric was introduced in \cite[Theorem 2]{martinezpenas2021hamming}. Notably, the MacWilliams Extension Theorem does not hold even in the case $t = 1$ see \cite[Example 2.9]{barra2015macwilliams} and also \cite{gorla2024macwilliams}. As a result, equivalence of sum-rank metric codes is defined in terms of isometries of the entire ambient space.

In \cite{camps2022optimal}, the classification of $\F_q$-linear isometries on the space $\left( \Mat(\bfn,\bfm,\F_q),\dsrk \right)$ is provided. In the following, we use a more general notion of equivalence. In fact, we consider \emph{additive} isometries.

\begin{definition}
An \textbf{(additive) isometry} of the metric space $\left(\Mat(\bfm,\bfn,\F_q),\dsrk\right)$ is an additive bijective map $\varphi:\Mat(\bfm,\bfn,\F_q) \rightarrow \Mat(\bfm,\bfn,\F_q)$ that preserves the sum-rank metric distance, i.e.
\[
\dsrk(X,Y)=\dsrk(\varphi(X),\varphi(Y)),
\]
for each $X=(X_1,\ldots,X_t),Y=(Y_1,\ldots,Y_t) \in \Mat(\bfm,\bfn,\F_q)$. 
\end{definition}

Following the arguments of \cite[Theorem V.2]{camps2022optimal}, it is easy to prove the following classification for additive isometries of the space $(\Mat(\bfm,\bfn,\F_q),\dsrk)$, which relies on the classification of additive isometries of the rank-metric space $(\F_q^{m\times n},\rk)$. We recall that if $\psi:\F_q^{m\times n} \rightarrow \F_q^{m\times n}$ is an additive isometry, then there exist $A \in \GL(m,q), B \in \GL(n,q)$ such that 
$\phi(X)=AX^{\rho}B$, or $
\phi(X)=A(X^{\rho})^{\top}B$ if $m=n$,
for all $X \in \F_q^{m\times n}$, for some field automorphism $\rho$ of $\F_q$ acting entry-wise on $X$, see \cite{wan1996geometry}. 

\begin{theorem} \label{th:classificationadditiveisometries}
Let $\varphi:\Mat(\bfm,\bfn,\F_q) \rightarrow \Mat(\bfm,\bfn,\F_q)$ be an isometry of the metric space \\ $(\Mat(\bfm,\bfn,\F_q),\dsrk)$. Then there exists a permutation $\pi \in \mathcal{S}_t$, and there are rank-metric isometries $\psi_i:\F_q^{m_i\times n_i} \rightarrow \F_q^{m_i\times n_i}$, for every $i \in \{1,\ldots,t\}$, such that
\[
\varphi((X_1,\ldots,X_t))=(\psi_1(X_{\pi(1)}),\ldots,\psi_t(X_{\pi(t)}))
\]
for all $(X_1,\ldots,X_t) \in \Mat(\bfm,\mathbf{n},\F_q)$.
\end{theorem}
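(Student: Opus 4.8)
The plan is to mimic the structure of the argument in \cite[Theorem V.2]{camps2022optimal}, isolating first the combinatorial (block-permutation) part and then the blockwise part. The key tool will be to track how $\varphi$ acts on the ``weight-one'' elements and, more importantly, on the natural block subspaces of $\Mat(\bfm,\bfn,\F_q)$. For $i \in \{1,\dots,t\}$ write $V_i := \{(X_1,\dots,X_t) : X_j = 0 \text{ for } j \neq i\} \cong \F_q^{m_i \times n_i}$, so that $\Mat(\bfm,\bfn,\F_q) = \bigoplus_{i=1}^t V_i$ as an internal direct sum. First I would introduce, for each codeword $X$, the notion of its \emph{block support} $\supp(X) = \{ i : X_i \neq 0\}$, and observe that the sum-rank weight satisfies $\w_{\srk}(X) \geq |\supp(X)|$ with a (partial) additivity structure: the sum-rank weight is the $\ell^1$-sum of the block ranks. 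The crucial observation is that the elements of sum-rank weight $1$ are precisely the nonzero elements lying in some $V_i$ whose single nonzero block has rank $1$; under our standing assumption $\max\{m_i,n_i\} > 1$, each $V_i$ contains such elements, and two weight-one elements lie at sum-rank distance $1$ iff they are in the same block $V_i$ and differ by a rank-one (or zero) matrix there --- while weight-one elements in different blocks are always at distance $2$. This ``distance-$1$ graph on weight-one elements'' has connected components indexed by the blocks.

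Next I would argue that $\varphi$, being an additive isometry, preserves sum-rank weight (since $\w_{\srk}(X) = \dsrk(X,\mathbf 0)$ and $\varphi(\mathbf 0) = \mathbf 0$ by additivity), hence permutes the set of weight-one elements and preserves distances among them; therefore $\varphi$ permutes the connected components of the graph above, which gives a permutation $\pi \in \mathcal S_t$ with $\varphi(V_{\pi(i)} \setminus \{\mathbf 0\})$ containing the weight-one elements of $V_i$. The subtle point --- and the step I expect to be the main obstacle --- is upgrading this set-theoretic statement about weight-one elements to the linear statement $\varphi(V_{\pi(i)}) = V_i$. For this I would use additivity: since the rank-one matrices span $\F_q^{m_i \times n_i}$, the weight-one elements of $V_i$ $\F_q$-span $V_i$ (in fact they additively generate it, which is what we need since $\varphi$ is only $\F_p$-additive), so $\varphi(V_{\pi(i)}) \supseteq V_i$; counting dimensions (or using that $\varphi$ is a bijection and the $V_i$'s have the matching cardinalities block-by-block via $\pi$) forces equality. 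Here one must be careful that the indices match: $|V_{\pi(i)}| = q^{m_{\pi(i)} n_{\pi(i)}}$ must equal $|V_i| = q^{m_i n_i}$; this follows because $\varphi$ restricted to $V_{\pi(i)}$ is an injection into $V_i$ and, running the same argument for $\varphi^{-1}$, also a surjection.

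Finally, once $\varphi(V_{\pi(i)}) = V_i$ for all $i$, the restriction $\psi_i := \varphi|_{V_{\pi(i)}} : V_{\pi(i)} \to V_i$ is an additive bijection, and it is a rank isometry: for $X, Y \in V_{\pi(i)}$, viewing them as single matrices, $\rk(\psi_i(X) - \psi_i(Y)) = \w_{\srk}(\varphi(X-Y)) = \w_{\srk}(X-Y) = \rk(X-Y)$. Identifying $V_{\pi(i)}$ with $\F_q^{m_{\pi(i)} \times n_{\pi(i)}} = \F_q^{m_i \times n_i}$ (the dimensions agree by the cardinality computation above, and one checks $m_{\pi(i)} = m_i$, $n_{\pi(i)} = n_i$ from the fact that a rank isometry between $\F_q^{a\times b}$ and $\F_q^{c\times d}$ forces $\{a,b\} = \{c,d\}$; the transpose ambiguity is absorbed into the rank-metric isometry $\psi_i$ via Wan's classification recalled before the statement), we obtain the desired $\psi_i : \F_q^{m_i \times n_i} \to \F_q^{m_i \times n_i}$. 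Since $\varphi$ respects the direct sum decomposition and acts as $\psi_i$ on the $\pi(i)$-th summand sending it to the $i$-th, we get $\varphi((X_1,\dots,X_t)) = (\psi_1(X_{\pi(1)}), \dots, \psi_t(X_{\pi(t)}))$, as claimed. The only real work beyond bookkeeping is the first paragraph's graph-theoretic description of weight-one elements and the passage from set-wise to linear invariance of the blocks; everything else is a routine transfer of the rank-metric case.
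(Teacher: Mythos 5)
Your argument is correct and is essentially the proof the paper intends: the paper gives no details itself, deferring to the arguments of \cite[Theorem V.2]{camps2022optimal}, and your route---identifying the blocks as the connected components of the distance-one graph on weight-one elements, upgrading from the set-theoretic statement to $\varphi(V_{\pi(i)})=V_i$ via additive generation by rank-one matrices together with bijectivity, and then invoking Wan's classification blockwise---is exactly that standard reduction. The only cosmetic slip is the direction of the first containment (mapping weight-one elements of $V_{\pi(i)}$ into $V_i$ most directly yields $\varphi(V_{\pi(i)})\subseteq V_i$ rather than $\supseteq$), but the cardinality/inverse-isometry argument you supply closes it to equality either way.
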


We say that two sum-rank metric codes $\C$ and $\C'$ in $\Mat(\bfm,\bfn,\F_q)$ are \textbf{(weakly) equivalent} if there exists an isometry $\varphi:\Mat(\bfm,\bfn,\F_q) \rightarrow \Mat(\bfm,\bfn,\F_q)$ such that $\varphi(\C)=\C'$.
In the next, we will consider only isometries $\varphi:\Mat(\bfm,\bfn,\F_q) \rightarrow \Mat(\bfm,\bfn,\F_q)$, for which $\psi_i:\Mat(m,n,\F_q) \rightarrow \Mat(m,n,\F_q)$ are of the form
$\phi(X)=AX^{\rho}B$ for some $A \in \GL(m,q), B \in \GL(n,q)$ and a field automorphism $\rho$ of $\F_q$ acting entry-wise on $X$. In other words, we will not take into account the transposition of the matrix in some blocks. Therefore, the notion of equivalence we will consider in the next is the following. We say that two sum-rank metric codes $\C$ and $\C'$ in $\Mat(\bfm,\bfn,\F_q)$ are \textbf{(strongly) equivalent} if there exists an isometry $\varphi:\Mat(\bfm,\bfn,\F_q) \rightarrow \Mat(\bfm,\bfn,\F_q)$, without any transposition, such that $\varphi(\C)=\C'$.

\subsection{Nondegenerate codes}\label{sec:nondeg}

The notion of nondegenerate sum-rank metric codes has been studied only in the case of $\mathbb{F}_{q^m}$-linear, even in the rank-metric case, cf. \cite{alfarano2022linear,neri2021geometry}. What we will say in the next still holds in the rank metric, and we will study the degeneracy condition in the more general case of additive codes.

\begin{definition}
A sum-rank metric code $\C \subseteq \mathrm{Mat}(\bfm,\bfn,\fq)$ is \textbf{degenerate} if there exists a sum-rank metric code $\C'$ in $\mathrm{Mat}(\bfm,\bfn,\fq)$ equivalent to $\C$ and an index $i \in \{1,\ldots,t\}$ for which the matrices of the $i$-th entry of all codewords of $\C'$ present (at least) one zero row or column in the same position.
\end{definition}

So, a degenerate sum-rank metric code can be isometrically embedded in a smaller space. Indeed, if a sum-rank metric code $\C$ is degenerate, then there exists an injective $\F_{q}$-linear map $\varphi: \C \rightarrow \mathrm{Mat}(\bfm',\bfn',\fq)$ with either $\bfm' < \bfm$ or $\bfn'< \bfn$ (where $<$ is the lexicographical order) such that $\w_{\srk}(X)=\w_{\srk}(\varphi(X))$, for each $X \in \C$.


\begin{example}
    Let 
    \[ \C=\left\langle\left( \begin{pmatrix} 
    1 &0 &0\\ 0 &1&0\\ 0 & 0 &0 
    \end{pmatrix} , (1)\right),  \left( \begin{pmatrix} 
    0 &0 &0\\ 0 &1&0\\ 0 & 0 &1 
    \end{pmatrix} , (1)\right)\right\rangle_{\F_2}\subseteq \mathrm{Mat}((3,1),(3,1),\F_2). \]
    It is easy to see that it is not possible to find $(A,a),(B,b) \in \mathrm{GL}(3,2)\times\F_2^*$ and a permutation $\pi \in S_2$ such that the induced isometry $\varphi$ sends $\C$ into a sum-rank metric code of $\mathrm{Mat}((3,1),(3,1),\F_2)$ for which either the matrices in the first entry of the codewords of $\varphi(\C)$ present a zero row/column in the same position or the second entry of the codeowords of  $\varphi(\C)$ is zero. Therefore, $\C$ is nondegenerate. Consider the code 
    \[ \C'=\left\langle\left( \begin{pmatrix} 
    1 &0 &1\\ 0 &1&1\\ 0 & 0 &0 
    \end{pmatrix} , (1)\right),  \left( \begin{pmatrix} 
    0 &0 &0\\ 0 &1&1\\ 1 & 0 &1 
    \end{pmatrix} , (1)\right)\right\rangle_{\F_2}\subseteq \mathrm{Mat}((3,1),(3,1),\F_2). \]
    As you can see, the third column of the matrices in the first block can always be obtained as the sum of the first two, therefore $\C'$ is equivalent to
    \[ \C''=\left\langle\left( \begin{pmatrix} 
    1 &0 &0\\ 0 &1&0\\ 0 & 0 &0 
    \end{pmatrix} , (1)\right),  \left( \begin{pmatrix} 
    0 &0 &0\\ 0 &1&0\\ 1 & 0 &0
    \end{pmatrix} , (1)\right)\right\rangle_{\F_2}\subseteq \mathrm{Mat}((3,1),(3,1),\F_2), \]
    and so it is degenerate. Note that $\C$ can be isometrically embedded in $\mathrm{Mat}((3,1),(2,1),\F_2)$ obtaining the following code
    \[ \overline{\C}=\left\langle\left( \begin{pmatrix} 
    1 &0\\ 0 &1\\ 0 & 0 
    \end{pmatrix} , (1)\right),  \left( \begin{pmatrix} 
    0 &0 \\ 0 &1\\ 1 & 0 
    \end{pmatrix} , (1)\right)\right\rangle_{\F_2}\subseteq \mathrm{Mat}((3,1),(2,1),\F_2). \]
\end{example}

We will start by characterizing the nondegeneracy condition for rank-metric codes and then we will describe it in the more general context of the sum-rank metric.

\begin{proposition}\label{prop:nondegrankmetric}
    Let $\C$ be a rank-metric code in $\F_q^{m\times n}$. Then $\C$ is nondegenerate if and only if 
    \begin{equation}\label{eq:nondegcond} 
    \bigcap_{C \in \C} \mathrm{leftker}(C)=\{\mathbf{0}\}\ \ \ \mbox{ and } \ \ \ \bigcap_{C \in \C} \mathrm{rightker}(C)=\{\mathbf{0}\}. 
    \end{equation}
\end{proposition}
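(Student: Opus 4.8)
The plan is to prove both directions by contrapositive, exploiting the characterization of additive isometries of $\F_q^{m\times n}$ (the $t=1$ case of Theorem~\ref{th:classificationadditiveisometries}), namely that every isometry has the form $X\mapsto AX^\rho B$ or $X\mapsto A(X^\rho)^\top B$. Observe first that a code $\C\subseteq\F_q^{m\times n}$ has all codewords with a common zero column (in a fixed position) precisely when $\bigcap_{C\in\C}\mathrm{rightker}(C)\neq\{\mathbf 0\}$ contains one of the standard basis vectors; and, more usefully, that $\bigcap_{C\in\C}\mathrm{rightker}(C)$ is a subspace $W\subseteq\F_q^n$ whose dimension is invariant under right multiplication by $B\in\GL(n,q)$, under left multiplication by $A\in\GL(m,q)$ (which does not change right kernels at all), and under the entrywise automorphism $\rho$. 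Likewise $\bigcap_{C\in\C}\mathrm{leftker}(C)$ is invariant in dimension under all these operations, and transposition simply swaps the two intersections (and swaps the roles of $m$ and $n$). Thus the pair of dimensions $\bigl(\dim\bigcap\mathrm{leftker}(C),\ \dim\bigcap\mathrm{rightker}(C)\bigr)$ — as an unordered pair, if one allows transposition — is an equivalence invariant.

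For the ``only if'' direction: suppose \eqref{eq:nondegcond} fails, say $\bigcap_{C\in\C}\mathrm{rightker}(C)=W$ with $\dim W=r\geq 1$. Pick $B\in\GL(n,q)$ whose last $r$ columns span $W$; then for every $C\in\C$ the matrix $CB$ has its last $r$ columns zero, in particular at least one zero column in a fixed position. Hence the isometry $X\mapsto XB$ carries $\C$ to an equivalent code exhibiting a common zero column, so $\C$ is degenerate. The case $\bigcap\mathrm{leftker}(C)\neq\{\mathbf 0\}$ is symmetric, using $A\in\GL(m,q)$ acting on the left (or transposing).

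For the ``if'' direction: suppose $\C$ is degenerate, so some equivalent code $\C'=\psi(\C)$ has all $i$-th (here $t=1$) matrices sharing a zero row or column in a fixed position. If it is a common zero column, then $\bigcap_{C'\in\C'}\mathrm{rightker}(C')$ contains that standard basis vector, hence is nonzero; by the invariance noted above, $\bigcap_{C\in\C}\mathrm{rightker}(C)$ is nonzero as well (tracking that a transposition in $\psi$ would instead make $\bigcap\mathrm{leftker}(C)$ nonzero). Either way \eqref{eq:nondegcond} fails. I expect the main technical point — though not a deep one — to be bookkeeping the effect of the possible transposition and of the entrywise field automorphism $\rho$ on the two kernel intersections, and stating cleanly why right multiplication by an invertible $B$ scales the common right kernel by $B^{-1}$ while leaving its dimension, and the common left kernel, fixed; once that dictionary is in place the two implications are immediate.
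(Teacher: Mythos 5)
Your proposal is correct and follows essentially the same route as the paper's proof: in both, a nonzero vector in the common left (resp.\ right) kernel is transported by an invertible matrix to a coordinate position to produce a common zero row (resp.\ column), and conversely a common zero row or column of an equivalent code is pulled back through the isometry to a nonzero common kernel vector of $\C$. Your explicit bookkeeping of how the transposition and the entrywise automorphism $\rho$ act on the two kernel intersections is a slightly more careful packaging of what the paper treats implicitly, but it is not a different argument.
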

\begin{proof}
    By contradiction, assume that $\C$ is nondegenerate and there exists a nonzero vector $\mathbf{v} \in\bigcap_{A \in \C} \mathrm{leftker}(A)$, that is a vector in $\mathbb{F}_q^m$ such that $\mathbf{v}C=\mathbf{0}$ for any $C \in \C$. Let $A$ be any matrix in $\mathrm{GL}(m,q)$ having as first row the vector $\mathbf{v}$, then the code $A\C$ is equivalent to $\C$ and all the matrices in $A\C$ have the zero row as first row, therefore $\C$ is degenerate.
    Similarly, one can argue in the case in which $\mathbf{v} \in\bigcap_{A \in \C} \mathrm{rightker}(A)$. Conversely, assume that \eqref{eq:nondegcond} holds and $\C$ is degenerate. Therefore, the exist $A \in \mathrm{GL}(m,q)$, $B \in \mathrm{GL}(n,q)$ and $\sigma \in \mathrm{Aut}(\F_q)$ such that the code $A\C^{\rho} B$ has all codewords with a fixed row or column equal to the zero vector. Without loss of generality, let us assume that all the codewords of $A\C^{\rho} B$ have the zero vector in the first row. Then it turns out that $A\C$ also has the same property. Denote by $\mathbf{v}$ the first row of $A$, then it turns out that $\mathbf{v}\in \mathrm{leftker}(C)$ for any $C \in \C$ and clearly $\mathbf{v}\ne\mathbf{0}$, a contradiction.
    If we assume that all codewords of $A\C^{\rho} B$ have the zero vector as the first column, we get the contradiction from the existence of a nonzero vector $\mathbf{v}\in \mathrm{rightker}(C)$.
\end{proof}

A sufficient condition that guarantees the nondegeneracy of a code in $\F_q^{m\times m}$ is to contain a matrix of full rank, although it is not an equivalent condition.

\begin{corollary}\label{cor:nondegrankmetric}
    Let $\C$ be a rank-metric code in $\F_q^{m\times m}$. 
    If $\C$ has a codeword of weight $m$ then $\C$ is nondegenerate.
\end{corollary}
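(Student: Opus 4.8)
The plan is to deduce this immediately from Proposition~\ref{prop:nondegrankmetric}. Suppose $\C \subseteq \F_q^{m\times m}$ contains a codeword $C_0$ with $\rk(C_0)=m$, i.e.\ $C_0 \in \GL(m,q)$. Then $\mathrm{leftker}(C_0)=\{\mathbf{0}\}$ and $\mathrm{rightker}(C_0)=\{\mathbf{0}\}$, since an invertible matrix has trivial left and right kernel. Consequently
\[
\bigcap_{C \in \C}\mathrm{leftker}(C) \subseteq \mathrm{leftker}(C_0)=\{\mathbf{0}\},
\qquad
\bigcap_{C \in \C}\mathrm{rightker}(C) \subseteq \mathrm{rightker}(C_0)=\{\mathbf{0}\},
\]
so both intersections in \eqref{eq:nondegcond} are trivial, and Proposition~\ref{prop:nondegrankmetric} gives that $\C$ is nondegenerate.

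There is essentially no obstacle here: the corollary is a one-line specialization of the preceding proposition, and the only thing to check is the elementary linear-algebra fact that a full-rank square matrix over a field has trivial kernels on both sides, which follows from invertibility. I would state the argument in two or three sentences exactly as above, perhaps also remarking (to justify the ``not an equivalent condition'' comment in the text) that the converse fails: a code may be nondegenerate without containing any full-rank matrix, e.g.\ a code all of whose nonzero elements have rank strictly less than $m$ but whose left and right kernels still intersect trivially. If desired one can exhibit such a $\C$ in $\F_2^{2\times 2}$ spanned by two rank-$1$ matrices whose left kernels differ and whose right kernels differ, but this is not needed for the proof of the corollary itself.
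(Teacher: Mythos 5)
Your argument is correct and is exactly the intended one: the paper states this corollary without proof precisely because it follows immediately from Proposition~\ref{prop:nondegrankmetric} via the observation that an invertible matrix has trivial left and right kernels, which forces both intersections in \eqref{eq:nondegcond} to be trivial. Your additional remark on the failure of the converse matches the paper's subsequent discussion (which exhibits a $5\times 5$ example over $\F_3$ from \cite{dumas2011rank}), so nothing is missing.
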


\begin{remark}
The property of being nondegenerate for rank-metric codes has been characterized in the case where the code in $\mathbb{F}_q^{m \times m}$ is $\mathbb{F}_{q^m}$-linear (see Section~\ref{sec:linearity}), showing that this is equivalent to the code containing a codeword of maximum weight $m$ (see \cite[Proposition 3.11]{alfarano2022linear}). As observed in the above corollary, the existence of a codeword of full rank is a sufficient condition for nondegeneracy, but it is not a necessary one. Indeed, we can consider the rank-metric code constructed in \cite[Example 1]{dumas2011rank}: let $\C$ be the $\F_3$-linear rank-metric code in $\Mat(5,5,\F_3)$ spanned over $\F_3$ by
    \[\begin{pmatrix}
        2 & 2 & 2 & 1 & 2\\
        2 & 2 & 1 & 2 & 2\\
        2 & 1 & 0 & 1 & 1\\
        1 & 2 & 1 & 2 & 0\\
        2 & 2 & 1 & 0 & 0
    \end{pmatrix}, \begin{pmatrix}
        0 & 0 & 0 & 0 & 0\\
        0 & 0 & 0 & 0 & 1\\
        0 & 0 & 0 & 1 & 0\\
        0 & 0 & 1 & 1 & 0\\
        0 & 1 & 0 & 0 & 0
    \end{pmatrix}, \begin{pmatrix}
        0 & 0 & 0 & 1 & 1\\
        0 & 0 & 0 & 2 & 0\\
        0 & 0 & 0 & 2 & 2\\
        1 & 2 & 2 & 1 & 0\\
        1 & 0 & 2 & 0 & 0
    \end{pmatrix},  \]
    \[
    \begin{pmatrix}
        0 & 2 & 0 & 2 & 0\\
        2 & 1 & 2 & 0 & 0\\
        0 & 2 & 0 & 0 & 0\\
        2 & 0 & 0 & 1 & 1\\
        0 & 0 & 0 & 1 & 0
    \end{pmatrix},
    \begin{pmatrix}
        2 & 2 & 1 & 2 & 0\\
        2 & 2 & 1 & 2 & 1\\
        1 & 1 & 0 & 1 & 2\\
        2 & 2 & 1 & 2 & 2\\
        0 & 1 & 2 & 2 & 0
    \end{pmatrix}.
    \]
    As mentioned in \cite{dumas2011rank}, $\C$ ha dimension $5$ and all the nonzero codewords in $\C$ have rank $4$.
    Straightforward computations show that \eqref{eq:nondegcond} is satisfied and so $\C$ is nondegenerate, although it does not contain any full-rank matrix.
\end{remark}

\begin{remark}
    As expected, Corollary \ref{cor:nondegrankmetric} does not hold when $m \ne n$ (and by replacing the condition of having a matrix of rank $m$ with having a matrix of rank $\min\{m,n\}$). Indeed, consider
    \[ \C=\left\langle \begin{pmatrix}
    1 & 0 & 0\\
    0 & 1 & 0
\end{pmatrix}\right\rangle_{\F_2}, \]
we have that $\C$ contains a full-rank matrix but $\C$ is degenerate by Proposition \ref{prop:nondegrankmetric}. 
\end{remark}

We can now characterize the nondegeneracy of sum-rank metric codes in terms of rank-metric codes that we can associate to them, which we will refer to as components.

Consider the space $\mathrm{Mat}(\bfm,\bfn,\fq)$ and let $j \in \{1,\ldots,t\}$. Define the map
\[ \iota_j \colon \mathrm{Mat}(\bfm,\bfn,\fq) \mapsto \F_q^{m_j \times n_j}, \]
where $\iota_j((X_1,\ldots,X_t))=X_j$, that is $\iota_j$ gives the $j$-th entry of $(X_1,\ldots,X_t)$. We define the \textbf{$j$-th component} of a sum-rank metric code $\C \subseteq \mathrm{Mat}(\bfm,\bfn,\fq)$ as the rank-metric code
\[ \C_j=\iota_j(\C) \subseteq \F_q^{m_j \times n_j}. \]

In particular, the following properties hold:
\[ \C\subseteq \iota_1(\C)\oplus \ldots \oplus \iota_t(\C) \,\,\,\text{and}\,\,\, \ww_{\mathrm{srk}}(X)=\sum_{j=1}^{t} \mathrm{rk}(\iota_j(X)). \]

As a consequence of Proposition \ref{prop:nondegrankmetric}, a sum-rank metric code turns out to be nondegenerate if and only if all of its components are nondegenerate as rank-metric codes.

\begin{corollary}\label{cor:nondegsumrank}
 A sum-rank metric code $\C \subseteq \mathrm{Mat}(\bfm,\bfn,\mathbb{F}_q)$ is degenerate if and only if there exists an index $j \in \{1, \ldots, t\}$ such that the rank-metric code $\C_j = \iota_j(\C)$ is degenerate.
\end{corollary}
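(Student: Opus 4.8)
The plan is to reduce the statement to Proposition~\ref{prop:nondegrankmetric} by exploiting the block structure of the isometries described in Theorem~\ref{th:classificationadditiveisometries}. The key observation is that an isometry $\varphi$ of $\mathrm{Mat}(\bfm,\bfn,\F_q)$ acts ``blockwise up to a permutation'': writing $\varphi((X_1,\ldots,X_t)) = (\psi_1(X_{\pi(1)}),\ldots,\psi_t(X_{\pi(t)}))$, the $i$-th entry of $\varphi(\C)$ is exactly $\psi_i(\C_{\pi(i)})$, i.e.\ the image of the $\pi(i)$-th component of $\C$ under the rank-metric isometry $\psi_i$. Consequently, the property ``the matrices in the $i$-th entry of all codewords present a common zero row or column'' for $\varphi(\C)$ translates, after applying $\psi_i^{-1}$, into a common zero row/column condition on an $\F_q^{m_{\pi(i)}\times n_{\pi(i)}}$-equivalent copy of $\C_{\pi(i)}$ — in other words, $\C_{\pi(i)}$ is degenerate as a rank-metric code in the sense captured by Proposition~\ref{prop:nondegrankmetric}.

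First I would prove the ``if'' direction: suppose some component $\C_j = \iota_j(\C)$ is degenerate as a rank-metric code. By Proposition~\ref{prop:nondegrankmetric}, either $\bigcap_{C\in\C_j}\mathrm{leftker}(C)\neq\{\mathbf 0\}$ or $\bigcap_{C\in\C_j}\mathrm{rightker}(C)\neq\{\mathbf 0\}$; say there is a nonzero $\mathbf v$ in the common left kernel. Pick $A\in\GL(m_j,q)$ with first row $\mathbf v$, and let $\varphi$ be the isometry of $\mathrm{Mat}(\bfm,\bfn,\F_q)$ that acts as left multiplication by $A$ on the $j$-th block and as the identity on all other blocks (this is a legitimate isometry by Theorem~\ref{th:classificationadditiveisometries}, with $\pi=\mathrm{id}$). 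Then every codeword of $\varphi(\C)$ has a zero first row in its $j$-th entry, so $\varphi(\C)$ witnesses that $\C$ is degenerate. The right-kernel case is symmetric, using right multiplication.

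For the ``only if'' direction, suppose $\C$ is degenerate, witnessed by an isometry $\varphi$ and an index $i$ such that the $i$-th entries of all codewords of $\varphi(\C)=\C'$ share a common zero row or column. Using the decomposition of $\varphi$ from Theorem~\ref{th:classificationadditiveisometries}, $\iota_i(\C') = \psi_i(\iota_{\pi(i)}(\C)) = \psi_i(\C_{\pi(i)})$, where $\psi_i$ is a rank-metric isometry of $\F_q^{m_{\pi(i)}\times n_{\pi(i)}}$. Since $\iota_i(\C')$ has a common zero row or column, it is a degenerate rank-metric code by (the easy direction of) Proposition~\ref{prop:nondegrankmetric}; being rank-metric equivalent to $\C_{\pi(i)}$ via $\psi_i$, the code $\C_{\pi(i)}$ is also degenerate. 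Setting $j=\pi(i)$ gives the desired index. I expect the main (though mild) obstacle to be bookkeeping the permutation $\pi$ correctly and verifying that the ``restricted'' map built in the ``if'' direction — multiplying one block by $A$ and fixing the rest — genuinely satisfies the hypotheses of Theorem~\ref{th:classificationadditiveisometries}, i.e.\ that it is a bona fide additive sum-rank isometry without transposition; this is where I would be careful, but it follows immediately since each $\psi_i$ of that form is a rank-metric isometry.
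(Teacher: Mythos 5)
Your proposal is correct and follows essentially the same route as the paper: both directions reduce to the rank-metric case via the blockwise decomposition of isometries from Theorem~\ref{th:classificationadditiveisometries}, with the identity $\iota_i(\varphi(\C))=\psi_i(\C_{\pi(i)})$ doing the work in the ``only if'' direction. The only cosmetic difference is that in the ``if'' direction you detour through the kernel characterization of Proposition~\ref{prop:nondegrankmetric} to build the witnessing matrix $A$ explicitly, whereas the paper simply extends the rank-metric isometry witnessing degeneracy of $\C_j$ to all of $\mathrm{Mat}(\bfm,\bfn,\F_q)$; both are valid.
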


\begin{proof}
Assume first that the code $\C$ is degenerate. Then there exists an isometry 
\[
\varphi : \mathrm{Mat}(\bfm, \bfn, \F_q) \longrightarrow \mathrm{Mat}(\bfm, \bfn, \F_q)
\]
and a sum-rank metric code $\C' \subseteq \mathrm{Mat}(\bfm, \bfn, \F_q)$ equivalent to $\C$ via $\varphi$, such that there is an index $j \in \{1, \ldots, t\}$ for which the $j$-th block of all codewords in $\C'$ contains (at least) one row or column that is identically zero in the same position. In particular, the rank-metric code $\C_j' = \iota_j(\C')$ is degenerate. Since $\varphi$ is an isometry, there exists a permutation $\pi \in \mathcal{S}_t$ and rank-metric isometries $
\psi_j : \F_q^{m_j \times n_j} \longrightarrow \F_q^{m_j \times n_j}, \quad \text{for all } j \in \{1, \ldots, t\}$, such that $\varphi((X_1, \ldots, X_t)) = (\psi_1(X_{\pi(1)}), \ldots, \psi_t(X_{\pi(t)})),$ for all $(X_1, \ldots, X_t) \in \mathrm{Mat}(\bfm, \bfn, \F_q)$. Hence,
\[
\begin{array}{rl}
\C_j' &= \iota_j(\C') \\
&= \iota_j(\varphi(\C)) \\
&= \psi_j(\C_{\pi(j)}).
\end{array}
\]
Since $\C_j'$ is degenerate and $\varphi_j$ is an isometry, it follows that $\C_{\pi(j)}$ is also degenerate. Conversely, if there exists $j \in \{1,\ldots,t\}$ such that $\iota_j(\C)$ is degenerate, then $\iota_j(\C)$ is equivalent to a rank-metric code whose elements have a zero row or a zero column in the same position. We can extend this isometry to $\C$ and we obtain the code $\C'$ such that the $j$-th block of each codeword of $\C'$ has a zero row or a zero column in the same position, again we obtain that $\C$ is degenerate.
\end{proof}

Using Corollary \ref{cor:nondegrankmetric} we also have the following.

\begin{corollary}
    A sum-rank metric code $\C \subseteq \mathrm{Mat}(\bfm,\bfm,\fq)$ containing a codeword of weight $\sum_{i \in [t]}m_i$ is nondegenerate.
\end{corollary}
\begin{proof}
    By assumption $\C$ contains a codeword $(C_1,\ldots,C_t)$ of weight $\sum_{i \in [t]}m_i$, which implies that 
    \[ \mathrm{rk}(C_i)=m_i, \]
    for every $i$. Hence, any component of $\C$ has a full-weight codeword. Corollary \ref{cor:nondegsumrank} implies the assertion.
\end{proof}

Codewords with weight as in the above corollary are called \textbf{maximum weight codewords}, for a detailed study on them for the rank-metric codes we refer to \cite{polverino2024maximum}.
As a consequence of the above corollary and the following proposition, we can prove that, under certain assumptions, MSRD codes are nondegenerate.

\begin{proposition} [see \textnormal{\cite[Proposition A.5]{santonastaso2023msrd}}] \label{prop:numberweightMSRD} 
    Assume that $m=m_1=\ldots=m_t$ and $n=n_1=\ldots=n_t$ and $t\leq q-1$. Let $\C$ be an MSRD code in $\Mat(\bfm,\bfn,\F_q)$ having minimum distance $d$ and denote by $W_j(\C)$ the number of codewords in $\C$ having weight $j \in \{0.\ldots,t\min\{n,m\}\}$.  For any $0 \leq i \leq t\min\{n,m\}-d$, we have \[W_{d+i}(\C)>0,\] i.e. there always exists at least one codeword $X \in \C$ having weight $d+i$.  
\end{proposition}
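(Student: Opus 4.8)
The plan is to reduce, by a shortening construction, to the assertion that every non-trivial MSRD code has a codeword of maximum weight, and then to establish that assertion by a union bound over the ``rank-deficient'' parts of the code — the step in which the hypothesis $t\le q-1$ is exactly what is needed.

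\emph{Setup and shortening.} First I would transpose the blocks with $m>n$ (a sum-rank isometry preserving all weights) so as to assume $m\le n$; then $\min\{n,m\}=m$, the largest possible weight is $tm$, and by Theorem~\ref{th:SingletonboundmatrixRav} the MSRD hypothesis becomes $|\C|=q^{n(tm-d+1)}$. Fixing $w$ with $d\le w\le tm$, I would choose subspaces $U_j\subseteq\F_q^{m}$ of dimensions $u_j\le m$ with $\sum_{j=1}^{t}u_j=w$ (possible since $w\le tm$), set
\[
\C[U]=\{(X_1,\ldots,X_t)\in\C:\ \colsp(X_j)\subseteq U_j\text{ for all }j\},
\]
and, after a block-wise change of coordinates sending each $U_j$ to the span of the first $u_j$ standard vectors, delete in block $j$ the last $m-u_j$ rows, which vanish on $\C[U]$. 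This yields a sum-rank code $\widehat\C$ in $\Mat(\mathbf u,\bfn,\F_q)$, $\mathbf u=(u_1,\ldots,u_t)$, with the same weights as $\C[U]$, so $\dsrk(\widehat\C)\ge d$; reinstating the deleted zero rows turns a codeword of $\widehat\C$ back into one of $\C$ of the same weight. Since $\C[U]$ is the kernel of the additive map $X\mapsto(\text{deleted rows})$, whose image has size at most $q^{n(tm-w)}$, one gets $|\widehat\C|\ge q^{n(tm-d+1)-n(tm-w)}=q^{n(w-d+1)}$; on the other hand $\sum u_j=w$ and $\dsrk(\widehat\C)\ge d$ give $|\widehat\C|\le q^{n(w-d+1)}$ by Theorem~\ref{th:SingletonboundmatrixRav} (after reordering blocks to meet its monotonicity hypotheses). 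Hence equality holds: $\widehat\C$ is MSRD with minimum distance $d$, maximum weight $w$, and $|\widehat\C|=q^{n(w-d+1)}\ge q^n>1$, so it remains only to produce a maximum-weight codeword of $\widehat\C$.

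\emph{The union bound.} I would prove, more generally, that a non-trivial MSRD code $\mathcal D$ in $\Mat(\mathbf m',\bfn,\F_q)$ with $m_j'\le n$ for every $j$, with at most $q-1$ blocks, with minimum distance $\delta$ and hence $|\mathcal D|=q^{n(M-\delta+1)}$ (where $M=\sum_j m_j'$), has a codeword $(X_1,\ldots,X_t)$ with $\rk(X_j)=m_j'$ for all $j$. A codeword fails this exactly when $\colsp(X_j)\subseteq H$ for some $j$ and some hyperplane $H\subsetneq\F_q^{m_j'}$; for fixed $j,H$, coordinatizing $H$ as the span of the first $m_j'-1$ standard vectors makes the last row of block $j$ vanish, and deleting it identifies $\{X\in\mathcal D:\colsp(X_j)\subseteq H\}$ with a sum-rank code in which $M$ is replaced by $M-1$ and the minimum distance is $\ge\delta$, hence of size at most $q^{n(M-\delta)}$ by Theorem~\ref{th:SingletonboundmatrixRav}. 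As $\F_q^{m_j'}$ has $\tfrac{q^{m_j'}-1}{q-1}$ hyperplanes, a union bound gives
\[
\bigl|\{X\in\mathcal D:\ \rk(X_j)<m_j'\text{ for some }j\}\bigr|\le\Bigl(\sum_{j=1}^{t}\tfrac{q^{m_j'}-1}{q-1}\Bigr)q^{n(M-\delta)}<\tfrac{t}{q-1}\,q^{n(M-\delta+1)}\le q^{n(M-\delta+1)}=|\mathcal D|,
\]
using $m_j'\le n$ and $t\le q-1$. Thus some codeword of $\mathcal D$ has all blocks of full rank; applying this to $\widehat\C$ and lifting back produces the desired weight-$w$ codeword of $\C$.

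\emph{On the difficulty.} Granting the two uses of the Singleton-like bound, the argument is short; the points that need care are bookkeeping — reordering blocks so that Theorem~\ref{th:SingletonboundmatrixRav} applies, and making the ``delete/re-insert zero rows'' identifications precise for \emph{additive} (not merely $\F_q$- or $\F_{q^m}$-linear) codes, which is the generality of the statement. I expect the only conceptual subtlety to be exactly the appearance of $t\le q-1$: it plays the same role it already does in the Hamming case, and it is sharp there — the doubly extended $[q+1,2,q]$ MDS code has no codeword of weight $q+1$.
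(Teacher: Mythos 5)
Your argument is correct, but it is worth noting that the paper does not actually prove this statement: it imports it from \cite[Proposition A.5]{santonastaso2023msrd}, and (as the surrounding text indicates, this being the sum-rank analogue of \cite[Lemma 52]{ravagnani2016rank}) the standard route there is via the explicit closed-form weight distribution of MSRD codes, from which one checks positivity of each $W_{d+i}$ under $t\le q-1$. Your proof is a genuinely different, purely combinatorial argument: you shorten $\C$ on a tuple of column-space constraints of total dimension $w$, use the Singleton-like bound twice (once as a lower bound on the kernel of the row-deletion map, once as an upper bound on the shortened code) to see that the shortened code is again MSRD with maximum possible weight $w$, and then produce a full-weight codeword there by a union bound over hyperplanes, where $\sum_j\frac{q^{u_j}-1}{q-1}\le(q-1)\frac{q^n-1}{q-1}<q^n$ is exactly where $t\le q-1$ enters. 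I checked the counting and it is strict, so the conclusion follows; the bookkeeping points you flag (dropping blocks with $u_j=0$, reordering to meet the monotonicity hypotheses of Theorem~\ref{th:SingletonboundmatrixRav}, injectivity of the delete/re-insert maps on the additive subcodes) are all benign. What your approach buys is a self-contained, formula-free proof that also isolates the reusable lemma that every MSRD code with $t\le q-1$ blocks contains a maximum-weight codeword; what the weight-enumerator approach buys is the exact values of all the $W_{d+i}$, not just their positivity. Your closing remark that the $[q+1,2,q]$ doubly extended MDS code has $W_{q+1}=0$ correctly shows the hypothesis on $t$ cannot simply be dropped.
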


The above proposition generalizes the result \cite[Lemma 52]{ravagnani2016rank}.
Such a result guarantees that an MSRD code has a maximum weight codeword. Therefore, such MSRD codes are nondegenerate.

\begin{corollary}\label{cor:MSRDcodesarenondeg}
    Let $\mathbf{m}=(m,\ldots,m)$, with $t \leq q-1$.
    Let $\C$ be an MSRD sum-rank metric code in $\Mat(\bfm,\bfm,\F_q)$.
    We have that $\C$ is nondegenerate.
\end{corollary}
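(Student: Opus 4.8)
The plan is to reduce the statement to results already established in the excerpt, so the proof is essentially a two-line chain. The key point is that Corollary~\ref{cor:MSRDcodesarenondeg} is the special case $m_1=\cdots=m_t=n_1=\cdots=n_t=m$ of the combination of Proposition~\ref{prop:numberweightMSRD} and the corollary preceding it (the one stating that a code containing a maximum weight codeword is nondegenerate). So first I would invoke Proposition~\ref{prop:numberweightMSRD} with $n=m$: since $\C$ is MSRD with minimum distance $d$, for the choice $i=t m-d$ we get $W_{tm}(\C)>0$, i.e. $\C$ contains a codeword $X$ of sum-rank weight $tm=\sum_{i\in[t]}m_i$.

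Next, I would apply the corollary on maximum weight codewords: a sum-rank metric code in $\Mat(\bfm,\bfm,\F_q)$ containing a codeword of weight $\sum_{i\in[t]}m_i$ is nondegenerate. Applying it to the codeword $X$ produced in the previous step yields immediately that $\C$ is nondegenerate, which is the claim. One should just check that the hypotheses of Proposition~\ref{prop:numberweightMSRD} are met: it requires $m=m_1=\cdots=m_t$ and $n=n_1=\cdots=n_t$ and $t\le q-1$; here $\bfm=(m,\ldots,m)$ and we are in $\Mat(\bfm,\bfm,\F_q)$ so $n=m$, and $t\le q-1$ is assumed in the statement, so everything is in order. (The degenerate range $tm-d<0$ cannot occur since $d\le tm$ for any nonzero code, and if $d=tm$ the bound already gives a weight-$tm$ codeword directly.)

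Since every ingredient is available, there is no real obstacle — the only thing to be careful about is bookkeeping of the index ranges and confirming that $n=m$ puts us exactly in the hypotheses of Proposition~\ref{prop:numberweightMSRD}. I would therefore write the proof as: by Proposition~\ref{prop:numberweightMSRD} applied with $n=m$ and $i=tm-d$, there exists $X\in\C$ with $\w_{\srk}(X)=tm=\sum_{i\in[t]}m_i$; hence $\C$ contains a maximum weight codeword, and by the preceding corollary $\C$ is nondegenerate.
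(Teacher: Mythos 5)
Your proof is correct and follows exactly the route the paper itself takes: the text preceding the corollary explicitly derives nondegeneracy by combining Proposition~\ref{prop:numberweightMSRD} (which, with $n=m$ and $i=tm-d$, produces a codeword of weight $tm$) with the corollary on maximum weight codewords. Your bookkeeping of the hypotheses is accurate, so there is nothing to add.
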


Note that, the MSRD codes we will consider in Section \ref{sec:skewpolSRC} are nondegenerate codes.


\section{Invariants from rank metric and generalised idealisers}

This section is devoted to developing the theory of certain invariants for sum-rank metric codes. The main idea is to start from the invariants widely used in the rank metric and determine their sum-rank metric analogs. We begin with the classical notion of an idealiser, which we associate via viewing a sum-rank metric code as a rank-metric code (with a block structure). We extend this notion by considering possible matrix multiplication on the left or on the right in each block, introducing the notion of generalised idealisers. Then, we introduce the sum-rank metric analogs of the centralizer and center, which allow us to define the nuclear parameters of a sum-rank metric code. Finally, we conclude the section by clarifying that the notion of linearity, as it currently stands, cannot be used as an invariant. We propose a new notion of linearity that is preserved by the equivalence of sum-rank metric codes. We emphasize that, although the definitions and statements are expressed in terms of $t$-ple of matrices, they can be formulated in any framework isometric to $\mathrm{Mat}(\mathbf{m}, \mathbf{n}, \mathbb{F}_q)$. 

\subsection{Block matrix rank-metric codes as sum-rank metric codes}

As already observed in \cite[Section II]{byrne2021fundamental}, a sum-rank metric code in $\mathrm{Mat}(\bfm,\bfn,\fq)$ can also be seen as a rank-metric code whose codewords are block matrices. More precisely, define 
\[ M = \sum_{i=1}^{t} m_i \ \ \ \text{ and }\ \ \ N = \sum_{i=1}^{t} n_i.\] 
Recall that the direct sum of $A \in \mathrm{Mat}(m_1,n_1,\fq)$ and $B \in \mathrm{Mat}(m_2,n_2,\fq)$ is the block diagonal matrix
\[ A\oplus B=\begin{pmatrix}
    A & 0 \\
    0 & B
\end{pmatrix} \in \mathrm{Mat}(m_1+m_2,n_1+n_2,\F_q), \]
and we can define this operation iteratively on more than two matrices.
Consider $\psi$ from $\mathrm{Mat}(\bfm,\bfn,\fq)$ to $\mathrm{Mat}(M,N,\fq)$ such that
\[ \psi((X_1,\ldots,X_t))=X_1\oplus \ldots \oplus X_t. \]
The map $\psi$ is a linear map and consider $\mathcal{B}(\bfm,\bfn,\fq)$ as the image of $\psi$.
Also, note that 
\[\w_{\srk}((X_1,\ldots,X_t))=\mathrm{rk}(X_1\oplus \ldots \oplus X_t),\] 
for every $(X_1,\ldots,X_t) \in \Mat(\bfm,\bfn,\F_q)$.
and therefore $\psi$ is an isometry between the metric spaces $(\mathrm{Mat}(\bfm,\bfn,\fq),\dsrk)$ and $(\mathcal{B}(\bfm,\bfn,\fq),\mathrm{rk})$.

We can now consider some of the invariants known in the rank metric and see whether these can be extended to invariants in the sum-rank metric, as well. The first invariants were the idealisers of a code introduced by Liebhold and Nebe in \cite{liebhold2016automorphism}.
Given a rank-metric code $\C$ in $\mathrm{Mat}(M,N,\fq)$, we define 
the sets 
\[ L(\C)=\{ A \in \mathrm{Mat}(M,M,\fq) \colon AX \in \mathcal{C}, \,\,\, \text{for every}\,\,\,  X \in \C \} \]
\[ R(\C)=\{ B \in \mathrm{Mat}(N,N,\fq) \colon XB \in \mathcal{C}, \,\,\, \text{for every}\,\,\,  X \in \C \} \]
are called the \textbf{left idealiser} of $\C$ and the \textbf{right idealiser} of $\C$, respectively. 

These notions extend the classical concept of left and middle nuclei defined for division algebras and semifields (cf.~\cite{lunardon2018nuclei,sheekey2020new,lobillo2025quotients,thompson2023division}). These nuclei play a fundamental role in distinguishing between different semifields, as they are invariants under isotopy. 

Their extension to the context of rank-metric codes has proven to be particularly fruitful. Indeed, together with the operations of sum and product of matrices, both the left and the right idealisers are rings. Also, in \cite{lunardon2018nuclei}, the authors observed that if two rank-metric codes are equivalent, then their left/right idealisers are isomorphic as rings. 
In the next proposition, we will see that the idealisers of the rank-metric code associated to a sum-rank metric code have a block structure.

\begin{proposition}\label{prop:nondegleftrightideal}
    Let $\C$ be a nondegenerate sum-rank metric code in $\mathrm{Mat}(\bfm,\bfn,\fq)$. Then
    \[ L(\psi(\C))\subseteq \mathcal{B}(\mathbf{m},\mathbf{m},\F_q)\,\,\,\text{and}\,\,\, R(\psi(\C))\subseteq\mathcal{B}(\mathbf{n},\mathbf{n},\F_q). \]
\end{proposition}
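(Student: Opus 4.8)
The plan is to show that any left-idealiser element of $\psi(\C)$ must respect the block-diagonal structure $\mathcal B(\mathbf m,\mathbf m,\F_q)$, and that this is exactly where nondegeneracy is needed. Write a matrix $A\in L(\psi(\C))\subseteq\Mat(M,M,\F_q)$ in block form $A=(A_{ij})_{i,j\in\{1,\dots,t\}}$, with $A_{ij}\in\Mat(m_i,m_j,\F_q)$, matching the partition $M=m_1+\cdots+m_t$. For a codeword $X=(X_1,\dots,X_t)\in\C$ we have $\psi(X)=X_1\oplus\cdots\oplus X_t$, and the $i$-th block-row of $A\,\psi(X)$ is $(A_{i1}X_1,\ A_{i2}X_2,\ \dots,\ A_{it}X_t)$. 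Since $A\,\psi(X)$ must again lie in $\mathcal B(\mathbf m,\mathbf n,\F_q)=\operatorname{im}\psi$, all of its off-diagonal blocks vanish: $A_{ij}X_j=0$ for every $X_j\in\iota_j(\C)=\C_j$ and every $i\neq j$. The same computation on the right, using $R(\psi(\C))$ and block-columns, gives $X_i B_{ij}=0$ for all $X_i\in\C_i$ and $i\neq j$.

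Next I would invoke nondegeneracy through Proposition~\ref{prop:nondegrankmetric} and Corollary~\ref{cor:nondegsumrank}. Fixing $i\neq j$, the condition $A_{ij}X_j=0$ for all $X_j\in\C_j$ says that every row of $A_{ij}$ lies in $\bigcap_{X_j\in\C_j}\operatorname{leftker}(X_j)$ — wait, more precisely it says each column... let me phrase it cleanly: for every vector $\mathbf v\in\F_q^{m_i}$, the row $\mathbf v A_{ij}$ satisfies $(\mathbf v A_{ij})X_j = \mathbf v(A_{ij}X_j)=\mathbf 0$, hence each row of $A_{ij}$ is a left kernel vector common to all of $\C_j$. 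Since $\C$ is nondegenerate, Corollary~\ref{cor:nondegsumrank} gives that each component $\C_j$ is nondegenerate, so by Proposition~\ref{prop:nondegrankmetric} we have $\bigcap_{C\in\C_j}\operatorname{leftker}(C)=\{\mathbf 0\}$; therefore every row of $A_{ij}$ is zero, i.e. $A_{ij}=0$. This forces $A$ to be block-diagonal, so $A\in\mathcal B(\mathbf m,\mathbf m,\F_q)$, and in particular each diagonal block $A_{ii}$ is square of size $m_i$. The argument for $R(\psi(\C))$ is symmetric, using $\bigcap_{C\in\C_i}\operatorname{rightker}(C)=\{\mathbf 0\}$ to kill the off-diagonal blocks $B_{ij}$.

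I do not foresee a serious obstacle here; the proof is essentially a block-matrix bookkeeping argument whose only real input is the characterisation of nondegeneracy already established. The one point requiring mild care is making sure the off-diagonal vanishing is phrased correctly on both sides: on the left, $A\psi(X)\in\operatorname{im}\psi$ forces $A_{ij}X_j=0$ for $i\neq j$, and one reads off that the \emph{rows} of $A_{ij}$ are common left-kernel vectors of $\C_j$; on the right, $\psi(X)B\in\operatorname{im}\psi$ forces $X_iB_{ij}=0$ for $i\neq j$, and one reads off that the \emph{columns} of $B_{ij}$ are common right-kernel vectors of $\C_i$. In both cases nondegeneracy of the relevant component makes the intersection trivial and kills the block. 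It is also worth remarking explicitly why nondegeneracy is genuinely needed: if some component $\C_j$ had a nontrivial common left kernel, one could build a nonzero off-diagonal block $A_{ij}$ and obtain an idealiser element that is not block-diagonal, so the hypothesis cannot be dropped.
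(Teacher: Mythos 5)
Your proposal is correct and follows essentially the same route as the paper: write the idealiser element in block form, use that multiplying the block-diagonal $\psi(X)$ must stay block-diagonal to force $A_{ij}X_j=0$ (resp.\ $X_iB_{ij}=0$) for $i\neq j$, and then kill the off-diagonal blocks via nondegeneracy of the components through Corollary~\ref{cor:nondegsumrank} and Proposition~\ref{prop:nondegrankmetric}. Your bookkeeping of which component's left/right kernel is involved is in fact slightly cleaner than the paper's wording, so no changes are needed.
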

\begin{proof}
    We will prove the assertion for the left idealiser, clearly similar arguments can be used in the case of the right idealiser.
    For a matrix $A \in \mathrm{Mat}(M,M,\fq)$, we have that $A \in L(\psi(\C))$ if and only if for any $C=(C_1,\ldots,C_t) \in \C$ there exists $C'=(C_1',\ldots,C_t') \in \C$ such that
    \[A \psi(C)=\psi(C').\]
    Write the matrix $A$ as a block matrix with $t^2$ blocks of the form $A_{i,j}\in \mathrm{Mat}(m_i,m_i,\F_q)$, where $i,j \in \{1,\ldots,t\}$. Using the block structure of $\psi(C)$ and $\psi(C')$ one gets 
    \[ A_{i,j}C_i=O, \]
    for any $i,j \in \{1,\ldots,t\}$ with $i\ne j$, where $O$ is the zero matrix in the corresponding ambient space.
    Let us fix $i,j \in \{1,\ldots,t\}$ with $i\ne j$. Then for any $C=(C_1,\ldots,C_t) \in \C$ we have 
    \[ A_{i,j}C_i=O, \]
    i.e. the rowspan of $A_{i,j}$ is contained in the left kernel of any element belonging to the $i$-th component of $\C$, that is $\iota_i(\C)$. Since $\mathcal{C}$ is nondegenerate, it follows from \Cref{cor:nondegsumrank} that the rank-metric code $\iota_i(\mathcal{C})$ is also nondegenerate, for every $i \in \{1,\ldots,t\}$. Then, by \Cref{prop:nondegrankmetric}, the row space of $A_{i,j}$ contains only the zero vector, implying that $A_{i,j}$ is the zero matrix for all $i, j \in \{1,\ldots,t\}$ with $i \ne j$.
 Therefore, $A=A_{1,1}\oplus \ldots \oplus A_{t,t}$ and so $A \in \mathcal{B}(\bfm,\bfm,\fq)$.
\end{proof}

By the result above, since $L(\psi(\mathcal{C})) \subseteq \mathcal{B}(\mathbf{m}, \mathbf{m}, \mathbb{F}_q)$ and $R(\psi(\mathcal{C})) \subseteq \mathcal{B}(\mathbf{n}, \mathbf{n}, \mathbb{F}_q)$, we can consider $\psi^{-1}(L(\psi(\mathcal{C})))$ and $\psi^{-1}(R(\psi(\mathcal{C})))$. As we will see in the next subsection, these sets are invariants also with respect to the sum-rank metric, and actually can be defined as follows.

\begin{definition}
 Let $\C$ be a sum-rank metric code in $\spacmn$. The \textbf{left idealiser} of $\C$ is 
 \[
 \lid(\C):=\left\{(D_1,\ldots,D_t) \in \Mat(\bfm,\bfm,\fq): (D_1X_1,\ldots,D_tX_t) \in \C, \mbox{ for every } (X_1,\ldots,X_t) \in \C\right\}.
 \]
 Similarly, the \textbf{right idealiser} of $\C$ is 
 \[
 \rid(\C):=\left\{(D_1,\ldots,D_t) \in \Mat(\bfn,\bfn,\fq): (X_1D_1,\ldots,X_tD_t) \in \C, \mbox{ for every } (X_1,\ldots,X_t) \in \C\right\};
 \]
\end{definition}

With the above definition in mind, Proposition \ref{prop:nondegleftrightideal} reads as follows: for nondegenerate sum-rank metric codes $\lid(\C)=\psi^{-1}(L(\psi(\C)))$ and $\rid(\C)=\psi^{-1}(R(\psi(\C)))$.
Therefore, some of the properties of $\lid(\C)$ and $\rid(\C)$ are inherited from $L(\psi(\C))$ and $R(\psi(\C))$.
Indeed, using \cite[Lemmas 4.3 and 4.4]{lunardon2018nuclei} and \cite[Theorem 3.1]{longobardi2024standard} we have that left and right idealisers turn out to be fields.

\begin{proposition}\label{prop:fieldideal}
 Let $\C$ be a sum-rank metric code in $\spacmn$ and assume that $m_i \leq n_i$, for each $i$ and let $d$ be its minimum distance. Suppose that $\C$ contains a codeword of weight $\sum_{i=1}^t m_i$. 
 \begin{itemize}
     \item If $d \geq \left\lfloor \frac{\sum_{i=1}^tm_i}2 \right\rfloor +1$ then $\lid(\C)$ is a subfield of $\F_{q^{m_1+\ldots+m_t}}$.
     \item If $d \geq \left\lfloor \frac{\sum_{i=1}^tn_i}2 \right\rfloor +1$ then $\rid(\C)$ is a subfield of $\F_{q^{n_1+\ldots+n_t}}$.
 \end{itemize}
\end{proposition}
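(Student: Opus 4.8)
The plan is to reduce everything to the rank-metric case via the isometry $\psi$ and Proposition~\ref{prop:nondegleftrightideal}, then invoke the cited rank-metric results. First I would observe that the hypothesis — $\C$ contains a codeword of weight $\sum_{i=1}^t m_i$ — forces $\C$ to be nondegenerate (by the corollary following Corollary~\ref{cor:nondegsumrank}, since then every component $\iota_i(\C)$ contains a full-rank matrix, hence is nondegenerate by Corollary~\ref{cor:nondegrankmetric}). So Proposition~\ref{prop:nondegleftrightideal} applies, giving $L(\psi(\C)) \subseteq \mathcal{B}(\mathbf{m},\mathbf{m},\F_q)$ and $\lid(\C) = \psi^{-1}(L(\psi(\C)))$, with $\psi$ a ring isomorphism onto its image intertwining block multiplication with componentwise multiplication. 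Thus $\lid(\C)$ and $L(\psi(\C))$ are isomorphic as rings, and it suffices to prove $L(\psi(\C))$ is a field.

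Next I would translate the weight hypotheses into the block picture. Setting $M = \sum_i m_i$ and $N = \sum_i n_i$, the rank-metric code $\psi(\C) \subseteq \Mat(M,N,\F_q)$ has minimum rank distance $d$ (since $\psi$ is an isometry), and $\psi(\C)$ contains a matrix of rank $M$, because the image under $\psi$ of a sum-rank weight-$\sum m_i$ codeword is block-diagonal with each block of full rank $m_i$, hence of rank $M = \min\{M,N\}$ (using $m_i \le n_i$ for all $i$, so $M \le N$). Now the first bullet's hypothesis $d \ge \lfloor M/2 \rfloor + 1$ is exactly the threshold under which \cite[Lemmas 4.3 and 4.4]{lunardon2018nuclei} (together with \cite[Theorem 3.1]{longobardi2024standard}) guarantee that the left idealiser of a rank-metric code in $\Mat(M,N,\F_q)$ possessing a rank-$M$ codeword is a finite field; since its cardinality divides $q^M$ (it embeds in $\End_{\F_q}$ of an $M$-dimensional space and acts without common kernel by nondegeneracy, being in fact contained in a copy of $\F_{q^M}$ via the full-rank codeword), we get $\lid(\C) \hookrightarrow \F_{q^M} = \F_{q^{m_1 + \cdots + m_t}}$. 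The second bullet is entirely symmetric: apply the same argument to $\psi(\C)^{\top}$ or directly to the right idealiser, using $N = \sum n_i$ in place of $M$ and the hypothesis $d \ge \lfloor N/2 \rfloor + 1$, noting that a rank-$M$ codeword has rank at most $N$ so the right-idealiser argument needs the bound in terms of $N$ — here one uses that the adjoint code still has a suitable full-weight codeword since transposing a full-rank block gives a full-rank block.

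The main obstacle is verifying that the cited rank-metric lemmas genuinely apply in the possibly non-square setting $M \ne N$ and with the exact weight/distance thresholds stated: \cite{lunardon2018nuclei} is phrased for square matrices, so I would need to check that the presence of a rank-$M$ codeword (with $M \le N$) lets one restrict attention to an $M$-dimensional ambient space — concretely, a full-rank-$M$ codeword $C$ gives an injective map $\F_q^M \to$ colspace, and left multiplication by $A \in L(\psi(\C))$ permutes $\psi(\C)$, so $A$ is determined by its action through $C$; combined with nondegeneracy this pins $L(\psi(\C))$ inside $\End_{\F_q}(\F_q^M)$, and the standard Gabidulin-type argument (an idealiser element of rank $< M$ would, via the minimum-distance bound $d > M/2$, produce a nonzero codeword of weight $< d$) shows every nonzero element is invertible; commutativity then follows from \cite[Theorem 3.1]{longobardi2024standard}. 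I would present this reduction carefully but cite the two lemmas for the field conclusion rather than reprove them.
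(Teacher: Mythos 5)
Your route is the same one the paper takes: it offers no written proof of this proposition, but derives it by identifying $\lid(\C)$ and $\rid(\C)$ with $L(\psi(\C))$ and $R(\psi(\C))$ for the block rank-metric code $\psi(\C)\subseteq\Mat(M,N,\F_q)$ and then citing \cite[Lemmas 4.3 and 4.4]{lunardon2018nuclei} together with \cite[Theorem 3.1]{longobardi2024standard}, exactly as you propose, including the observation that a sum-rank weight $\sum_i m_i$ codeword becomes a rank-$M$ block-diagonal matrix.

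There is, however, one step in your write-up that fails as stated. You deduce nondegeneracy of $\C$ from the existence of a codeword of weight $\sum_i m_i$, invoking the corollary to Corollary~\ref{cor:nondegsumrank}; but that corollary is stated only for square blocks, i.e.\ for codes in $\Mat(\bfm,\bfm,\F_q)$, whereas here the blocks satisfy $m_i\le n_i$ and may be rectangular. The paper's own remark following Corollary~\ref{cor:nondegrankmetric} gives the counterexample $\left\langle\left(\begin{smallmatrix}1&0&0\\0&1&0\end{smallmatrix}\right)\right\rangle_{\F_2}$: a full-rank codeword does not imply nondegeneracy when $m<n$. Consequently you cannot appeal to Proposition~\ref{prop:nondegleftrightideal} to obtain the equality $\lid(\C)=\psi^{-1}(L(\psi(\C)))$. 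The gap is easily closed, because only one inclusion is needed: for any $(D_1,\ldots,D_t)\in\lid(\C)$ one has $(D_1\oplus\cdots\oplus D_t)\,\psi(X)=\psi((D_1X_1,\ldots,D_tX_t))\in\psi(\C)$, so $\psi(\lid(\C))\subseteq L(\psi(\C))$ unconditionally. Since $\lid(\C)$ contains $(I_{m_1},\ldots,I_{m_t})$, its image is a finite subring with identity of the field $L(\psi(\C))\subseteq\F_{q^{M}}$, hence a finite integral domain, hence itself a subfield of $\F_{q^{M}}$; the symmetric argument with $N=\sum_i n_i$ handles $\rid(\C)$. With this substitution your argument goes through and coincides with the paper's intended proof.
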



We can use the Singleton-like bound to give an upper bound on the size of the idealisers when they are fields.
\begin{proposition}\label{prop:sizeideal}
 Let $\C$ be a MSRD code in $\spacmn$ and assume that $m_i \leq n_i$, for each $i$ and let $d$ be its minimum distance. 
 \begin{itemize}
     \item If $\lid(\C)$ is a field, then $|\lid(\C)|\leq q^{m_t}$. Moreover, if $m_1=\ldots=m_t=m$ then $|\lid(\C)|\leq q^m$.
     \item If $\rid(\C)$ is a field, then $|\lid(\C)|\leq q^{n_t}$. Moreover, if $n_1=\ldots=n_t=n$ then $|\lid(\C)|\leq q^n$.
 \end{itemize}
\end{proposition}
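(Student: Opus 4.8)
The plan is to reduce the statement to a counting argument using the Singleton-like bound (Theorem~\ref{th:SingletonboundmatrixRav}), exploiting the fact that when an idealiser is a field it acts on the code in a way that forces a divisibility/size constraint. Concretely, suppose $\lid(\C)$ is a field, say $\F=\lid(\C)$ with $|\F|=q^s$. Since $\C$ is an MSRD code it is nondegenerate by Corollary~\ref{cor:MSRDcodesarenondeg} (in the case $m_1=\dots=m_t=m$) — or more directly, since we only need a maximum-weight codeword and MSRD codes in the relevant range have one by Proposition~\ref{prop:numberweightMSRD}; in any case we may assume $\C$ contains a codeword $(C_1,\dots,C_t)$ with $\rk(C_i)=m_i$ for all $i$. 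The key observation is that $\C$ is then a module over $\F$: for $D=(D_1,\dots,D_t)\in\F$ and $X=(X_1,\dots,X_t)\in\C$, we have $(D_1X_1,\dots,D_tX_t)\in\C$, so $\C$ is an $\F$-vector space.

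First I would pin down the dimension count. Write $q^k=|\C|$, so $\dim_{\F_q}\C = k$ and $k = s\cdot\dim_{\F}\C$, hence $s \mid k$. For an MSRD code with $m_i\le n_i$, the Singleton-like bound is attained, so $k = \sum_{i=j}^t m_in_i - n_j\delta$ with $d-1 = \sum_{i=1}^{j-1}m_i+\delta$. The second ingredient is that the maximum-weight codeword $(C_1,\dots,C_t)$ generates, under the $\F$-action, a subcode whose $\F_q$-dimension is at least $s\cdot$ (number of its nonzero blocks) $= st$, because each $D_iC_i$ ranges over an $\F_q$-space of dimension $s\,\rk(C_i)/m_i$... — more carefully, the map $D\mapsto D_iC_i$ is injective on the $i$-th coordinate of $\F$ when $C_i$ has full column rank, but since $\F$ is a field embedded diagonally, $D\mapsto (D_1C_1,\dots,D_tC_t)$ is injective, giving an $\F$-subcode of $\C$ of $\F$-dimension $1$, i.e. $\F_q$-dimension $s$. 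That alone only gives $s\le k$. To get the sharper bound $s\le m_t$ I would instead argue on a single block: project $\C$ to a cleverly chosen block or use that $\lid(\C)\subseteq \F_{q^{m_1+\dots+m_t}}$ acts and the $t$-th block $\iota_t(\C)$ has minimum rank-distance bounded so that the left idealiser of $\iota_t(\C)$, a field, has size at most $q^{m_t}$ by the classical rank-metric bound (Lunardon–Trombetti–Zhou, \cite{lunardon2018nuclei}).

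So the cleanest route: show $\lid(\C)$ embeds into $\lid(\iota_t(\C))$ (the left idealiser of the $t$-th component as a rank-metric code), or rather that the $t$-th block projection $\pi_t:(D_1,\dots,D_t)\mapsto D_t$ restricted to $\lid(\C)$ is injective — this uses nondegeneracy of $\iota_t(\C)$, exactly as in the proof of Proposition~\ref{prop:nondegleftrightideal} where $A_{i,j}C_i=O$ forces triviality. Then $D_t\mapsto D_t\iota_t(\C)$ makes $D_t$ an element of the left idealiser of the rank-metric code $\iota_t(\C)\subseteq\F_q^{m_t\times n_t}$, which when it is a field has order at most $q^{m_t}$ by the classical result on idealisers of rank-metric codes; since $\C$ is MSRD, $\iota_t(\C)$ has small minimum distance relative to its block size so that bound applies. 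The equal-parameter refinement $|\lid(\C)|\le q^m$ is then immediate since $m_t=m$. The analogous argument with the first block and right multiplication gives the statement for $\rid(\C)$, replacing $m_t$ by $n_t$ and $m$ by $n$.

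The main obstacle I anticipate is justifying that the relevant component $\iota_t(\C)$ (resp.\ $\iota_1(\C)$) really does fall into the regime where the classical rank-metric idealiser-size bound holds — one needs its minimum rank distance to be large enough relative to $m_t$ for $\lid$ to be a field, or alternatively one must show directly from the MSRD structure that $\iota_t(\C)$ is itself an MRD code (or close to it) in $\F_q^{m_t\times n_t}$, so that $|L(\iota_t(\C))|\le q^{m_t}$. Making the projection-injectivity and the inheritance of the field/MRD structure precise — and handling the possibility that several blocks have the same dimension, so that the "last" block is not canonically determined — is where the care is needed; everything else is bookkeeping with the Singleton-like bound.
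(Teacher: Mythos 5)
Your overall plan---project $\lid(\C)$ onto a single block and bound a field of matrices in $\F_q^{m_t\times m_t}$---can be made to work, but as written it has two genuine gaps, both stemming from routing the argument through properties of the component code $\iota_t(\C)$ that you neither need nor establish. First, the injectivity of the projection $(D_1,\ldots,D_t)\mapsto D_t$ on $\lid(\C)$ does \emph{not} follow from nondegeneracy of $\iota_t(\C)$ ``as in Proposition~\ref{prop:nondegleftrightideal}'': that proposition kills the \emph{off-diagonal} blocks of $L(\psi(\C))$, but says nothing that prevents a nonzero $(D_1,\ldots,D_{t-1},0)$ from lying in $\lid(\C)$. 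The correct reason is the field hypothesis itself: $(I_{m_1},\ldots,I_{m_t})\in\lid(\C)$ is the identity of the field, so every nonzero $D\in\lid(\C)$ has a multiplicative inverse $D'$ with $D_iD_i'=I_{m_i}$ for each $i$; hence every block of every nonzero element is invertible, which gives both injectivity of the projection and the fact (used by the paper) that every nonzero element has sum-rank weight $m_1+\cdots+m_t$. Second, your final step invokes ``the classical rank-metric idealiser-size bound'' for $L(\iota_t(\C))$ and then worries, correctly, that you cannot verify its hypotheses (that $\iota_t(\C)$ is MRD, or that its minimum distance is large enough for $L(\iota_t(\C))$ to be a field). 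That worry is fatal to the route as stated, but also unnecessary: you never need $L(\iota_t(\C))$ to be a field. The image of $\lid(\C)$ under the projection is already a field of matrices in $\F_q^{m_t\times m_t}$ containing $I_{m_t}$ (a nonzero homomorphic image of a field), and any such field makes $\F_q^{m_t}$ a vector space over it, forcing its size to divide $q^{m_t}$. No MSRD structure of $\C$ or of its components is required. Your opening dimension-count via $s\mid k$ is, as you note yourself, a dead end and should be dropped.

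For comparison, the paper's proof avoids the projection entirely: it observes that $\lid(\C)\subseteq\Mat(\bfm,\bfm,\F_q)$ is itself a sum-rank metric code all of whose nonzero elements have weight $m_1+\cdots+m_t$ (by the blockwise-invertibility argument above), and then applies the Singleton-like bound of Theorem~\ref{th:SingletonboundmatrixRav} directly to $\lid(\C)$ with $j=t$ and $\delta=m_t-1$, which yields $|\lid(\C)|\leq q^{m_t}$ in one line; the right idealiser is handled symmetrically with $\delta=n_t-1$. Your single-block reduction is a legitimate alternative once the two gaps are repaired, and is arguably more elementary in that it only uses the classical fact about finite fields embedded in a matrix algebra rather than the full sum-rank Singleton bound, but the version you wrote does not yet close.
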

\begin{proof}
    Note that $\lid(\C) \subseteq \Mat(\bfm,\bfm,\fq)$ and, since $\lid(\C)$ is a field, we have that every nonzero element in $\lid(\C)$ has weight $m_1+\ldots+m_t$.
    Therefore, we can apply the Singleton-like bound (cf. Theorem \ref{th:SingletonboundmatrixRav}) on $\lid(\C)$ and then we obtain the statement (where $j=t$ and $\delta=m_t-1$ for $\lid(\C)$).
    Similar arguments can be performed on $\rid(\C)$ (in this case $\delta=n_t-1$).
\end{proof}

\subsection{Generalised idealisers}\label{sec:genidea}


In this subsection we will extend the notion of idealiser seen before.
Roughly speaking, we can define different types of idealisers according to whether we multiply the entries of the codewords on the right or on the left. Clearly, if we multiply all the entries of the codewords on the left, we obtain the notion of left idealiser, and if we multiply on the right, we obtain the right idealiser seen above.
To introduce them, we need some preliminary notation. 
Consider the space $\spacmn$. Let $\mathbf{s}\in \mathbb{Z}_2^t$ and consider
\[ \Pi_{\mathbf{s}}=\oplus_{i \in [t]} \mathrm{Mat}(s_i m_i+(1-s_i)n_i,s_i m_i+(1-s_i)n_i,\F_q), \]
that is we consider the direct sum of spaces of square matrices of order $m_i\times m_i$ if $s_i=1$ and of order $n_i\times n_i$ if $s_i=0$.
We can now define the following map
\[ \circ_{\mathbf{s}}\colon \Pi_{\mathbf{s}} \times \spacmn \rightarrow \spacmn, \]
and
\[ (A_1,\ldots,A_t)\circ_{\mathbf{s}}(C_1,\ldots,C_t)=(B_1,\ldots,B_t), \]
where
\[ B_i=\begin{cases}
A_iC_i & \text{if } s_i=1,\\
C_iA_i & \text{if } s_i=0.\\
\end{cases}\]
Clearly, if $\C$ is a subset of $\spacmn$ and $(A_1,\ldots,A_t) \in \Pi_{\mathbf{s}}$ then
\[ (A_1,\ldots,A_t)\circ_{\mathbf{s}}\C=\{ (A_1,\ldots,A_t)\circ_{\mathbf{s}}(C_1,\ldots,C_t) \colon (C_1,\ldots,C_t)\in \C \}. \]

We are now ready to define the generalised idealisers.

\begin{definition}
    Let $\C$ be a sum-rank metric code in $\spacmn$ and let $\mathbf{s}\in \mathbb{Z}_2^t$. The \textbf{$\mathbf{s}$-idealiser} of $\C$ is 
 \[
 \mathcal{I}_{\mathbf{s}}(\C):=\left\{(A_1,\ldots,A_t) \in \Pi_{\mathbf{s}} \colon (A_1,\ldots,A_t)\circ_{\mathbf{s}}\C \subseteq \C \right\}.
 \]
\end{definition}

We note that, in the case where $\mathbf{s}=(1,\ldots,1)$ then $\mathcal{I}_{\mathbf{s}}(\C)=\mathcal{I}_\ell(\C)$ and if $\mathbf{s}=(0,\ldots,0)$ then $\mathcal{I}_{\mathbf{s}}(\C)=\mathcal{I}_r(\C)$.
We show an example in terms of skew polynomials as it is easier to describe.

\begin{example}
    Let $m \in \NN$ be an even number. Let $\theta \colon \alpha \in \F_{q^m} \mapsto \alpha^q \in \F_{q^m}$.
    In the skew group algebra $\fqm[\theta]$, we consider $f=\mathrm{id}+\theta+\ldots+\theta^{m-1}$ and $g=\mathrm{id}+\theta^2+\ldots+\theta^{m-2}$.
    Consider $\C=\langle f\rangle_{\F_{q^{m}}}\oplus \langle g\rangle_{\F_{q^{m/2}}} \subseteq (\F_{q^m}[\theta])^2$.
    Observe that 
    \[R(\langle f\rangle_{\F_{q^{m}}})=\{ (\alpha-f_{m-1}^q-\ldots-f_1^{q^{m-1}})\mathrm{id}+f_1\theta+\ldots+f_{m-1}\theta^{m-1} \colon\]\[ \alpha \in \fq, f_1,\ldots,f_{m-1} \in \F_{q^{m}} \}\]
    and 
    \[R(\langle g\rangle_{\F_{q^{m/2}}})=\{ (\alpha-g_{m/2-1}^{q^2}-\ldots-g_2^{q^{m/2-1}})\mathrm{id}+(-g_{m-1}^{q^2}-g_{m-3}^{q^4}-\ldots-g_3^{q^{m-2}})\theta+g_2\theta^2+\ldots+g_{m-1}\theta^{m-1} \colon \]\[ \alpha \in \F_{q^2}, g_2,\ldots,g_{m-1} \in \F_{q^{m}} \}.\]
    Therefore, their sizes are 
    \[ |R(\langle f\rangle_{\F_{q^{m}}})|=q^{1+m(m-1)} \,\,\,\text{and}\,\,\, |R(\langle g\rangle_{\F_{q^{m/2}}})|=q^{2+m(m-2)}. \] 
    Moreover, $L(\langle f\rangle_{\F_{q^{m}}})\simeq \F_{q^{m}}$ and $L(\langle g\rangle_{\F_{q^{m/2}}})\simeq \F_{q^{m/2}}$.
Straightforward calculations show that
    \begin{itemize}
        \item $\mathcal{I}_{\ell}(\C) = L(\langle f\rangle_{\F_{q^{m}}}) \oplus L(\langle g\rangle_{\F_{q^{m/2}}}) $;
        \item $\mathcal{I}_{r}(\C) = R(\langle f\rangle_{\F_{q^{m}}}) \oplus R(\langle g\rangle_{\F_{q^{m/2}}}) $;
        \item $\mathcal{I}_{(1,0)}(\C)=L(\langle f\rangle_{\F_{q^{m}}}) \oplus R(\langle g\rangle_{\F_{q^{m/2}}}) $;
        \item $\mathcal{I}_{(0,1)}(\C)=R(\langle f\rangle_{\F_{q^{m}}}) \oplus L(\langle g\rangle_{\F_{q^{m/2}}}) $.
    \end{itemize}
Comparing their sizes, it is easy to see that the idealisers $\mathcal{I}_{\ell}(\C), \mathcal{I}_{r}(\C), \mathcal{I}_{(1,0)}(\C)$ and $\mathcal{I}_{(0,1)}(\C)$ are not isomorphic.
\end{example}

As it happens for the idealisers in the rank-metric, we can equip $\mathcal{I}_{\mathbf{s}}(\C)$ with entry-sum of matrices, entry-wise product of matrices obtaining that $\mathcal{I}_{\mathbf{s}}(\C)$ is a ring.

\begin{proposition}\label{prop:equivcodeimpliesequivideal}
    Let $\C_1$ and $\C_2$ be sum-rank metric codes in $\spacmn$ and let $\mathbf{s}\in \mathbb{Z}_2^t$. If $\C_1$ and $\C_2$ are equivalent, then $\mathcal{I}_{\mathbf{s}}(\C_1)$ and $\mathcal{I}_{\mathbf{s}}(\C_2)$ are isomorphic.
\end{proposition}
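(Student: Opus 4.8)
The plan is to turn the equivalence into an explicit blockwise formula and then transport generalised idealisers through it by conjugating each block. By \Cref{th:classificationadditiveisometries} together with the standing convention that ``equivalent'' means strongly equivalent (no transpositions), an isometry $\varphi$ with $\varphi(\C_1)=\C_2$ has the shape
$\varphi((X_1,\dots,X_t))=(\psi_1(X_{\pi(1)}),\dots,\psi_t(X_{\pi(t)}))$,
where $\pi\in\mathcal{S}_t$ (which can only permute blocks of equal shape, so $m_{\pi(i)}=m_i$ and $n_{\pi(i)}=n_i$) and $\psi_i(X)=P_iX^{\rho_i}Q_i$ with $P_i\in\GL(m_i,q)$, $Q_i\in\GL(n_i,q)$ and $\rho_i\in\Aut(\F_q)$. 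The computational core is the pair of identities $\psi_i(AX)=(P_iA^{\rho_i}P_i^{-1})\,\psi_i(X)$ and $\psi_i(XA)=\psi_i(X)\,(Q_i^{-1}A^{\rho_i}Q_i)$, which hold because $\rho_i$ acts entrywise and commutes with matrix multiplication.

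Now fix $\mathbf{s}=(s_1,\dots,s_t)\in\mathbb{Z}_2^t$ and set $\mathbf{t}=(s_{\pi(1)},\dots,s_{\pi(t)})\in\mathbb{Z}_2^t$. Given $(A_1,\dots,A_t)\in\mathcal{I}_{\mathbf{s}}(\C_1)$, I would define $(B_1,\dots,B_t)\in\Pi_{\mathbf{t}}$ by $B_i=P_iA_{\pi(i)}^{\rho_i}P_i^{-1}$ when $s_{\pi(i)}=1$ and $B_i=Q_i^{-1}A_{\pi(i)}^{\rho_i}Q_i$ when $s_{\pi(i)}=0$; the shape conditions above make these products well defined. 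Applying the two identities blockwise yields the key equality $\varphi\big((A_1,\dots,A_t)\circ_{\mathbf{s}}C\big)=(B_1,\dots,B_t)\circ_{\mathbf{t}}\varphi(C)$ for every $C\in\C_1$. Since $(A_1,\dots,A_t)\circ_{\mathbf{s}}C\in\C_1$ and $\varphi(\C_1)=\C_2$, the right-hand side lies in $\C_2$, and as $C$ ranges over $\C_1$ the element $\varphi(C)$ ranges over $\C_2$; hence $(B_1,\dots,B_t)\in\mathcal{I}_{\mathbf{t}}(\C_2)$. The assignment $\Phi\colon(A_1,\dots,A_t)\mapsto(B_1,\dots,B_t)$ is, componentwise, an entrywise field automorphism followed by conjugation by a fixed invertible matrix, together with a relabelling of the blocks by $\pi^{-1}$; each ingredient is a ring isomorphism of the relevant matrix ring (respectively of $\Pi_{\mathbf{s}}$ onto $\Pi_{\mathbf{t}}$), so $\Phi$ is a ring homomorphism, it is injective because every component operation is invertible, and it is surjective because running the same construction for $\varphi^{-1}$ (an isometry of the same type, with permutation $\pi^{-1}$) yields its two-sided inverse. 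Therefore $\mathcal{I}_{\mathbf{s}}(\C_1)\cong\mathcal{I}_{\mathbf{t}}(\C_2)$ as rings.

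Finally I would reconcile this with the statement. When $\pi=\mathrm{id}$ one has $\mathbf{t}=\mathbf{s}$, and likewise $\mathbf{t}=\mathbf{s}$ when $\mathbf{s}=(0,\dots,0)$ or $\mathbf{s}=(1,\dots,1)$, so in particular the left and right idealisers are preserved; in general, since $\mathbf{s}\mapsto\mathbf{s}\circ\pi$ permutes $\mathbb{Z}_2^t$, the whole family $\{\mathcal{I}_{\mathbf{s}}(\C_1):\mathbf{s}\in\mathbb{Z}_2^t\}$ coincides up to isomorphism with $\{\mathcal{I}_{\mathbf{s}}(\C_2):\mathbf{s}\in\mathbb{Z}_2^t\}$. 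The step needing the most care — and where the no-transposition hypothesis genuinely enters — is the blockwise verification of $\varphi((A_1,\dots,A_t)\circ_{\mathbf{s}}C)=(B_1,\dots,B_t)\circ_{\mathbf{t}}\varphi(C)$, since one must simultaneously track the permutation $\pi$, the automorphisms $\rho_i$, and, block by block, whether $\circ_{\mathbf{s}}$ multiplies on the left or on the right. If a transposition were allowed in some block, that $\psi_i$ would become $X\mapsto P_i(X^{\rho_i})^{\top}Q_i$, interchanging left and right multiplication there; the argument would still go through, but with the bit $s_{\pi(i)}$ also flipped in those positions, which is exactly why the clean statement is phrased for strong equivalence.
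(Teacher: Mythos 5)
Your proof takes the same route as the paper's: write the strong equivalence in the explicit blockwise form of Theorem~\ref{th:classificationadditiveisometries} and transport an element of the idealiser by applying the entrywise field automorphism and then conjugating by $P_i$ (left-multiplication blocks) or $Q_i$ (right-multiplication blocks). The computation is correct, and your bookkeeping is in fact more careful than the paper's on two points. First, the paper's formula \eqref{eq:Ei's} writes $E_i=A_iD_{\pi(i)}A_i^{-1}$ without applying the automorphism $\sigma_i$ to $D_{\pi(i)}$; your $B_i=P_iA_{\pi(i)}^{\rho_i}P_i^{-1}$ is the correct expression. Second, and more substantively, your observation that the transported tuple lands in $\mathcal{I}_{\mathbf{t}}(\C_2)$ with $\mathbf{t}=(s_{\pi(1)},\dots,s_{\pi(t)})$ is a genuine subtlety that the paper's proof silently elides: what the argument actually establishes is $\mathcal{I}_{\mathbf{s}}(\C_1)\cong\mathcal{I}_{\mathbf{s}\circ\pi}(\C_2)$, which yields the literal statement only when $\pi=\mathrm{id}$, when $\mathbf{s}$ is constant (so in particular for $\lid$ and $\rid$, which is how the proposition is mainly used), or when one quantifies over the whole family $\{\mathcal{I}_{\mathbf{s}}\}_{\mathbf{s}\in\mathbb{Z}_2^t}$ as you do. For non-constant $\mathbf{s}$ and a nontrivial block permutation the statement as written can actually fail: in the paper's own two-block example one has $|\mathcal{I}_{(1,0)}(\C)|\neq|\mathcal{I}_{(0,1)}(\C)|$ for $m\geq 4$, so swapping the two (equal-sized) blocks produces an equivalent code $\C'$ with $\mathcal{I}_{(1,0)}(\C')\not\cong\mathcal{I}_{(1,0)}(\C)$. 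So there is no gap in your argument; rather, your closing caveat identifies a correction that the proposition itself needs.
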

\begin{proof}
    By hypothesis, $\C_1$ and $\C_2$ are equivalent, so there exist $(A_1,\ldots,A_t) \in \Mat(\bfm,\bfm,\K)$, $(B_1,\ldots,B_t) \in \Mat(\bfn,\bfn,\K)$, $(\sigma_1,\ldots,\sigma_t) \in (\Aut(\K))^t$ and a permutation $\pi \in \mathcal{S}_t$ such that 
    \[
    \C_2=\{(A_1X_{\pi(1)}^{\sigma_1}B_1,\ldots,A_tX_{\pi(t)}^{\sigma_t}B_t)  \colon (X_1,\ldots,X_t) \in \C_1\}.
    \]
    Now, note that an element $(D_1,\ldots,D_t) \in \Pi_{\mathbf{s}}$ belongs to the idealisers $I_{\mathbf{s}}(\C_1)$ if and only if $(E_1,\ldots,E_t)$ belongs to $I_{\mathbf{s}}(\C_2)$, where
   \begin{equation}\label{eq:Ei's} E_i=\begin{cases}
    A_{i} D_{\pi(i)} A_{i}^{-1}, & \text{if } s_{i}=1,\\
    B_{i}^{-1} D_{\pi(i)} B_{i}, & \text{if } s_{i}=0.
    \end{cases}\end{equation} 
    This means that
    \[
    I_{\mathbf{s}}(\C_2)= \left\{(E_1,\ldots,E_t) \colon \text{where the } E_i\text{'s are defined as in \eqref{eq:Ei's}}\right\},
    \]
    implying that $I_{\mathbf{s}}(\C_2)$ and $I_{\mathbf{s}}(\C_2)$ are isomorphic as rings.
\end{proof}

The above result shows that the idealisers can be used as a distinguisher for sum-rank metric codes.

\begin{corollary}\label{cor:codeequiv->idealequiv}
    Let $\C_1$ and $\C_2$ be sum-rank metric codes in $\spacmn$. If there exists $\mathbf{s}\in \mathbb{Z}_2^t$ such that $\mathcal{I}_{\mathbf{s}}(\C_1)$ and $\mathcal{I}_{\mathbf{s}}(\C_2)$ are not isomorphic then $\C_1$ and $\C_2$ are not equivalent.
\end{corollary}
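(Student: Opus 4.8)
The plan is to observe that this is simply the contrapositive of Proposition~\ref{prop:equivcodeimpliesequivideal}, so essentially no new work is required. First I would suppose, for contradiction, that $\C_1$ and $\C_2$ are equivalent. Then Proposition~\ref{prop:equivcodeimpliesequivideal} applies verbatim: for \emph{every} choice of $\mathbf{s}\in\mathbb{Z}_2^t$, the $\mathbf{s}$-idealisers $\mathcal{I}_{\mathbf{s}}(\C_1)$ and $\mathcal{I}_{\mathbf{s}}(\C_2)$ are isomorphic as rings. In particular this holds for the specific $\mathbf{s}$ furnished by the hypothesis, contradicting the assumption that $\mathcal{I}_{\mathbf{s}}(\C_1)$ and $\mathcal{I}_{\mathbf{s}}(\C_2)$ are not isomorphic.

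Concretely, I would spell out the single implication chain: equivalence of $\C_1,\C_2$ $\Rightarrow$ existence of the data $(A_i)$, $(B_i)$, $(\sigma_i)$, $\pi$ conjugating one code to the other $\Rightarrow$ (via the conjugation formula \eqref{eq:Ei's}) a ring isomorphism $\mathcal{I}_{\mathbf{s}}(\C_1)\to\mathcal{I}_{\mathbf{s}}(\C_2)$ for each $\mathbf{s}$. Instantiating at the given $\mathbf{s}$ yields the desired contradiction, so $\C_1$ and $\C_2$ cannot be equivalent.

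There is no genuine obstacle here: the entire content has already been established in Proposition~\ref{prop:equivcodeimpliesequivideal}, and the corollary is a routine restatement in contrapositive form. The only thing worth emphasising in the write-up is that the implication in Proposition~\ref{prop:equivcodeimpliesequivideal} is uniform in $\mathbf{s}$, which is exactly what licenses the conclusion from the existence of a \emph{single} bad $\mathbf{s}$. I would keep the proof to two or three sentences.

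\begin{proof}
This is the contrapositive of Proposition~\ref{prop:equivcodeimpliesequivideal}. Indeed, suppose that $\C_1$ and $\C_2$ are equivalent. Then, by Proposition~\ref{prop:equivcodeimpliesequivideal}, the idealisers $\mathcal{I}_{\mathbf{t}}(\C_1)$ and $\mathcal{I}_{\mathbf{t}}(\C_2)$ are isomorphic for every $\mathbf{t}\in\mathbb{Z}_2^t$; in particular this holds for the element $\mathbf{s}$ given in the statement, contradicting the hypothesis that $\mathcal{I}_{\mathbf{s}}(\C_1)$ and $\mathcal{I}_{\mathbf{s}}(\C_2)$ are not isomorphic. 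Hence $\C_1$ and $\C_2$ are not equivalent.
\end{proof}
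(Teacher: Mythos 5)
Your proof is correct and is exactly the intended argument: the paper states this corollary without proof as an immediate consequence of Proposition~\ref{prop:equivcodeimpliesequivideal}, and your contrapositive reading is precisely that consequence. Nothing is missing.
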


\begin{remark}
Note that if $\C$ is also $\F_{p^i}$-linear then the left idealiser contains a subring isomorphic to $\F_{p^i}$. This is due to the fact that one can consider the set $\mathcal{M}$ of the matrices  $M_{\alpha} \in \Mat(\bfm,\bfm,\fq)$ with $\alpha \in \F_{p^i}$, where \[
M_{\alpha}=\alpha(I_{m_1},\ldots,I_{m_t}).
\] Clearly, $\mathcal{M}\subseteq \mathcal{I}_{\ell}(\C)$.  Moreover, the code $\C$ can be seen as a left $\lid(\C)$-module and a right $\rid(\C)$-module.
\end{remark}

Concerning the effects of $\mathbf{v}$-adjoint and dual operation, we have these results on idealisers.

\begin{proposition} \label{prop:dualtranspideal}
Let $\C$ be an additive sum-rank metric code in $\spacmn$. 
For any $\mathbf{s},\mathbf{v}\in \mathbb{Z}_2^t$ we have that
\[ \mathcal{I}_{\mathbf{s}}(\C^{\mathbf{v}})=\mathcal{I}_{\mathbf{s}-\mathbf{v}}(\C)^{\mathbf{v}} \text{ and } \mathcal{I}_{\mathbf{s}}(\C^{\perp})=\mathcal{I}_{\mathbf{s}}(\C)^{\mathbf{1}}, \]
where $\mathbf{1}=(1,\ldots,1)\in \Z_2^t$.
\end{proposition}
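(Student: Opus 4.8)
The plan is to verify the two identities separately by unwinding the definitions of the $\mathbf{s}$-idealiser and of the operations $(\cdot)^{\mathbf v}$ and $(\cdot)^\perp$ blockwise, reducing everything to two elementary facts about matrices: first, that for matrices $A,X$ (of compatible sizes) one has $(AX)^\top = X^\top A^\top$, so that left multiplication on $X$ corresponds to right multiplication on $X^\top$; and second, that the dual with respect to the trace form $\sum_i \mathrm{Tr}_{q/p}(\mathrm{Tr}(X_iY_i^\top))$ interacts with one-sided multiplication via adjunction of the multiplier. Since all three operations ($\mathbf{s}$-idealiser, $\mathbf v$-adjoint, duality) act independently on each block, it suffices to treat a single block $i$, and then the statement follows by taking the direct sum over $i\in[t]$.

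For the first identity $\mathcal{I}_{\mathbf s}(\C^{\mathbf v}) = \mathcal{I}_{\mathbf{s}-\mathbf v}(\C)^{\mathbf v}$ (with subtraction taken in $\mathbb{Z}_2^t$, i.e.\ coordinatewise XOR), fix a block $i$ and consider an element $(A_1,\dots,A_t)\in\Pi_{\mathbf s}$. By definition $(A_1,\dots,A_t)\circ_{\mathbf s}\C^{\mathbf v}\subseteq\C^{\mathbf v}$ says that for every $(X_1,\dots,X_t)\in\C$ there is $(X_1',\dots,X_t')\in\C$ with $B_i = (X_i')^{v_i}$ for each $i$, where $B_i = A_iX_i^{v_i}$ if $s_i=1$ and $B_i = X_i^{v_i}A_i$ if $s_i=0$. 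I would now apply $(\cdot)^{v_i}$ to both sides of $B_i=(X_i')^{v_i}$, i.e.\ transpose when $v_i=1$ and do nothing when $v_i=0$, using $(PQ)^\top=Q^\top P^\top$. The key bookkeeping is the case analysis on the pair $(s_i,v_i)$: applying $(\cdot)^{v_i}$ turns "multiply on the side dictated by $s_i$" into "multiply on the side dictated by $s_i\oplus v_i = (\mathbf s-\mathbf v)_i$" and simultaneously replaces $A_i$ by $A_i^{v_i}$. One checks that in all four cases this is exactly the condition that $(A_1^{v_1},\dots,A_t^{v_t})\in\mathcal{I}_{\mathbf{s}-\mathbf v}(\C)$, i.e.\ $(A_1,\dots,A_t)\in\mathcal{I}_{\mathbf{s}-\mathbf v}(\C)^{\mathbf v}$. (One should note that $\Pi_{\mathbf s}$ for $\C^{\mathbf v}$ and $\Pi_{\mathbf{s}-\mathbf v}$ for $\C$ consist of square matrices of the same orders in each block, so the $\mathbf v$-adjoint of a tuple in $\Pi_{\mathbf{s}-\mathbf v}$ indeed lands in $\Pi_{\mathbf s}$ after taking transposes of the ambient dimensions correctly; this is where the standing hypothesis that the blocks of $\C$ and $\C^{\mathbf v}$ have matching square-multiplier spaces is used.)

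For the second identity $\mathcal{I}_{\mathbf s}(\C^\perp)=\mathcal{I}_{\mathbf s}(\C)^{\mathbf 1}$, again work one block at a time. Here the relevant fact is that for the bilinear form $\langle X,Y\rangle = \sum_i\mathrm{Tr}_{q/p}(\mathrm{Tr}(X_iY_i^\top))$ one has, for any square matrix $A$ of the appropriate order, $\langle (A_1X_1^{v_1},\dots),\,(Y_1,\dots)\rangle$ relating to $\langle (X_1,\dots),\,(A_1^\top Y_1,\dots)\rangle$-type adjunction identities; concretely, $\mathrm{Tr}((AX)Y^\top) = \mathrm{Tr}(X(A^\top Y)^\top)$ and $\mathrm{Tr}((XA)Y^\top) = \mathrm{Tr}(X(YA^\top)^\top)$. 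Thus if $(A_1,\dots,A_t)\in\mathcal{I}_{\mathbf s}(\C)$, i.e.\ left (resp.\ right) multiplication by $A_i$ preserves $\C$ in the $s_i=1$ (resp.\ $s_i=0$) blocks, then transposed multiplication by $A_i^\top$ preserves $\C^\perp$, but the side switches: left multiplication on $\C$ becomes right multiplication on $\C^\perp$ after transposing the multiplier, and vice versa. This is precisely the statement that $(A_1^\top,\dots,A_t^\top)$, viewed with the complemented exponent vector, lies in $\mathcal{I}_{\mathbf s}(\C^\perp)$ — wait, more carefully: the side flip together with the transpose of the multiplier is exactly the $\mathbf 1$-adjoint operation on the idealiser tuple combined with keeping $\mathbf s$ fixed, because $\Pi_{\mathbf s}$ is stable under blockwise transposition (each block is a space of square matrices). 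Spelling out the four cases of $s_i\in\{0,1\}$ confirms $\mathcal{I}_{\mathbf s}(\C^\perp)=\mathcal{I}_{\mathbf s}(\C)^{\mathbf 1}$; the containment $\subseteq$ follows by applying the same argument to $\C^\perp$ using $(\C^\perp)^\perp=\C$.

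I expect the main obstacle to be purely organizational rather than conceptual: getting the case analysis on $(s_i,v_i)$ — and on $s_i$ in the dual case — completely correct, including the direction of the side switch and making sure the transpose lands on the multiplier (not the codeword) in the final condition, and checking that the ambient square-matrix spaces $\Pi_{\mathbf s}$, $\Pi_{\mathbf{s}-\mathbf v}$ match up under transposition of block orders. A secondary subtlety is that the definition of $\mathcal{I}_{\mathbf s}$ is stated with the one-sided containment $(A_1,\dots,A_t)\circ_{\mathbf s}\C\subseteq\C$; since each $A_i$ need not be invertible a priori, one cannot immediately argue with equality, but the adjunction/transpose identities are themselves equalities of matrices, so the containment is preserved in both directions and no invertibility is needed.
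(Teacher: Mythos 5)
Your proposal is correct and follows essentially the same route as the paper: the first identity comes from the blockwise relation $((A_1,\ldots,A_t)\circ_{\mathbf{s}}(C_1,\ldots,C_t))^{\mathbf{v}}=(A_1,\ldots,A_t)^{\mathbf{v}}\circ_{\mathbf{s}-\mathbf{v}}(C_1,\ldots,C_t)^{\mathbf{v}}$ (i.e.\ $(AX)^{\top}=X^{\top}A^{\top}$ in each block), and the second from the trace adjunction identities together with the double inclusion obtained by running the argument on $\C^{\perp}$ and using $(\C^{\perp})^{\perp}=\C$, exactly as in the paper. One small caution: in your discussion of the dual you assert that ``the side switches,'' but your own displayed identities $\mathrm{Tr}((AX)Y^{\top})=\mathrm{Tr}(X(A^{\top}Y)^{\top})$ and $\mathrm{Tr}((XA)Y^{\top})=\mathrm{Tr}(X(YA^{\top})^{\top})$ show that the side is in fact preserved and only the multiplier is transposed --- which is precisely why the index $\mathbf{s}$ is unchanged and the result is $\mathcal{I}_{\mathbf{s}}(\C)^{\mathbf{1}}$ rather than $\mathcal{I}_{\mathbf{1}-\mathbf{s}}(\C)^{\mathbf{1}}$; the identities (and your final conclusion) are right, so only that sentence of the narrative needs correcting.
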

\begin{proof}
For the first equality you need just to observe that if $(A_1,\ldots,A_t) \in \mathcal{I}_{\mathbf{s}}(\C^{\mathbf{v}})$, then for any $(C_1,\ldots,C_t)\in \C$ we have
\[ (A_1,\ldots,A_t)\circ_{\mathbf{s}}(C_1,\ldots,C_t) \in \C, \]
and so
\[ ((A_1,\ldots,A_t)\circ_{\mathbf{s}}(C_1,\ldots,C_t))^{\mathbf{v}}=(A_1,\ldots,A_t)^{\mathbf{v}}\circ_{\mathbf{s}-\mathbf{v}}(C_1,\ldots,C_t)^\mathbf{v} \in \C^{\mathbf{v}}. \]
For the second equality, consider $(A_1,\ldots,A_t) \in \mathcal{I}_{\mathbf{s}}(\C)$. Then for any $(D_1,\ldots,D_t) \in \C^\perp$ and $(C_1,\ldots,C_t)\in \C$, using classical properties of the trace of a matrix, we have
\[ \sum_{i=1}^t \mathrm{Tr}_{q/p}(\mathrm{Tr}(C_i(A_i^\top \circ_{s_i} D_i)^\top))=\sum_{i=1}^t \mathrm{Tr}_{q/p}(\mathrm{Tr}((A_i\circ_{s_i} C_i)D_i^\top))=0, \]
since $(A_1,\ldots,A_t)\circ_{\mathbf{s}}(C_1,\ldots,C_t) \in \C$.
Therefore, $(A_1,\ldots,A_t)^{\mathbf{1}} \in \mathcal{I}_{\mathbf{s}}(\C^\perp)$ and so $\mathcal{I}_{\mathbf{s}}(\C)^{\mathbf{1}} \subseteq \mathcal{I}_{\mathbf{s}}(\C^{\perp})$.
Using the same argument on $\C^\perp$, we obtain $\mathcal{I}_{\mathbf{s}}(\C^{\perp})^{\mathbf{1}} \subseteq \mathcal{I}_{\mathbf{s}}(\C)$, from which we can derive the equality.
\end{proof}

For MRD codes, under some conditions of the ambient space, it has been shown that their left and right idealisers are fields. This mirrors a well-known feature of division algebras, where both the left and middle nuclei are fields. In what follows, we prove that the generalised idealisers of MSRD codes are also fields as well.  

We first analyse the case of left and right idealisers.

\begin{proposition}
    Let $\C$ be a sum-rank metric code in $\spacmn$ and assume that $n=n_1=\ldots=n_t$, $m=m_1=\ldots=m_t$, $m \leq n$ and $t\leq q-1$, and let $d>1$ be its minimum distance. We have that $\lid(\C)$ is a field.
    Moreover, if $\max\{ d, tm-d+2 \}\geq \left\lfloor \frac{tn}2 \right\rfloor +1$ then $\rid(\C)$ is a field.
\end{proposition}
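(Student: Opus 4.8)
The plan is to reduce the statement to the rank-metric situation via the block-matrix isometry $\psi$, and then invoke the known results about idealisers of rank-metric codes cited in the excerpt. First I would observe that since $\C$ is an MSRD code with $t\le q-1$ and all blocks of equal size, Corollary~\ref{cor:MSRDcodesarenondeg} applies and $\C$ is nondegenerate; hence by Proposition~\ref{prop:nondegleftrightideal} (and the subsequent identification) we have $\lid(\C)=\psi^{-1}(L(\psi(\C)))$ and $\rid(\C)=\psi^{-1}(R(\psi(\C)))$, where $\psi(\C)\subseteq\Mat(M,N,\F_q)$ is a rank-metric code with $M=tm$, $N=tn$. The block isometry $\psi$ preserves ranks, so $\psi(\C)$ has minimum rank-distance $d$, and it contains a codeword of full rank $M=tm$ by Proposition~\ref{prop:numberweightMSRD} (the maximum-weight codeword of $\C$ maps to a rank-$M$ matrix). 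So everything is now a statement about a rank-metric code in $\Mat(M,N,\F_q)$ with $M\le N$, minimum distance $d>1$, containing a codeword of full column rank.

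Next I would apply the rank-metric criteria for idealisers to be fields. For the left idealiser: the known result (here quoted via \cite[Lemmas 4.3 and 4.4]{lunardon2018nuclei} together with \cite[Theorem 3.1]{longobardi2024standard}, and reflected in our Proposition~\ref{prop:fieldideal}) says that if a rank-metric code in $\Mat(M,N,\F_q)$ with $M\le N$ contains a codeword of rank $M$ and has minimum distance $d\ge \lfloor M/2\rfloor+1$, then its left idealiser is a field. But here we want $\lid(\C)$ to be a field under the weaker hypothesis $d>1$ only. The key point is that for MSRD codes — equivalently, by the isometry, for codes meeting the Singleton-like bound in the block-matrix picture — one gets a much stronger statement: the left idealiser is automatically a field as soon as $d>1$. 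I would prove this directly: suppose $(D_1,\ldots,D_t)\in\lid(\C)$ is nonzero with some $D_{i_0}$ non-invertible; then pick a codeword $X=(X_1,\ldots,X_t)\in\C$ of weight $d$ supported so that $(D_1X_1,\ldots,D_tX_t)$ is a nonzero codeword of strictly smaller weight — using that $\rk(D_{i_0}X_{i_0})<\rk(X_{i_0})$ whenever $X_{i_0}$ has full row rank and $D_{i_0}$ is singular, and that MSRD-ness gives us enough freedom (via Proposition~\ref{prop:numberweightMSRD} and the structure of minimum-weight codewords) to realize such an $X$. This contradicts $d$ being the minimum distance, so every nonzero element of $\lid(\C)$ is invertible; combined with $\lid(\C)$ being a finite ring (which it is, being a ring of the form described in the Remark after Corollary~\ref{cor:codeequiv->idealequiv}) and commutativity (inherited from the rank-metric case, where $L(\psi(\C))$ embeds in $\F_{q^M}$ by \cite{lunardon2018nuclei} once it is a field, or argued directly since a finite ring in which every nonzero element is invertible is a division ring, hence a field by Wedderburn), we conclude $\lid(\C)$ is a field.

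For the right idealiser, the asymmetry $m\le n$ means $R(\psi(\C))\subseteq\Mat(N,N,\F_q)$ with $N=tn$, and a full-rank codeword of $\psi(\C)$ has rank $M=tm<N$, so the "contains a full-weight codeword" shortcut is not available directly. Here I would fall back on the distance hypothesis: I need $R(\psi(\C))$ to be a field, and the relevant rank-metric criterion (the right-idealiser analogue, obtained by passing to the dual or the adjoint and using Proposition~\ref{prop:dualtranspideal}, or directly from \cite{lunardon2018nuclei}) requires the minimum distance to be large enough relative to $N$. Passing to the dual code $\C^\perp$, which is again MSRD with minimum distance $tm-d+2$ and for which $\rid(\C^\perp)$ relates to $\lid(\C)$ (up to transpose, by Proposition~\ref{prop:dualtranspideal}), swaps the roles and lets me use whichever of $d$ or $tm-d+2$ is the larger; this is exactly why the hypothesis $\max\{d,\,tm-d+2\}\ge\lfloor tn/2\rfloor+1$ appears. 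So the final step is: assume $\max\{d,tm-d+2\}\ge\lfloor tn/2\rfloor+1$; if $d$ achieves the max, apply the right-idealiser field criterion to $\psi(\C)$ directly; if $tm-d+2$ achieves it, apply it to $\psi(\C^\perp)$ and transport back via Proposition~\ref{prop:dualtranspideal} to conclude $\rid(\C)$ is a field.

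The main obstacle I anticipate is the first bullet — showing $\lid(\C)$ is a field under only $d>1$, i.e., genuinely exploiting MSRD-ness rather than a distance bound. The delicate part is producing, for a hypothetical singular nonzero idealiser element, a codeword whose image under $\circ_{\mathbf{1}}$ is nonzero but of weight $<d$; this requires controlling the interaction between the singular block $D_{i_0}$ and the row spaces of the blocks of codewords, and here I would lean on Proposition~\ref{prop:numberweightMSRD} to guarantee a rich enough supply of low-weight codewords, together with the observation that a nondegenerate component $\iota_{i_0}(\C)$ cannot be annihilated on the left by any nonzero row vector (Proposition~\ref{prop:nondegrankmetric}), so $D_{i_0}X_{i_0}$ is not identically zero as $X$ ranges over $\C$, while its rank drops for suitable $X$. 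Making this rank drop compatible with staying inside $\C$ and below the minimum distance is the crux.
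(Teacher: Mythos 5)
Your treatment of the right idealiser is essentially the paper's own proof: split according to which of $d$ and $tm-d+2$ attains the maximum, apply the field criterion of \Cref{prop:fieldideal} to $\C$ or to $\C^\perp$ (which is again MSRD of minimum distance $tm-d+2$ and has a maximum-weight codeword by \Cref{prop:numberweightMSRD}), and transfer back via \Cref{prop:dualtranspideal}. One small slip there: duality does not swap the two idealisers --- \Cref{prop:dualtranspideal} gives $\rid(\C^\perp)=\rid(\C)^{\mathbf 1}$, not a relation between $\rid(\C^\perp)$ and $\lid(\C)$ --- but since blockwise transposition is a ring anti-isomorphism this does not affect the conclusion.

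The genuine gap is in the left-idealiser part. You correctly note that \Cref{prop:fieldideal} only applies when $d\ge\lfloor tm/2\rfloor+1$, but instead of circumventing this you attempt a direct argument: for a nonzero $(D_1,\dots,D_t)\in\lid(\C)$ with some $D_{i_0}$ singular, you claim one can choose a codeword $X$ of weight $d$ so that $(D_1X_1,\dots,D_tX_t)$ is a nonzero codeword of weight strictly less than $d$. This step does not go through as sketched: $\rk(D_{i_0}X_{i_0})=\rk(X_{i_0})-\dim\bigl(\colsp(X_{i_0})\cap\ker D_{i_0}\bigr)$, and a minimum-weight codeword has blocks of small rank whose column spaces need not meet $\ker D_{i_0}$ at all; \Cref{prop:numberweightMSRD} only guarantees the existence of codewords of each weight, with no control over their supports or column spaces, so it cannot force the rank drop (and even granting a drop in block $i_0$, you would still have to ensure the image codeword is nonzero). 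The fix is exactly the mechanism you already deploy for the right idealiser: since $d>1$, one always has $\max\{d,\,tm-d+2\}\ge\lfloor tm/2\rfloor+1$ (indeed, if $d\le\lfloor tm/2\rfloor$ then $tm-d+2\ge\lfloor tm/2\rfloor+2$), so \Cref{prop:fieldideal} applies either to $\C$ or to $\C^\perp$, and the identity $\lid(\C^\perp)=\lid(\C)^{\mathbf 1}$ from \Cref{prop:dualtranspideal} yields that $\lid(\C)$ is a field with no additional hypothesis. This is the paper's argument, and it also explains why an extra hypothesis is needed only on the right side: there the threshold is $\lfloor tn/2\rfloor+1$ with $n\ge m$, which $\max\{d,tm-d+2\}$ need not reach.
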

\begin{proof}
   By Proposition \ref{prop:numberweightMSRD}, there exists a codeword in $\C$ of weight $tm$.
    Therefore, if $d \geq \left\lfloor \frac{tm}2 \right\rfloor +1$, $\lid(\C)$ is a field according to Proposition \ref{prop:fieldideal}.
    Suppose now that $d < \left\lfloor \frac{tm}2 \right\rfloor +1$. By \cite[Theorem VI.I]{byrne2021fundamental} (see also \cite[Theorem 5]{martinez2019theory}), the dual $\C^\perp$ of $\C$ is still an MSRD code and
    \[ d(\C^\perp)=tm-d+2. \]
    Since $d < \left\lfloor \frac{tm}2 \right\rfloor +1$, we obtain that 
    \[ d(\C^\perp) \geq \left\lfloor \frac{tm}2 \right\rfloor +1. \]
    Hence we can use Proposition \ref{prop:fieldideal} to state that $\lid(\C^\perp)$ is a field.
    We note that, by Proposition \ref{prop:dualtranspideal}, $\lid(\C^\perp)=\lid(\C)^{\mathbf{1}}$ and so $\lid(\C)$ is a field as well.
    Using the same argument, we can prove the second part of the statement.
\end{proof}

\begin{remark}
    Clearly, if $\C$ is an MSRD code in $\spacmn$ with minimum distance one, then $\C=\spacmn$ and its left and right idealisers correspond to $\Mat(\bfm,\bfm,\F_q)$ and $\Mat(\bfn,\bfn,\F_q)$, respectively.
\end{remark}

As a consequence, when $m=n$ we have that the right and left idealisers of are both fields.

\begin{corollary}\label{cor:lridealfields}
    Let $\C$ be an MSRD code in $\Mat(\bfm,\bfm,\F_q)$ and assume that $m=m_1=\ldots=m_t$ and $t\leq q-1$, and let $d>1$ be its minimum distance. We have that $\lid(\C)$ and $\rid(\C)$ are fields.
\end{corollary}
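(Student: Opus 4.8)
The plan is to obtain this as a direct specialization of the preceding Proposition (the one asserting that $\lid(\C)$ is a field, and that $\rid(\C)$ is a field provided $\max\{d,\, tm-d+2\}\geq\lfloor tn/2\rfloor+1$), applied in the case $n=m$. Under the hypotheses of the corollary we have $m=m_1=\cdots=m_t$, $t\leq q-1$ and $d>1$, and taking $n=m$ we are in the setting $n=n_1=\cdots=n_t=m$ with $m\leq n$ holding with equality. The first Proposition then immediately gives that $\lid(\C)$ is a field, with no further hypothesis, so that part of the statement requires nothing extra. It therefore remains only to check that the extra hypothesis needed for $\rid(\C)$, namely $\max\{d,\, tm-d+2\}\geq\lfloor tm/2\rfloor+1$, is automatically satisfied when $n=m$.

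To verify this I would split into two cases. If $d\geq\lfloor tm/2\rfloor+1$, then the bound on the maximum is trivially met since $d$ is one of the two terms in the maximum. Otherwise $d\leq\lfloor tm/2\rfloor$, and then
\[
tm-d+2\;\geq\;tm-\left\lfloor\tfrac{tm}{2}\right\rfloor+2\;=\;\left\lceil\tfrac{tm}{2}\right\rceil+2\;\geq\;\left\lfloor\tfrac{tm}{2}\right\rfloor+1,
\]
so again $\max\{d,\, tm-d+2\}\geq\lfloor tm/2\rfloor+1$. In both cases the hypothesis of the previous Proposition is satisfied, hence $\rid(\C)$ is a field as well, and the corollary follows.

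I do not expect any genuine obstacle here: the corollary is essentially a bookkeeping consequence of the Proposition proved just above, and the only point requiring a moment's care is the floor/ceiling manipulation in the case distinction. It is perhaps worth recording that the assumption $d>1$ ensures $\C\neq\Mat(\bfm,\bfm,\F_q)$ (cf.\ the Remark preceding the corollary), so that the idealisers are proper substructures and the conclusion ``field'' is a genuine restriction rather than the trivial statement covered by the degenerate case $d=1$.
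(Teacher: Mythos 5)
Your proposal is correct and follows exactly the route the paper intends: the corollary is stated as an immediate consequence of the preceding Proposition with $n=m$, and the only content is the floor/ceiling check that $\max\{d,\,tm-d+2\}\geq\lfloor tm/2\rfloor+1$ holds automatically, which you carry out with the same case split ($d\geq\lfloor tm/2\rfloor+1$ versus $d\leq\lfloor tm/2\rfloor$) that appears inside the paper's proof of that Proposition. Your arithmetic $tm-\lfloor tm/2\rfloor+2=\lceil tm/2\rceil+2\geq\lfloor tm/2\rfloor+1$ is sound, so there is nothing to add.
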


We can now prove that the generalised idealisers of MSRD codes are fields, as well.

\begin{corollary}
    Let $\C$ be an MSRD code in $\Mat(\bfm,\bfm,\F_q)$ and assume that $m=m_1=\ldots=m_t$ and $t\leq q-1$, and let $d>1$ be its minimum distance. We have that $\mathcal{I}_{\mathbf{s}}(\C)$ is a field for every $\mathbf{s} \in \Z_2^{t}$.
\end{corollary}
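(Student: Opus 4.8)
The plan is to reduce the statement about a general $\mathbf{s}$-idealiser to the already-established case of the left idealiser by exploiting the $\mathbf{v}$-adjoint operation and the behaviour of idealisers under it (Proposition~\ref{prop:dualtranspideal}). Concretely, I would fix $\mathbf{s}\in\Z_2^t$ and set $\mathbf{v}=\mathbf{1}-\mathbf{s}\in\Z_2^t$, so that $\mathbf{s}-\mathbf{v}=\mathbf{s}-(\mathbf{1}-\mathbf{s})=2\mathbf{s}-\mathbf{1}\equiv\mathbf{1}\pmod 2$. Then the first identity of Proposition~\ref{prop:dualtranspideal} gives
\[
\mathcal{I}_{\mathbf{s}}(\C^{\mathbf{v}})=\mathcal{I}_{\mathbf{s}-\mathbf{v}}(\C)^{\mathbf{v}}=\mathcal{I}_{\mathbf{1}}(\C)^{\mathbf{v}}=\lid(\C)^{\mathbf{v}}.
\]
Since the $\mathbf{v}$-adjoint acts blockwise by transposition and therefore is a ring isomorphism onto its image (transposition reverses products but composing with it in pairs, or simply noting that the idealiser is commutative once it is a field, makes it an isomorphism), $\lid(\C)^{\mathbf{v}}$ is a field if and only if $\lid(\C)$ is. So I must show $\lid(\C)$ is a field for the code $\C$ here; but $\C$ is an MSRD code in $\Mat(\bfm,\bfm,\F_q)$ with $m=m_1=\cdots=m_t$, $t\le q-1$, and $d>1$, which is exactly the hypothesis of Corollary~\ref{cor:lridealfields}, giving that $\lid(\C)$ is a field.

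The remaining point is that $\mathcal{I}_{\mathbf{s}}(\C)$ itself—not $\mathcal{I}_{\mathbf{s}}(\C^{\mathbf{v}})$—is a field. For this I observe that $\C^{\mathbf{v}}$ is again an MSRD code in $\Mat(\bfm',\bfm',\F_q)$ with the same parameters: the $\mathbf{v}$-adjoint is an isometry (transposition preserves rank blockwise, hence preserves sum-rank weight), so $\C^{\mathbf{v}}$ has the same size, length data, and minimum distance $d$ as $\C$, and since $\bfm=(m,\ldots,m)$ the ambient space is unchanged; in particular $\C^{\mathbf{v}}$ satisfies the hypotheses of the present corollary with the same $m$, $t$, $d$. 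Applying the argument of the previous paragraph to $\C^{\mathbf{v}}$ in place of $\C$ (and using $(\C^{\mathbf{v}})^{\mathbf{v}}=\C$) yields that $\mathcal{I}_{\mathbf{s}}(\C)=\mathcal{I}_{\mathbf{s}}((\C^{\mathbf{v}})^{\mathbf{v}})=\lid(\C^{\mathbf{v}})^{\mathbf{v}}$ is a field, because $\lid(\C^{\mathbf{v}})$ is a field by Corollary~\ref{cor:lridealfields} applied to the MSRD code $\C^{\mathbf{v}}$.

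I expect the main (minor) obstacle to be the bookkeeping around the adjoint: checking carefully that $\mathbf{v}$-adjoint indeed sends the $\mathbf{s}$-idealiser structure to the $(\mathbf{s}-\mathbf{v})$-idealiser as a ring \emph{isomorphism} (the map $A\mapsto A^{\top}$ is an anti-isomorphism on matrix rings, so one should either remark that the relevant idealisers here are commutative—being fields—or phrase everything in terms of the abstract ring isomorphism type, which is all that is needed), and confirming that $\C^{\mathbf{v}}$ genuinely falls under the same hypotheses (this uses $\bfm=(m,\ldots,m)$ so that $m_i=n_i$ for every block and the ambient matrix space is symmetric under transposition). Once these are in place the proof is a one-line deduction from Corollary~\ref{cor:lridealfields} and Proposition~\ref{prop:dualtranspideal}. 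I would write it out in essentially that order: (1) recall $\C^{\mathbf{v}}$ is MSRD with the same parameters; (2) apply Corollary~\ref{cor:lridealfields} to get $\lid(\C^{\mathbf{v}})$ a field; (3) apply Proposition~\ref{prop:dualtranspideal} with $\mathbf{v}=\mathbf{1}-\mathbf{s}$ to identify $\mathcal{I}_{\mathbf{s}}(\C)$ with $\lid(\C^{\mathbf{v}})^{\mathbf{v}}$; (4) conclude it is a field.
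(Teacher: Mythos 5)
Your proposal is correct and follows essentially the same route as the paper: the paper's proof likewise sets $\mathbf{v}=\mathbf{1}-\mathbf{s}$, uses Proposition~\ref{prop:dualtranspideal} to write $\mathcal{I}_{\mathbf{s}}(\C)=\mathcal{I}_{\ell}(\C^{\mathbf{1}-\mathbf{s}})^{\mathbf{1}-\mathbf{s}}$, observes that $\C^{\mathbf{1}-\mathbf{s}}$ is again MSRD, and concludes via Corollary~\ref{cor:lridealfields}. Your extra care about the blockwise transpose being only an anti-isomorphism (harmless here since the idealiser is a field, hence commutative) and about $\C^{\mathbf{v}}$ living in the same ambient space because $\bfm=(m,\ldots,m)$ fills in details the paper leaves implicit.
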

\begin{proof}
    Let $\mathbf{s} \in \Z_2^{t}$ and note that if $\mathbf{v}=\mathbf{1} -\mathbf{s}$, then
    \[ \mathcal{I}_{\mathbf{s}}(\C)=\mathcal{I}_{\mathbf{s}-\mathbf{v}}(\C^{\mathbf{v}})^{\mathbf{v}}=\mathcal{I}_{\ell}(\C^{1-\mathbf{s}})^{1-\mathbf{s}}, \]
    by Proposition \ref{prop:dualtranspideal}.
    Since $\C^{\mathbf{1}-\mathbf{s}}$ is an MSRD code as well, we can apply Corollary \ref{cor:lridealfields} to show that $\mathcal{I}_{\ell}(\C^{\mathbf{1}-\mathbf{s}})$ is a field. As a consequence, $\mathcal{I}_{\ell}(\C^{\mathbf{1}-\mathbf{s}})^{\mathbf{1}-\mathbf{s}}$ is a field as well, implying our assertion.
\end{proof}

\subsection{Centraliser and center}\label{sec:central+center}

From the theory of rank-metric codes, we can introduce two additional invariants: the \emph{centraliser} and the \emph{center} of a sum-rank metric code. These concepts were introduced in \cite{sheekey2020new} as invariants for rank-metric codes, extending the notions of the right nucleus and center in the context of semifields and division algebras (see also \cite{thompson2023division}). In the following definition, we further extend these notions to the sum-rank metric setting. 

\begin{definition}
 Let $\C$ be a sum-rank metric code in $\Mat(\bfm,\bfm,\F_q)$. The \textbf{centralizer} of $\C$ is defined as
\[ \mathrm{C}(\C)=\{ (D_1,\ldots,D_t) \in \Mat(\bfm,\bfm,\F_q) \colon D_iA_i=A_iD_i, \mbox{ for every }i \in \{1,\ldots,t\}, (A_1,\ldots,A_t)\in \C \}. \]
The \textbf{center} of $\C$ is defined as
\[ Z(\C)=\lid(\C)\cap \mathrm{C}(\C). \]
\end{definition}

The centralizer and center of a code in 
$\Mat(\bfm,\bfm,\F_q)$ are subrings of $\Mat(\bfm,\bfm,\F_q)$. 

\begin{remark}
In the rank-metric setting, we observe that although the center of a code is defined as the intersection $Z(\C) = L(\C) \cap \mathrm{C}(\C)$, where $L(\C)$ and $\mathrm{C}(\C)$ denote the left idealiser and the centraliser of $\C$ respectively, it also holds that $Z(\C) = R(\C) \cap \mathrm{C}(\C)$, with $R(\C)$ being the right idealiser of $\C$. That is, the center can be equivalently described as the intersection between the right idealiser and the centraliser. A similar property holds in the sum-rank setting: it is easy to check that for any sum-rank metric code $\C$ in $\mathrm{M}(\bfm,\bfm,\F_q)$ and any $\mathbf{s} \in \Z_2^t$, we have
\[
Z(\C) = \mathcal{I}_{\mathbf{s}}(\C) \cap \mathrm{C}(\C).
\]
\end{remark}

We now prove that if two codes are equivalent, then their centralizers and centers are isomorphic as rings. In other words, both the centralizer and the center serve as invariants for sum-rank metric codes.

\begin{proposition} \label{prop:equivalencecenter}
    Let $\C$ and $\C'$ be equivalent sum-rank metric codes in $\Mat(\bfm,\bfm,\F_q)$, via an isometry $\varphi$, i.e. $\varphi(\C)=\C'$.  
    Assume that $(I_{m_1},\ldots,I_{m_t}) \in \C$ and $(I_{m_1},\ldots,I_{m_t}) \in \C'$.
    Let $\pi \in \mathcal{S}_t$ be a permutation and \[
    \psi_i:X_i \in  M(m_i,m_i,\F_q) \longrightarrow A_i X_i^{\rho_i} B_i \in  M(m_i,m_i,\F_q),\] be rank-metric isometries, with $A_i,B_i \in \GL_q(m_i)$, and $\rho_i\in \Aut(\F_q)$ for every $i \in \{1,\ldots,t\}$, such that
\[
\varphi((X_1,\ldots,X_t))=(\psi_1(X_{\pi(1)}),\ldots,\psi_t(X_{\pi(t)}))
\]
for all $(X_1,\ldots,X_t) \in \Mat(\bfm,\bfm,\F_q)$. We have 
\[
C(\C')=(A_1^{-\rho_1^{-1}},\ldots,A_t^{-\rho_t^{-1}})\circ_{\mathbf{1}} C(\pi(\C)) \circ_{\mathbf{1}} (A_1^{\rho_1^{-1}},\ldots,A_t^{\rho_t^{-1}})
\]
and 
\[
Z(\C')=(A_1^{-\rho_1^{-1}},\ldots,A_t^{-\rho_t^{-1}})\circ_{\mathbf{1}}  Z(\pi(\C)) \circ_{\mathbf{1}} (A_1^{\rho_1^{-1}},\ldots,A_t^{\rho_t^{-1}})
\]
In particular, their centralizers and centers are isomorphic.
\end{proposition}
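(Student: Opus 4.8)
The strategy is to write down explicitly the assumed equivalence $\varphi = (\psi_1(\cdot_{\pi(1)}), \ldots, \psi_t(\cdot_{\pi(t)}))$ with $\psi_i(X_i) = A_i X_i^{\rho_i} B_i$, and then to exploit the normalization hypothesis $(I_{m_1},\ldots,I_{m_t}) \in \C \cap \C'$. Applying $\varphi$ to the identity tuple gives $(A_1 I^{\rho_1} B_1, \ldots, A_t I^{\rho_t} B_t) = (A_1 B_1, \ldots, A_t B_t) \in \C'$; I would argue, as is standard for the centraliser of a unital code, that one may reduce to the case $B_i = A_i^{-1}$ (equivalently, replace $\C'$ by the equivalent code $(A_1^{-1}B_1^{-1}, \ldots)\circ_{\mathbf 1}\C'$, or simply compose $\varphi$ with the isometry $(C_i) \mapsto ((A_iB_i)^{-1}C_i)$). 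Actually the cleaner path is to keep $\varphi$ as given, introduce the intermediate code $\pi(\C) = \{(X_{\pi(1)},\ldots,X_{\pi(t)}) : (X_1,\ldots,X_t)\in\C\}$ — which is trivially isomorphic to $\C$ under permutation of blocks and still contains the identity tuple — and observe that $\varphi$ factors as: first permute by $\pi$, then apply the block-diagonal isometry $\Psi : (Y_1,\ldots,Y_t) \mapsto (A_1 Y_1^{\rho_1} B_1, \ldots, A_t Y_t^{\rho_t} B_t)$. So it suffices to track how $\mathrm C$ and $Z$ transform under (a) a permutation of blocks and (b) the map $\Psi$.

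For step (a): block permutation $\pi$ sends $\mathrm C(\C)$ to $\mathrm C(\pi(\C))$ just by permuting the entries of each tuple, which is a ring isomorphism; same for $Z$, using the remark that $Z = \mathcal I_{\mathbf s}(\C) \cap \mathrm C(\C)$ and that $\mathcal I_{\mathbf s}$ behaves compatibly under permutation (this is implicit in the proof of Proposition~\ref{prop:equivcodeimpliesequivideal}). For step (b): I would show directly that $(D_1,\ldots,D_t)$ centralises $\pi(\C)$ if and only if $(E_1,\ldots,E_t)$ centralises $\Psi(\pi(\C))$, where $E_i = A_i^{-\rho_i^{-1}} D_i A_i^{\rho_i^{-1}}$. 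The key computation: for $(A_i Y_i^{\rho_i} B_i)$ a block of a codeword of $\Psi(\pi(\C))$, the relation $E_i (A_i Y_i^{\rho_i} B_i) = (A_i Y_i^{\rho_i} B_i) E_i$ should be rewritten, using the normalization that forces the relevant matrices to pass through, as equivalent to $D_i Y_i = Y_i D_i$ after applying $\rho_i^{-1}$ entrywise. Here the hypothesis $(I,\ldots,I) \in \C$ (hence $\pi(\C)$) is what pins down $B_i$ in terms of $A_i$: since the image of the identity tuple lies in $\Psi(\pi(\C)) = \C'$ and also $(I,\ldots,I)\in\C'$, comparing forces the conjugating matrices to be $A_i$ and $A_i^{-1}$ up to scalars that commute with everything, which is exactly why only $A_i$ (and not $B_i$) appears in the final formulas. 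Then conjugation by $(A_1^{\rho_1^{-1}},\ldots,A_t^{\rho_t^{-1}})$ together with the field automorphism $\rho_i^{-1}$ is manifestly a ring isomorphism, giving $\mathrm C(\C') \cong \mathrm C(\pi(\C)) \cong \mathrm C(\C)$; intersecting with $\mathcal I_{\mathbf 1}$ and using Proposition~\ref{prop:equivcodeimpliesequivideal}'s transformation law for idealisers gives the corresponding statement for $Z$.

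The main obstacle I anticipate is being careful with the field automorphisms $\rho_i$ and the exact placement of the inverses/transposes: the conjugation that works on the centraliser must intertwine correctly with $\rho_i$ acting entrywise, and one has to check that $X \mapsto A X^\rho A^{-1}$ really does send the centraliser of a matrix code to the centraliser of its image (it does, since $(AX^\rho A^{-1})(AY^\rho A^{-1}) = A(XY)^\rho A^{-1}$ and $\rho$ is a ring automorphism of $\Mat(m,m,\F_q)$ acting entrywise). The other delicate point is justifying cleanly that the normalization hypothesis genuinely eliminates $B_i$ from the answer — i.e. that we are not silently assuming $B_i = A_i^{-1}$ but rather deriving that the centraliser only sees the combination forced by $\varphi$ fixing the identity tuple; spelling this out rigorously, rather than hand-waving "without loss of generality $B_i = A_i^{-1}$", is the part that needs genuine care. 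Everything else is routine verification that conjugation-plus-automorphism preserves products and sums.
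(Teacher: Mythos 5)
Your plan follows essentially the same route as the paper's proof: write out the centralising condition $D_i\,A_iX_{\pi(i)}^{\rho_i}B_i = A_iX_{\pi(i)}^{\rho_i}B_i\,D_i$, use the identity tuple to eliminate $B_i$, and conclude that $D\mapsto \bigl((A_i^{-1}D_iA_i)^{\rho_i^{-1}}\bigr)_i$ carries $C(\C')$ onto $C(\pi(\C))$, with the statement for $Z$ following from the idealiser transformation law of Proposition~\ref{prop:equivcodeimpliesequivideal}.

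The one step you describe incorrectly is exactly the one you flag as delicate. Knowing that both $(I_{m_1},\ldots,I_{m_t})$ and $\varphi((I_{m_1},\ldots,I_{m_t}))=(A_1B_1,\ldots,A_tB_t)$ lie in $\C'$ does \emph{not} ``force the conjugating matrices to be $A_i$ and $A_i^{-1}$ up to scalars''; two codewords of $\C'$ need not be related at all, and in general $B_i\neq A_i^{-1}$. The mechanism in the paper is different and element-wise: given $D\in C(\C')$, apply the centralising condition to the particular codeword $(A_1B_1,\ldots,A_tB_t)\in\C'$ (equivalently, take $X_{\pi(i)}=I_{m_i}$ in the rearranged relation $(A_i^{-1}D_iA_i)^{\rho_i^{-1}}X_{\pi(i)}=X_{\pi(i)}(B_iD_iB_i^{-1})^{\rho_i^{-1}}$). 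This yields $A_i^{-1}D_iA_i=B_iD_iB_i^{-1}$ \emph{for that particular} $D$, and substituting back shows $(A_i^{-1}D_iA_i)^{\rho_i^{-1}}$ commutes with every block of every codeword of $\pi(\C)$. So $B_i$ disappears from the final formula not because it is determined by $A_i$, but because each element of the centraliser individually satisfies a relation identifying its $A_i$-conjugate with its $B_i$-conjugate. With that correction, the rest of your outline (factoring through $\pi$, checking that conjugation combined with $\rho_i^{-1}$ is a ring isomorphism, and intersecting with the left idealiser for $Z$) matches the paper and goes through.
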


\begin{proof}
    For an element $(D_1,\ldots,D_t) \in C(\C')=C(\varphi(\C))$, we have 
    \[
    (D_1,\ldots,D_t)\circ_{\mathbf{1}} (\psi_1(X_{\pi(1)}),\ldots,\psi_t(X_{\pi(t)}))= (\psi_1(X_{\pi(1)}),\ldots,\psi_t(X_{\pi(t)}))\circ_{\mathbf{1}} (D_1,\ldots,D_t),
    \]
    for every $(X_1,\ldots,X_t) \in \C$, implying 
    \[
    D_i A_i X_{\pi(i)}^{\rho_i}B_i=A_i X_{\pi(i)}^{\rho_i}B_i D_i, 
    \]
    for all $i\in \{1,\ldots,t\}$. 
    Thus,
    \begin{equation} \label{eq:permutationcentralisers}
    (A_i^{-1}D_i A_i)^{\rho_i^{-1}}X_{\pi(i)}=X_{\pi(i)}(B_i D_i B_i^{-1})^{\rho_i^{-1}}, 
    \end{equation}
    for every $i \in \{1,\ldots,t\}$.
    Since $(I_{m_1},\ldots,I_{m_t})$ is in $\C$, we have 
    \[
    (A_i^{-1}D_i A_i)^{\rho_i^{-1}}=(B_i D_i B_i^{-1})^{\rho_i^{-1}}, 
    \]
    and so, by \eqref{eq:permutationcentralisers}, we obtain $(A_i^{-1}D_i A_i)^{\rho_i^{-1}} \in C(\C)$. Applying the same arguments with the roles of $\C$ and $\C'$ swapped, the proof is complete.
    Finally, we prove the relation for the two centers. Recall that by Proposition \ref{prop:equivcodeimpliesequivideal}, see also its proof and \eqref{eq:Ei's}, 
    \[
    \lid(\C')=(A_1^{-\rho_1^{-1}},\ldots,A_t^{-\rho_t^{-1}}) \circ_{\mathbf{1}}\lid(\pi(\C)) \circ_{\mathbf{1}}(A_1^{\rho_1^{-1}},\ldots,A_t^{\rho_t^{-1}}),
    \]
    and that $Z(\C')=\lid(\C') \cap C(\C')$, we have the assertion. 
\end{proof}

The above proposition extends \cite[Proposition 4]{sheekey2020new}, which was originally proven in the context of rank-metric codes.

\begin{remark}
We note that a condition for applying \Cref{prop:equivalencecenter} is that both codes $\C$ and $\C'$ contain the element $(I_{m_1}, \ldots, I_{m_t})$. However, when using these invariants to study the equivalence of MSRD codes in $\mathrm{Mat}(\bfm, \bfm, \F_q)$ with $t \leq q - 1$, this assumption is not restrictive. Indeed, by \Cref{prop:numberweightMSRD}, both $\C$ and $\C'$ contain a codeword with sum-rank weight, equal to $tm$. Therefore, up to equivalence, we may assume that both codes contain the element $(I_{m_1}, \ldots, I_{m_t})$.
\end{remark}

Since the left and right idealisers, the centraliser, and the center of a rank-metric code are extensions of the nuclei of a semifield, in \cite{sheekey2020new} the notion of nuclear parameters for rank-metric codes has been introduced. These parameters include the sizes of the code, its left and right idealisers, its centraliser, and its center, and they serve as invariants for rank-metric codes.

In the same spirit, and in light of \Cref{prop:equivcodeimpliesequivideal} and \Cref{prop:equivalencecenter}, we adopt analogous terminology for sum-rank metric codes.  Precisely, by analogy with the nuclear parameters of semifields and rank-metric codes, we refer to the sizes of the left and right idealisers, the centraliser, and the center as the \emph{nuclear parameters} of a sum-rank metric code, as these quantities are invariants for a sum-rank metric code.

\begin{definition}
Let $\C$ be a sum-rank metric code in $\Mat(\bfm,\bfm,\F_q)$ that contains $(I_{m_1},\ldots,I_{m_t})$.
The \textbf{nuclear parameters} of $\C$ is the tuple
\[
(|\C|,|\lid(\C)|,|\lid(\C)|,|C(\C)|,|Z(\C)|).
\]
\end{definition}
If $\C$ is a sum-rank metric code in $\Mat(\bfm,\bfm,\F_q)$ that contains a maximum weight codeword, we define its nuclear parameters as the nuclear parameters of any code $\C'$ that contains the tuple $(I_{m_1},\ldots,I_{m_t})$ and is equivalent to $\C$.
In the next section, we will use the nuclear parameters to study the equivalence among the families of linearized Reed-Solomon codes, additive twisted linearized Reed-Solomon codes and twisted linearized Reed-Solomon of TZ-type.

\subsection{Linearity of a sum-rank metric code}\label{sec:linearity}

The concept of subspace linearity cannot serve as an invariant for sum-rank metric codes.
Indeed, in the following example we show two equivalent rank-metric codes with different linearity. To this aim, we will use the framework of linearized polynomials as it is easier to check the linearity of the codes.

\begin{example} \label{ex:notlinear}
    Let $\theta \colon a \in \mathbb{F}_{2^4}\mapsto a^2 \in \mathbb{F}_{2^4}$ and consider $\mathcal{C}_1=\langle \mathrm{id}\rangle_{\F_{2^4}} \subseteq \mathbb{F}_{2^4}[\theta]$. 
    Let $p_1=\theta+\xi \theta^3 \in \mathbb{F}_{2^4}[\theta]$, where $\xi \in \F_{2^4}$ with $\mathrm{N}_{2^4/2}(\xi)\ne 1$ (so that $p_1$ is invertible), and let $p_2=\theta$.
    The code 
    \[ \C_2=p_1 \circ \C_1 \circ p_2=\{ \alpha \mathrm{id}+\alpha^2\theta+\alpha^4\theta^2\colon \alpha \in \F_{2^4} \} \subseteq \F_{2^4}[\theta] \]
    is equivalent to $\C_1$, by definition.
    The code $\C_1$ is $\F_{2^4}$-linear but it is easy to see that $\C_2$ is strictly $\F_2$-linear, which means that $\C_2$ is not $\F_{2^i}$-linear for any $i\geq 2$.
\end{example}

From the above example it is clear that the linearity of a code is not an invariant for code equivalence. Although $\C_2$ is not $\F_{2^4}$-linear, by Proposition \ref{prop:equivcodeimpliesequivideal}, its left idealiser still keeps track of the linearity of $\C_1$ as it contains a subring isomorphic to $\F_{2^4}$.
Clearly, if $\C$ is an $\F_{q^m}$-subspace of $\F_{q^m}[\theta]$ then its left idealiser contains a subring isomorphic to the field $\F_{q^m}$.
Based on the previous considerations, it is natural to give the following definition. 

\begin{definition}\label{def:linearity}
    Let $\C$ be an additive sum-rank metric code in $\spacmn$.
    If $(\lid(\C),+,\cdot)$ contains a subring $\mathcal{F}$ isomorphic to $\mathbb{F}_{q^i}$, where $+$ and $\cdot$ are respectively the sum and the product of matrices, then we say that $\C$ is $\F_{q^i}$-\textbf{linear}. 
\end{definition}

The above definition is also motivated by the fact that $\C$ turns out to be an $\mathcal{F}$-left vector space.
Proposition \ref{prop:equivcodeimpliesequivideal} implies the following result.

\begin{proposition}
    Let $\C_1$ and $\C_2$ be two sum-rank metric codes in $\spacmn$ and suppose that $\C_1$ and $\C_2$ are equivalent. If $\C_1$ is $\F_{q^i}$-linear then $\C_2$ is $\F_{q^i}$-linear.
\end{proposition}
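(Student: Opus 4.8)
The plan is to derive the statement directly from Proposition~\ref{prop:equivcodeimpliesequivideal}, which already tells us that equivalence of $\C_1$ and $\C_2$ forces $\mathcal{I}_\ell(\C_1) \cong \mathcal{I}_\ell(\C_2)$ as rings. The only gap to fill is the bookkeeping that Definition~\ref{def:linearity} is itself preserved under ring isomorphism: being $\F_{q^i}$-linear means precisely that the left idealiser contains a subring isomorphic to $\F_{q^i}$, and ``contains a subring isomorphic to $\F_{q^i}$'' is a purely ring-theoretic property.

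First I would invoke Proposition~\ref{prop:equivcodeimpliesequivideal} to fix a ring isomorphism $\Phi : \mathcal{I}_\ell(\C_1) \to \mathcal{I}_\ell(\C_2)$ (concretely, the one described in \eqref{eq:Ei's}, though its explicit form is not needed). Next, since $\C_1$ is $\F_{q^i}$-linear, Definition~\ref{def:linearity} gives a subring $\mathcal{F} \subseteq \mathcal{I}_\ell(\C_1)$ with $\mathcal{F} \cong \F_{q^i}$. Then I would observe that $\Phi(\mathcal{F})$ is a subring of $\mathcal{I}_\ell(\C_2)$ and that the restriction $\Phi|_{\mathcal{F}} : \mathcal{F} \to \Phi(\mathcal{F})$ is a ring isomorphism, so $\Phi(\mathcal{F}) \cong \mathcal{F} \cong \F_{q^i}$. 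Hence $\mathcal{I}_\ell(\C_2)$ contains a subring isomorphic to $\F_{q^i}$, which by Definition~\ref{def:linearity} means exactly that $\C_2$ is $\F_{q^i}$-linear, as desired.

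There is essentially no serious obstacle here; the statement is a formal corollary of the idealiser isomorphism. The only point requiring a modicum of care is that ``$\F_{q^i}$-linear'' in Definition~\ref{def:linearity} is stated in terms of the \emph{intrinsic} ring structure $(\mathcal{I}_\ell(\C),+,\cdot)$ rather than in terms of an ambient-space action, so one must make sure the transported subring $\Phi(\mathcal{F})$ is closed under the \emph{same} addition and multiplication that $\mathcal{I}_\ell(\C_2)$ carries as a ring — which is immediate since $\Phi$ is a ring homomorphism and $\Phi(\mathcal{F})$ is by construction a subring. I would therefore keep the proof to a couple of sentences, citing Proposition~\ref{prop:equivcodeimpliesequivideal} and Definition~\ref{def:linearity}, rather than re-deriving the isomorphism from scratch.
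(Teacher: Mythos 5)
Your proof is correct and follows exactly the route the paper takes: the paper states this proposition as an immediate consequence of Proposition~\ref{prop:equivcodeimpliesequivideal}, and your argument simply spells out the (routine) transport of the subring $\mathcal{F}\cong\F_{q^i}$ along the ring isomorphism of left idealisers. Nothing further is needed.
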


We are going to prove that  $\fqm$-linear sum-rank metric codes in $\Mat(\bfm,\mathbf{n},\fq)$, with $m=m_1=\ldots=m_t$ can be considered in the vector framework $\fqm^{\mathbf{n}}$.
We start by recalling the framework $\fqm^{\mathbf{n}}$. Sum-rank metric codes have introduced in the framework of matrices, but in some cases it is very useful to look at sum-rank metric codes in 
\[\F_{q^m}^{\bfn}=\bigoplus_{i=1}^t \F_{q^m}^{n_i}.\]

Let start by recalling that the rank of a vector $v=(v_1,\ldots,v_n) \in \F_{q^m}^n$ is the dimension of the
vector space generated over $\F_q$ by its entries, i.e, $\rk(v)=\dim_{\fq} \langle v_1,\ldots, v_n\rangle_{\fq}$. For an element $\mathbf{x}=(x_1 , \ldots,  x_t)\in \F_{q^m}^{\bfn}$, with $x_i\in\F_{q^m}^{n_i}$, the \textbf{$\Fq$-sum-rank weight of $\xx$} is defined as
$$ \w_{\srk}(\xx)=\sum_{i=1}^t \rk(x_i).$$
A \textbf{(vector) sum-rank metric code} $\C$ is a subset of $\fqm^{\bfn}$, endowed with the 
\textbf{$\Fq$-sum-rank distance} on $\fqm^{\bfn}$, defined as 
\[
\dsrk(\xx,\yy)=\w_{\srk}(\xx-\yy),
\]
for any  $\xx, \yy \in \fqm^{\bfn}$. The minimum (sum-rank) distance of a code $\C$ is denoted $d(\C)$, that is,
\[
d(\C):=\min\{\dsrk(\xx,\yy) \st \xx, \yy \in \C, \xx\neq \yy  \}.
\]
A vector sum-rank metric code $\mathcal{C} \subseteq \F_{q^m}^{\bfn}$ is said \textbf{linear} if it is an $\F_{q^m}$-subspace of $\F_{q^m}^{\bfn}$. We write that $\C$ is a linear $\Fmnkd$ code, if $\C$ is a sum-rank metric code in $\fqm^{\bfn}$ having $\Fm$-dimension $k$ and minimum (sum-rank) distance $d$, or simply an $\Fmnk$ code if the minimum distance is not relevant/known.
For more details on the vector framework and its connection with the vector setting, see e.g. \cite{Martinez2018skew,neri2021twisted,ott2021bounds,puchinger2022generic}. \\

The matrix and vector settings are related as following. First let $\mathcal{B}=(\gamma_1,\ldots,\gamma_m)$ be an $\F_q$-basis of $\fqm$. For a vector $x \in \F_{q^m}^n$, define $\Ext_{\mathcal{B}}(x) \in \F_q^{m\times n}$ the matrix expansion of the vector $x$ with respect to the $\fq$-basis $\mathcal{B}$ of $\F_{q^m}$, i.e., if $x=(x_1,\ldots,x_n)$, then
$$x_i = \sum_{j=1}^m (\Ext_{\mathcal{B}}(x))_{ij}\gamma_i, \qquad \mbox{ for each } i \in \{1,\ldots,n\}.$$

Let $\mathcal{B}_r=(\gamma_1^{(r)},\ldots,\gamma_m^{(r)})$ be an ordered $\fq$-basis of $\F_{q^m}$, for each $r \in \{1,\ldots,t\}$ and define $\boldsymbol
{\mathcal{B}}=(\mathcal{B}_1,\ldots,\mathcal{B}_t)$. For an element $\xx=(x_1, \ldots ,x_t) \in \fqm^{\bfn}$, with $x_i \in \F_{q^m}^{n_i}$, we define     \begin{equation} \label{eq:isometrymatrixvector}
\Ext_{\boldsymbol
{\mathcal{B}}}(\xx)=(\Ext_{\mathcal{B}_1}(x_1), \ldots, \Ext_{\mathcal{B}_t}(x_t)) \in \Mat(\bfm,\bfn,\F_q),\end{equation}
with $\bfm=(m,\ldots,m)$.
It is immediate to verify that the map
$\Ext_{\boldsymbol
{\mathcal{B}}}: \bigoplus_{i=1}^t\Fm^{n_i} \longrightarrow \Mat(\bfm,\bfn,\F_q)$
is an $\fq$-linear isometry between the metric spaces $(\fqm^{\bfn}, \dsrk)$ and $(\Mat(\bfm,\bfn,\F_q),\dsrk)$.

An important property of the map $\Ext_{\boldsymbol
{\mathcal{B}}}$, already proved in \cite{polverino2022divisible} for the rank metric, is that it keeps track of the $\F_{q^m}$-linearity in the following sense. 

\begin{proposition}
If $\C$ is a nondegenerate $[\bfn,k]_{q^m/q}$-code then $\C'=\Ext_{\boldsymbol
{\mathcal{B}}}(\C)$ is an $\F_{q^m}$-linear sum-rank metric code in $\Mat(\bfm,\bfn,\F_q)$, with $\bfm=(m,\ldots,m)$. Conversely, let $\C$ be a nondegenerate $\F_{q^m}$-linear sum-rank metric code in $\Mat(\bfm,\bfn,\F_q)$, with $\bfm=(m,\ldots,m)$, of dimension $mk$. Then there exists a code $\C'$ equivalent to $\C$ such that $\Ext_{\boldsymbol
{\mathcal{B}}}^{-1}(\C')$ is an $[\bfn,k]_{q^m/q}$ code.
\end{proposition}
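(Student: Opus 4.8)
\textbf{Forward direction.} The plan is to track how scalar multiplication by $\F_{q^m}$ transports to the matrix side. For $\alpha\in\F_{q^m}$ and $r\in\{1,\ldots,t\}$, let $M_\alpha^{(r)}$ be the matrix, with respect to $\mathcal{B}_r$, of the $\F_q$-linear map ``multiplication by $\alpha$'' on $\F_{q^m}$. A direct computation from the definition of $\Ext_{\mathcal{B}_r}$ gives $\Ext_{\mathcal{B}_r}(\alpha x)=M_\alpha^{(r)}\Ext_{\mathcal{B}_r}(x)$ for all $x\in\F_{q^m}^{n_r}$, hence
\[
\Ext_{\boldsymbol{\mathcal{B}}}(\alpha\xx)=(M_\alpha^{(1)},\ldots,M_\alpha^{(t)})\circ_{\mathbf{1}}\Ext_{\boldsymbol{\mathcal{B}}}(\xx),\qquad \xx\in\F_{q^m}^{\bfn}.
\]
Since $\C$ is $\F_{q^m}$-linear we have $\alpha\C\subseteq\C$, so $(M_\alpha^{(1)},\ldots,M_\alpha^{(t)})\circ_{\mathbf{1}}\C'\subseteq\C'$, i.e.\ the set $\mathcal{M}=\{(M_\alpha^{(1)},\ldots,M_\alpha^{(t)})\colon\alpha\in\F_{q^m}\}$ lies in $\lid(\C')$. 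As $\alpha\mapsto(M_\alpha^{(1)},\ldots,M_\alpha^{(t)})$ is an injective $\F_q$-algebra homomorphism sending $1$ to $(I_{m},\ldots,I_{m})$, the set $\mathcal{M}$ is a subring of $\lid(\C')$ isomorphic to $\F_{q^m}$, so $\C'$ is $\F_{q^m}$-linear by Definition~\ref{def:linearity}.

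\textbf{Converse: normalization.} By hypothesis $\lid(\C)$ contains a subring $\mathcal{F}\cong\F_{q^m}$; write $\mathcal{F}=\{(D_1(\alpha),\ldots,D_t(\alpha))\colon\alpha\in\F_{q^m}\}$, where $\alpha\mapsto(D_1(\alpha),\ldots,D_t(\alpha))$ is an $\F_q$-algebra isomorphism onto $\mathcal{F}$, so each $D_r\colon\F_{q^m}\to\Mat(m,m,\F_q)$ is an $\F_q$-algebra homomorphism. Being a subring of $\lid(\C)$, $\mathcal{F}$ contains the identity $(I_m,\ldots,I_m)$ of $\lid(\C)$, whence $D_r(1)=I_m$ for every $r$; since $\ker D_r$ is an ideal of the field $\F_{q^m}$ and $D_r(1)\neq 0$, each $D_r$ is injective. (It is at this point that nondegeneracy enters: by Corollary~\ref{cor:nondegsumrank} every component $\iota_r(\C)$ is a nondegenerate rank-metric code, which via Proposition~\ref{prop:nondegrankmetric} guarantees that the action of $\mathcal{F}$ on $\C$ restricts to a faithful, unital action on each block, so that the $D_r$ are exactly the faithful representations we need.)

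\textbf{Converse: base change and conclusion.} Via $D_r$, the space $\F_q^m$ becomes a one-dimensional $\F_{q^m}$-vector space. Comparing with $\F_{q^m}$ acted on by multiplication and expressed in the basis $\mathcal{B}_r$ — another one-dimensional $\F_{q^m}$-space of $\F_q$-dimension $m$ — there is $Q_r\in\GL(m,q)$ intertwining the two actions, i.e.\ $Q_r^{-1}D_r(\alpha)Q_r=M_\alpha^{(r)}$ for all $\alpha\in\F_{q^m}$ simultaneously. Let $\varphi\colon\Mat(\bfm,\bfn,\F_q)\to\Mat(\bfm,\bfn,\F_q)$ be the isometry $(X_1,\ldots,X_t)\mapsto(Q_1^{-1}X_1,\ldots,Q_t^{-1}X_t)$ (trivial permutation, no transposition, no field automorphism) and $\C'=\varphi(\C)$, which is equivalent to $\C$. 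By Proposition~\ref{prop:equivcodeimpliesequivideal} (cf.\ \eqref{eq:Ei's}), $(M_\alpha^{(1)},\ldots,M_\alpha^{(t)})=(Q_1^{-1}D_1(\alpha)Q_1,\ldots,Q_t^{-1}D_t(\alpha)Q_t)\in\lid(\C')$ for every $\alpha$, so $\C'$ is stable under $\circ_{\mathbf{1}}$-multiplication by these tuples. Applying $\Ext_{\boldsymbol{\mathcal{B}}}^{-1}$ and the displayed identity from the forward direction, $\Ext_{\boldsymbol{\mathcal{B}}}^{-1}(\C')$ is stable under multiplication by every $\alpha\in\F_{q^m}$, hence is an $\F_{q^m}$-subspace of $\F_{q^m}^{\bfn}$. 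Finally, $\Ext_{\boldsymbol{\mathcal{B}}}$ is an $\F_q$-linear bijection and $\dim_{\F_q}\C=mk$, so $\dim_{\F_q}\Ext_{\boldsymbol{\mathcal{B}}}^{-1}(\C')=mk$ and its $\F_{q^m}$-dimension is $k$; that is, $\Ext_{\boldsymbol{\mathcal{B}}}^{-1}(\C')$ is an $[\bfn,k]_{q^m/q}$ code.

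\textbf{Main obstacle.} The crux is the base-change step in the converse: realizing the copy of $\F_{q^m}$ sitting blockwise inside $\lid(\C)$ as the regular representation through a \emph{single} conjugation $Q_r$ per block valid for all scalars at once, together with the preliminary check — where nondegeneracy is used — that this copy acts faithfully and with identity $(I_{m_1},\ldots,I_{m_t})$ on $\C$ (ruling out a ``partial'' field in the idealiser). Everything else reduces to bookkeeping with $\Ext_{\boldsymbol{\mathcal{B}}}$ and a dimension count.
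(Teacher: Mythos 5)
Your proof follows essentially the same route as the paper's. In the forward direction you place the blockwise regular-representation matrices $M_\alpha^{(r)}$ of $\F_{q^m}$ inside $\lid(\C')$, exactly as the paper does with its matrices $\mathcal{A}_i$ of $\tau_\alpha$; in the converse you conjugate the embedded copy of $\F_{q^m}$ in each block onto the standard copy, and your intertwiner $Q_r$ is precisely the conjugating element the paper extracts from the conjugacy of Singer cycles in $\GL(m,q)$. The final bookkeeping with $\Ext_{\boldsymbol{\mathcal{B}}}$ and the dimension count also match.

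The one step that does not hold up is your claim that nondegeneracy, via Corollary~\ref{cor:nondegsumrank} and Proposition~\ref{prop:nondegrankmetric}, forces the subring $\mathcal{F}\cong\F_{q^m}$ of $\lid(\C)$ to contain $(I_m,\ldots,I_m)$ and each projection $D_r$ to be injective. Those results concern common left/right kernels of codewords and say nothing about idealisers, and nondegeneracy alone does not deliver what you need: take $\C=\C_1\times\C_2$ with $\C_1=\{M_\alpha\colon \alpha\in\F_{q^m}\}\subseteq\F_q^{m\times m}$ the regular representation and $\C_2$ a nondegenerate block of $\F_q$-dimension $m$ containing codewords of different ranks (e.g.\ $\langle I_2,\diag(1,0)\rangle_{\F_2}$ for $m=2$), hence not equivalent to any $\F_{q^m}$-linear rank-metric code. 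Then $\{(M_\alpha,0)\colon\alpha\in\F_{q^m}\}$ is a subring of $\lid(\C)$ isomorphic to $\F_{q^m}$ whose identity is $(I_m,0)$ and whose second projection is identically zero, even though $\C$ is nondegenerate. What your argument genuinely requires is the extra hypothesis that $\mathcal{F}$ is unital in $\Mat(\bfm,\bfm,\F_q)$, i.e.\ contains $(I_m,\ldots,I_m)$: granting that, $D_r(1)=I_m\neq 0$ makes each $D_r$ injective (its kernel is a proper ideal of a field), each block becomes a one-dimensional $\F_{q^m}$-space, and the rest of your proof goes through. To be fair, the paper's own proof makes the same silent assumption when it asserts that each $\iota_i(\mathcal{G})\setminus\{0\}$ is a Singer cycle, so the gap is inherited from the source; but your attribution of the fix to nondegeneracy is not correct and should be replaced by the unitality assumption (or by strengthening Definition~\ref{def:linearity} accordingly).
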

\begin{proof}
For any element $\alpha \in \F_{q^m}$, define
\[
\begin{array}{rrcl}
     \tau_{\alpha} : & \F_{q^m} & \longrightarrow & \F_{q^m}  \\
     & x & \longmapsto & \alpha x. 
\end{array}
\]
The set $\mathcal{A}_i$ of the matrices in $\F_q^{m \times m}$ associated with $\tau_{\alpha}$ with respect to the basis $\mathcal{B}_i$, for every $\alpha \in \F_{q^m}$, forms a ring isomorphic to $\F_{q^m}$.
Hence,
\begin{equation}\label{eq:Gammaqlin}
  \Ext_{\boldsymbol
{\mathcal{B}}}(\alpha c) = (A_1, \ldots , A_t) \Ext_{\boldsymbol
{\mathcal{B}}}(c),  
\end{equation}
for all $\alpha \in \F_{q^m}$ and $c \in \C$, where $A_i$ is matrix in $\F_q^{m \times m}$ associated with $\tau_{\alpha}$ with respect to the basis $\mathcal{B}_i$. Now, since $\alpha c \in \C$ for all $c \in \C$, we get that $(A_1, \ldots, A_t) \in \mathcal{I}_\ell(\C)$ and the first part of the assertion is proved.
Conversely, let $\C$ be an $\F_{q^m}$-linear sum-rank metric code in $\Mat(\bfm,\bfn,\F_q)$, with $\bfm=(m,\ldots,m)$ and let $\mathcal{G}$ be a subring of $\mathcal{I}_{\ell}(\C)$ isomorphic to $\F_{q^m}$. 
Denote by $\mathcal{G}_i=\iota_i(\mathcal{G})$ for any $i \in \{1,\ldots,t\}$, and observe that $\mathcal{A}_i \setminus  \{0\}$ and $\mathcal{G}_i \setminus  \{0\}$ define two Singer cycles in $(\GL(m,q),\cdot)$, and since two Singer cycles are conjugate (see \cite[pag. 187]{huppert2013endliche} and \cite[Section 1.2.5 and Example 1.12]{hiss2011finite}), then there exists an invertible matrix $H_i \in \mathrm{GL}(m,q)$ such that 
\[H_i^{-1}  \mathcal{G}_i H_i=\mathcal{A}_i,\]
for any $i \in \{1,\ldots,t\}$.
Consider $\C'=(H_1^{-1},\ldots,H_t^{-1})\circ_{\mathbf{1}}\C \circ_{\bf{1}} (H_1,\ldots,H_t)$. Now, it is easy to verify that $\Ext_{\boldsymbol
{\mathcal{B}}}^{-1}(\C')$ is an $[\bfn,k]_{q^m/q}$ code (as each $\iota_i(\C)$ is an $\fqm$-subspace of $\fqm^{n_i}$ for any $i$).
\end{proof}


\section{Skew polynomial framework for sum-rank metric codes}\label{sec:skewpolSRC}

The skew polynomial framework has been fundamental for both rank-metric and sum-rank metric codes in constructing families of MSRD codes, see e.g. \cite{sheekey2020new,neri2021twisted,lobillo2025quotients,thompson2023division}. For the sum-rank metric, this framework is notably compact. Specifically, a tuple of matrices can be identified with an element of a certain quotient algebra of a skew polynomial ring. In this context, we can directly describe the largest known families of MSRD codes:
\begin{itemize}
\item Linearized Reed-Solomon (LRS) codes,
\item Additive Twisted Linearized Reed-Solomon (ATLRS) codes,
\item Twisted Linearized Reed-Solomon (TLRS) codes of TZ-type.
\end{itemize}
In this section, after reviewing the skew polynomial framework and the mentioned codes, we will calculate the nuclear parameters of these families of codes. These parameters will aid in studying the equivalence issues among these MSRD code families.


\subsection{Skew-polynomial setting}

Skew polynomial rings have been widely used in literature in connection with different topics, starting with the pioneering work of Ore \cite{ore1933theory}. For further background and properties of these rings, we refer to \cite{jacobson2009finite,goodearl2004introduction,ore1933theory}. Let $\theta$ be a generator of the Galois group $\Gal(\F_{q^m}/\F_q)$.
The ring of skew polynomials $\F_{q^m}[x;\theta]$ is the set of formal polynomials
\[
\F_{q^m}[x;\theta]:=\left\{f(x)=\sum_{i=0}^d f_i x^i \st f_i \in \F_{q^m}, d \in \Z_{\geq 0} \right\}
\]
equipped with the ordinary polynomial addition and the multiplication rule 
\[
x \cdot a=\theta(a) \cdot x, \mbox{ for every } a \in \F_{q^m},
\]
extended to polynomials by associativity and distributivity. In the next, we let $R:=\F_{q^m}[x;\theta]$. The ring $R$ is a noncommutative ring whose center is $\F_q[x^m]$.  The ring $R$ admits a natural notion of \emph{degree} for nonzero elements, denoted $\deg(f)$ for $f \in R$, which behaves analogously to the classical degree of commutative polynomials.

Moreover, $R$ is a \textbf{left Euclidean domain}: for any pair of nonzero elements $f, g \in R$, there exist unique $q, r \in R$ such that
\[
f = qg + r,
\]
where either $r = 0$ or $\deg(r) < \deg(f)$. When $r = 0$, we say that $g$ \textbf{right-divides} $f$, denoted by $g \mid_r f$, and that $f$ is a \textbf{left-multiple} of $g$. As a consequence, $R$ turns out to be a left (and right) \textbf{principal ideal domain}. Also, for any two nonzero elements $f_1, f_2 \in R$, one can define their \textbf{greatest common right divisor}, denoted $\mathrm{gcrd}(f_1, f_2)$, and their \textbf{least common left multiple}, denoted $\mathrm{lclm}(f_1, f_2)$. These are characterized, respectively, as generators of the sum and intersection of the left ideals they generate: if $a = \gcrd(f_1, f_2)$ and $b = \mathrm{lclm}(f_1, f_2)$, then
\[
Rf_1 + Rf_2 = Ra \quad \text{and} \quad Rf_1 \cap Rf_2 = Rb,
\]
where $Rf_i$ denotes the left ideal in $R$ generated by $f_i$.

In \cite{Martinez2018skew}, for the first time sum-rank metric codes were studied in the context of skew polynomials. Later, in \cite{neri2021twisted}, a skew polynomial framework was introduced for sum-rank metric codes. This framework is isometric to the natural setting of tuple of matrices for sum-rank metric codes. In the following, we recall this framework and work within it.

For the rest of the section, we fix the following notation. Let $\alpha_1,\ldots,\alpha_t \in \F_{q^m}^*$ be such that $\N_{q^m/q}(\alpha_i) \neq \N_{q^m/q}(\alpha_j)$, for any $i,j \in \{1,\ldots,t\}$ with $i \neq j$. Consider $\Lambda=\{\lambda_1,\ldots,\lambda_t\}$ where $\lambda_i=\N_{q^m/q}(\alpha_i)$ for any $i$. Define
\[
H_{\Lambda}(x)=\prod_{i=1}^t(x^m-\lambda_i) \in R.
\]  

Note that $H_{\Lambda}$ is the product of some central elements of $R$ and hence it belongs to the center of $R$. Also, $H_{\Lambda}$ generates a two-sided ideal in $R$.
Therefore, we can consider the quotient ring \( R / R \Hl \).

For an element $f(x) = f_0+f_1x+\cdots+f_d x^d \in R$ and $\alpha \in \F_{q^m}^*$, denote by $f_{\alpha}$ the skew polynomial
\[
f_{\alpha}(x):=\sum_{i=0}^d f_i \N_{\theta}^i(\alpha)x^i,
\]
where \( \mathrm{N}_{\theta}^i(\alpha) = \prod_{j=0}^{i-1} \theta^j(\alpha) \) denotes the \( i \)-th truncated norm of \( \alpha \), for all integers \( i \geq 0 \).

Skew polynomial rings are related to the skew group algebra $\F_{q^m}[\theta]$, and in fact $\F_{q^m}[\theta]$ can be obtained as specific quotients of the skew polynomial rings.  Precisely, linearized polynomials and skew polynomials are related by the map
\begin{equation} \label{eq:skewandlinearized}
\begin{tabular}{l c c c }
$\Phi:$ & $R$ & $\longrightarrow$ & $\F_{q^m}[\theta]$ \\
& $f_0+f_1x+\cdots+f_dx^d$ & $\longmapsto$ & $f_0\mathrm{id}+f_1 \theta +\cdots+f_d \theta^d$.
\end{tabular}
\end{equation}

The map $\Phi$ is an $\F_q$-algebra epimorphism whose kernel is the two-sided ideal $R(x^m-1)$. Hence, $\Phi$ induces an $\F_q$-algebra isomorphism 
\[
\begin{tabular}{l c c c }
$\overline{\Phi}:$ & $\frac{R}{R(x^m-1)}$ & $\longrightarrow$ & $\F_{q^m}[\theta]$ \\
& $f_0+f_1x+\cdots+f_{m-1}x^{m-1}+ R(x^m-1)$ & $\longmapsto$ & $f_0\mathrm{id}+f_1 \theta +\cdots+f_{m-1} \theta^{m-1}$.
\end{tabular}
\]

Now, we can give a skew polynomial description of the space $(\F_{q^m}[\theta])^t$, extending the map $\Phi$ defined in \eqref{eq:skewandlinearized}, which will allow us to deal with sum-rank metric codes in the framework of skew polynomial rings.

\begin{theorem} [see \textnormal{\cite[Theorem 8]{martinez2022theory}}] \label{th:multiplelinearizedskew} 
The map 
\[
\begin{tabular}{l c c c }
$\Phi_{\boldsymbol{\alpha}}:$ & $R$ & $\longrightarrow$ & $(\F_{q^m}[\theta])^t$ \\
& $f$ & $\longmapsto$ & $(\Phi(f_{\alpha_1}), \ldots,\Phi(f_{\alpha_t}) )$
\end{tabular}
\]
is an $\F_q$-algebra epimorphism, whose kernel is the two-sided ideal $RH_{\Lambda}$. Hence, $\Phi_{\boldsymbol{\alpha}}$ induces an $\F_q$-algebra isomorphism
\[
\begin{tabular}{l c c c }
$\overline{\Phi}_{\boldsymbol{\alpha}}:$ & $\frac{R}{R\Hl}$ & $\longrightarrow$ & $(\F_{q^m}[\theta])^t$ \\
& $f+R\Hl$ & $\longmapsto$ & $(\Phi(f_{\alpha_1}), \ldots,\Phi(f_{\alpha_t}) )$.
\end{tabular}
\]
\end{theorem}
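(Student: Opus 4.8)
The plan is to show that $\Phi_{\boldsymbol{\alpha}}$ is a surjective $\F_q$-algebra homomorphism with kernel exactly $RH_{\Lambda}$, and then invoke the first isomorphism theorem for rings (in the category of $\F_q$-algebras). Since each coordinate map $f \mapsto \Phi(f_{\alpha_i})$ factors as the composite $f \mapsto f_{\alpha_i} \mapsto \Phi(f_{\alpha_i})$, I would first check that $f \mapsto f_{\alpha}$ is an $\F_q$-algebra endomorphism of $R$ for each fixed $\alpha \in \F_{q^m}^*$. Additivity and $\F_q$-linearity are clear from the definition $f_{\alpha}(x) = \sum_i f_i \N_{\theta}^i(\alpha) x^i$; the multiplicativity $(fg)_{\alpha} = f_{\alpha}g_{\alpha}$ follows from the cocycle-type identity $\N_{\theta}^{i+j}(\alpha) = \N_{\theta}^i(\alpha)\,\theta^i(\N_{\theta}^j(\alpha))$ together with the commutation rule $x \cdot a = \theta(a)\cdot x$ in $R$; this is a short but genuinely load-bearing computation that I would carry out on monomials and extend by bilinearity. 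Composing with the known $\F_q$-algebra epimorphism $\Phi : R \to \F_{q^m}[\theta]$ of \eqref{eq:skewandlinearized}, each coordinate of $\Phi_{\boldsymbol{\alpha}}$ is an $\F_q$-algebra homomorphism, hence so is $\Phi_{\boldsymbol{\alpha}}$ into the product algebra $(\F_{q^m}[\theta])^t$.

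Next I would identify the kernel. For surjectivity and the kernel computation it is cleanest to argue by dimension count. The domain $R$ is infinite-dimensional, but $RH_{\Lambda} \subseteq \ker \Phi_{\boldsymbol{\alpha}}$ is immediate once one shows $\Phi_{\boldsymbol{\alpha}}(H_{\Lambda}) = 0$ in each coordinate: indeed $(H_{\Lambda})_{\alpha_i}$ is divisible on the right by $(x^m - \lambda_i)_{\alpha_i}$, and one computes $(x^m - \lambda_i)_{\alpha_i} = \N_{\theta}^m(\alpha_i) x^m - \lambda_i = \lambda_i(x^m - 1)$ using $\N_{\theta}^m(\alpha_i) = \N_{q^m/q}(\alpha_i) = \lambda_i$, so $\Phi((x^m-\lambda_i)_{\alpha_i}) = \lambda_i \Phi(x^m - 1) = 0$ since $x^m - 1 \in \ker \Phi$; as $f \mapsto f_{\alpha_i}$ is multiplicative, $(H_{\Lambda})_{\alpha_i}$ is a left multiple of $(x^m-\lambda_i)_{\alpha_i}$ and hence also lies in $\ker \Phi$. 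Therefore $\Phi_{\boldsymbol{\alpha}}$ descends to a well-defined $\F_q$-algebra homomorphism $\overline{\Phi}_{\boldsymbol{\alpha}} : R/RH_{\Lambda} \to (\F_{q^m}[\theta])^t$. Now both sides have $\F_q$-dimension $tm^2$: the target because $\dim_{\F_q} \F_{q^m}[\theta] = m^2$ and there are $t$ factors, and the source because $H_{\Lambda}$ has degree $tm$ so a complete residue system is given by polynomials of degree $< tm$, which is an $\F_{q^m}$-space of dimension $tm$, i.e.\ an $\F_q$-space of dimension $tm^2$.

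It then suffices to prove that $\overline{\Phi}_{\boldsymbol{\alpha}}$ is injective, for then it is a bijection by equality of dimensions, which simultaneously gives surjectivity of $\Phi_{\boldsymbol{\alpha}}$ and the equality $\ker \Phi_{\boldsymbol{\alpha}} = RH_{\Lambda}$. For injectivity, suppose $f \in R$ with $\deg f < tm$ satisfies $\Phi(f_{\alpha_i}) = 0$, i.e.\ $f_{\alpha_i} \in R(x^m - 1)$, equivalently $(x^m - \lambda_i) \mid_r f$ for each $i$ — here one undoes the twist $f \mapsto f_{\alpha_i}$, using that $g \mapsto g_{\alpha_i^{-1}}$ is the inverse automorphism and that right-divisibility by $x^m - 1$ pulls back to right-divisibility by $(x^m-1)_{\alpha_i^{-1}} = \N_{\theta}^m(\alpha_i^{-1})(x^m - \lambda_i^{-1}\cdot\text{const})$, which I would normalize carefully to land at $x^m - \lambda_i$. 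The elements $x^m - \lambda_1, \ldots, x^m - \lambda_t$ are central, pairwise coprime in the center $\F_q[x^m]$ (since the $\lambda_i$ are distinct), so being a right multiple of each forces $f$ to be a right multiple of their least common left multiple, which is $H_{\Lambda} = \prod_i (x^m - \lambda_i)$; combined with $\deg f < \deg H_{\Lambda} = tm$ this gives $f = 0$.

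The main obstacle I anticipate is bookkeeping around the twist map $f \mapsto f_{\alpha}$: verifying its multiplicativity from the norm cocycle identity, checking that it is an automorphism of $R$ with inverse $g \mapsto g_{\alpha^{-1}}$ (which again rests on $\N_{\theta}^i(\alpha)\N_{\theta}^i(\alpha^{-1}) = \N_{\theta}^i(1) = 1$ — wait, more precisely $\N_{\theta}^i(\alpha^{-1}) = \N_{\theta}^i(\alpha)^{-1}$), and tracking how right-divisibility by $x^m - 1$ transforms under it into right-divisibility by $x^m - \lambda_i$. Once these twist-map facts are in place, the rest is the standard first-isomorphism-theorem-plus-dimension-count argument and the centrality/coprimality of the $x^m - \lambda_i$, both of which are routine. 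Alternatively, one could cite \cite[Theorem 8]{martinez2022theory} directly as the excerpt does, in which case the proof reduces to recalling that statement; but the self-contained argument above is the one I would write out.
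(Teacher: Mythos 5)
Your proof is correct, and it is genuinely more than what the paper does: the paper gives no argument at all for this statement, simply importing it as \cite[Theorem 8]{martinez2022theory}, so your self-contained derivation is the only proof on the table. The skeleton — (i) $f\mapsto f_{\alpha}$ is an $\F_q$-algebra automorphism of $R$ via the cocycle identity $\N_{\theta}^{i+j}(\alpha)=\N_{\theta}^i(\alpha)\,\theta^i(\N_{\theta}^j(\alpha))$, (ii) $RH_{\Lambda}\subseteq\ker\Phi_{\boldsymbol{\alpha}}$ because $(x^m-\lambda_i)_{\alpha_i}=\lambda_i(x^m-1)\in\ker\Phi$ and the factors of $H_{\Lambda}$ are central so $x^m-\lambda_i$ can be moved to the rightmost position, (iii) both sides have $\F_q$-dimension $tm^2$, (iv) injectivity of the induced map via pulling back right-divisibility by $x^m-1$ along $g\mapsto g_{\alpha_i^{-1}}$ and using pairwise coprimality of the central elements $x^m-\lambda_i$ — is sound. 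The one step you flagged as needing care resolves cleanly: $(x^m-1)_{\alpha_i^{-1}}=\N_{\theta}^m(\alpha_i^{-1})x^m-1=\lambda_i^{-1}(x^m-\lambda_i)$, which is a left unit multiple of $x^m-\lambda_i$, so right-divisibility by one is equivalent to right-divisibility by the other; and since the $x^m-\lambda_i$ lie in the center $\F_q[x^m]$ and are pairwise coprime there, their least common left multiple is indeed the product $H_{\Lambda}$, forcing $f=0$ when $\deg f<tm$. Note also that your step (i) is exactly the content of the shift-map proposition the authors evidently had in reserve (the statement that $\omega_{\alpha}:f\mapsto f_{\alpha}$ is a ring isomorphism compatible with $\gcrd$), so your decomposition matches the toolkit the paper itself uses elsewhere, just assembled into a full proof.
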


In the following, when we write \( a \in R/R\Hl \), we usually denote an element of this form
\[
a = \sum_{i=0}^{nt-1} a_i x^i + R\Hl,
\]
i.e., the equivalence class \( a \) is identified with the unique skew polynomial \( a(x) = \sum_{i=0}^{nt-1} a_i x^i \in R \) of minimal degree representing it.

The following notion of sum-rank weight on the quotient space $R/RH_{\Lambda}$ has been defined. 

\begin{definition}
   The \textbf{sum-rank weight} on the space $R/RH_{\Lambda}$ of an element $a$ is  
    \[
    \wl(a)=tm-\deg(\gcrd(a,\Hl)).
    \]
    Moreover, the sum-rank weight induces the \textbf{sum-rank distance} on $\RH$, which is defined as
\[
\dl(a,b):=\wl(a-b),
\]
for every $a,b \in \RH$.
\end{definition}

So, in the metric space $(R/R\Hl,\dl)$ an \textbf{(additive) sum-rank metric code} is an $\F_p$-subspace of $R/R\Hl$.
The definition of sum-rank weight and distance on $R/RH_{\Lambda}$ is consistent with the metric $\dsrk$ on the space $(\F_{q^m}[\theta])^t$. In other words,
\[
\dl(f,g)=\dsrk(\overline{\Phi}_{\boldsymbol{\alpha}}(f),\overline{\Phi}_{\boldsymbol{\alpha}}(g)),
\]
for every $f,g \in \RH$, where $\dsrk(\cdot,\cdot)$ is defined as in \eqref{eq:sumrankpoly}, cf. \cite[Corollary 4.5]{neri2021twisted}. As a consequence, the map 
\begin{equation} \label{eq:isometryskewskew}
\overline{\Phi}_{\boldsymbol{\alpha}}:\left(\frac{R}{R\Hl},\dl\right) \rightarrow ((\F_{q^m}[\theta])^t,\dsrk)
\end{equation}
is an isometry of metric spaces.

\begin{remark} \label{rk:changealpha}
    The isometry $\overline{\Phi}_{\boldsymbol{\alpha}}$, clearly depends on the chosen 
elements $\alpha_1,\ldots,\alpha_t$, and, consequently, on the set $\Lambda$ of their norms. However, it is worth noting that any set of elements $\alpha_1,\ldots,\alpha_t$ with pairwise distinct norms also induces an isometry as well.
\end{remark}

Thanks to the isometry between the skew group algebra and the skew polynomial frameworks, one can connect the frameworks of skew polynomial and matrix settings. Specifically, using the isometry $\overline{\phi}$ given in \eqref{eq:ismotrymatrpoly}, the map
\begin{equation} \label{eq:algebramatrixskew}
\begin{tabular}{l c c c }
$\overline{\phi} \circ \overline{\Phi}_{\boldsymbol{\alpha}}:$ & $\frac{R}{R\Hl}$ & $\longrightarrow$ & $\Mat(\bfm,\bfm,\F_q)$ \\
& $f$ & $\longmapsto$ & $(\phi_{\Phi(f_{\alpha_1})}, \ldots,\phi_{\Phi(f_{\alpha_1})}) $
\end{tabular}
\end{equation}
where $m = m_1 = \ldots = m_t$, is an $\F_q$-algebra isomorphism that also defines an isometry between the metric spaces
\begin{equation} \label{eq:isometryskewmatrix}
\overline{\phi} \circ \overline{\Phi}_{\boldsymbol{\alpha}}:\left(\frac{R}{R\Hl},\dl\right) \rightarrow (\mathrm{Mat}(\bfm,\bfm,\F_q),\dsrk).
\end{equation}
We can therefore transfer all the notions developed in the matrix-based setting to the skew polynomial framework, while preserving the metric properties of the corresponding objects.


As proved in \cite[Sections 4 and 5]{neri2021twisted}, it is possible describe the dual of a code $\C$ within the skew polynomial framework consistently with respect to the matrix setting. We will recall it in the next. Assume that $\Lambda$ is a  (cyclic) subgroup of $\F_q^*$.

Consider the following nondegenerate $\F_p$-bilinear form $\langle \cdot , \cdot \rangle_{\Lambda}$ on $\RH$ defined as
\[
\langle f , g \rangle_{\Lambda}:= \mathrm{Tr}_{q^m/p}\left( \sum_{i=0}^{tm-1}f_ig_i \right),
\]
for every $f=\sum_{i=0}^{tm-1}f_ix^i+R\Hl,g=\sum_{i=0}^{tm-1}g_ix^i+R\Hl \in \RH$. 

We recall that it is possible to choose an \(\mathbb{F}_q\)-basis for \(\mathbb{F}_{q^m}\) so that, under the isometry described in \eqref{eq:isometryskewmatrix}, this bilinear form coincides with the one defined in \eqref{eq:bilinearmatrix}; see \cite[Section 5.1]{neri2021twisted}.
 
\begin{definition}
    Let $\C \subseteq \RH$. The \textbf{dual code} of $\C$ is 
    \[
    \C^{\perp}:=\left\{f \in \frac{R}{R\Hl} \colon \langle f , g \rangle_{\Lambda}=0 \mbox{ for every }g \in \C\right\}.
    \]
\end{definition}

\subsection{Nuclear parameters of MSRD codes}

The skew polynomial framework offers a key advantage: it enables the translation of concepts from sum-rank metric codes - originally formulated in the matrix or skew algebra settings - into the language of skew polynomials. This allows us to represent a tuple of matrices or a tuple of linearized polynomials as a single skew polynomial within a suitable quotient ring. 

We begin by reformulating the notions of idealisers, centralizers, and the center of a sum-rank metric code in terms of skew polynomials.

\begin{proposition} \label{prop:invariantsskew}
    Let $\C \subseteq \Mat(\bfm,\bfm,\F_q)$, with $m=m_1=\ldots=m_t$, and $t \leq q-1$. Then the following ring isomorphisms hold
 \[
    \begin{array}{rl}
       \lid(\C) \simeq   & \left\{g \in \RH \colon gf \in \C, \mbox{ for every }f \in \C\right\}, \\
         \rid(\C) \simeq   & \{g \in \RH \colon fg \in \C, \mbox{ for every }f \in \C\}, \\
         \mathrm{C}(\C) \simeq  & \{g \in \RH \colon gf=fg , \mbox{ for every }f \in \C\},\\
         Z(\C) \simeq &  \{g \in \RH \colon gf_1 \in \C \mbox{ and }gf_2=f_2g \mbox{ for every }f_1,f_2 \in \C\} 
    \end{array}
    \]
\end{proposition}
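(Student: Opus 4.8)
The plan is to exploit the $\F_q$-algebra isomorphism $\overline{\phi}\circ\overline{\Phi}_{\boldsymbol{\alpha}}\colon R/R\Hl\to\Mat(\bfm,\bfm,\F_q)$ from \eqref{eq:algebramatrixskew}, which—thanks to \Cref{th:multiplelinearizedskew}—is a bijective ring homomorphism, and is simultaneously an isometry for the sum-rank metric by \eqref{eq:isometryskewmatrix}. The key observation is that each of the four objects $\lid(\C)$, $\rid(\C)$, $\mathrm{C}(\C)$, $Z(\C)$ is defined purely in terms of the \emph{ring operations} on $\Mat(\bfm,\bfm,\F_q)$, namely blockwise sum and blockwise product: indeed $\circ_{\mathbf{1}}$ is exactly the multiplication in the ring $\Mat(\bfm,\bfm,\F_q)$ (the ambient ring $\Mat(\bfm,\bfm,\F_q)$ is the direct product of the $t$ matrix rings $\F_q^{m\times m}$, and $(D_1,\dots,D_t)\circ_{\mathbf{1}}(A_1,\dots,A_t)=(D_1A_1,\dots,D_tA_t)$ is its product), and left multiplication, right multiplication, and commuting are all ring-theoretic conditions.

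First I would set $\Psi:=\overline{\phi}\circ\overline{\Phi}_{\boldsymbol{\alpha}}$ and write $\C$ for the code in $\Mat(\bfm,\bfm,\F_q)$; under $\Psi^{-1}$ it corresponds to the $\F_p$-subspace $\widetilde\C=\Psi^{-1}(\C)\subseteq R/R\Hl$. Then for the left idealiser I would argue: $D\in\lid(\C)$ means $D\circ_{\mathbf 1}A\in\C$ for all $A\in\C$; applying the ring isomorphism $\Psi^{-1}$, and using that $\Psi^{-1}(D\circ_{\mathbf 1}A)=\Psi^{-1}(D)\,\Psi^{-1}(A)$ (multiplicativity of $\Psi^{-1}$), this is equivalent to $g f\in\widetilde\C$ for all $f\in\widetilde\C$, where $g=\Psi^{-1}(D)$. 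Hence $\Psi^{-1}$ restricts to a bijection from $\lid(\C)$ onto $\{g\in R/R\Hl : gf\in\widetilde\C\ \text{for every}\ f\in\widetilde\C\}$, and since $\Psi^{-1}$ is a ring homomorphism and $\lid(\C)$ is a subring of $\Mat(\bfm,\bfm,\F_q)$ (as noted after \Cref{prop:equivcodeimpliesequivideal}), this restriction is a ring isomorphism. The cases of $\rid$ and $\mathrm{C}$ are completely parallel: right multiplication $A\circ_{(0,\dots,0)}D=(A_1D_1,\dots,A_tD_t)$ transports to right multiplication $fg$ in $R/R\Hl$, and the commuting condition $D_iA_i=A_iD_i$ for all $i$ transports to $gf=fg$, because the isomorphism identifies the blockwise product with the product in $R/R\Hl$. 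Finally, $Z(\C)=\lid(\C)\cap\mathrm C(\C)$ transports to the intersection of the two corresponding subrings of $R/R\Hl$, which is exactly $\{g : gf_1\in\widetilde\C\ \text{and}\ gf_2=f_2g\ \text{for all}\ f_1,f_2\in\widetilde\C\}$, giving the last isomorphism.

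I do not anticipate a serious obstacle here; the proof is essentially a transport-of-structure argument, and the main point to state carefully is that $\Psi=\overline{\phi}\circ\overline{\Phi}_{\boldsymbol{\alpha}}$ is not merely additive but a genuine $\F_q$-algebra isomorphism, so it carries the product $\circ_{\mathbf 1}$ to the product in $R/R\Hl$ and therefore preserves all ring-theoretic relations. One mild subtlety worth a sentence: the statement writes ``$\mathcal C$'' on the skew-polynomial side for what is really $\Psi^{-1}(\C)$; since $\Psi^{-1}$ is an isometric ring isomorphism, one may simply identify $\C$ with its image and state the four descriptions as above. It is in this identification that the hypotheses $m=m_1=\dots=m_t$ and $t\le q-1$ enter—the former so that the codomain of $\Psi$ is $\Mat(\bfm,\bfm,\F_q)$ with $\bfm=(m,\dots,m)$, and the latter guaranteeing, via the norm condition on $\alpha_1,\dots,\alpha_t$, that the isomorphism $\overline{\Phi}_{\boldsymbol{\alpha}}$ of \Cref{th:multiplelinearizedskew} is available at all.
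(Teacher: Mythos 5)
Your proposal is correct and follows exactly the paper's route: the paper's own proof is a one-line appeal to the ring isomorphism $\overline{\phi}\circ\overline{\Phi}_{\boldsymbol{\alpha}}$ of \eqref{eq:algebramatrixskew}, and your argument is simply a careful spelling-out of that transport-of-structure step, including the correct observations about where the hypotheses $m=m_1=\dots=m_t$ and $t\le q-1$ are used.
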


\begin{proof}
The statement follows directly by using the ring isomorphism between $\Mat(\bfm, \bfm, \F_q)$ and $\RH$ given in \eqref{eq:algebramatrixskew}.
\end{proof}

Therefore, left and right idealisers, centralisers and the centre can be used as invariants for codes defined in the skew polynomial framework $\RH$.

\begin{remark}
Note that the isometry defined in \eqref{eq:isometryskewmatrix} satisfies
\[
\overline{\phi} \circ \overline{\Phi}_{\boldsymbol{\alpha}}(1) = (\mathrm{id}, \ldots, \mathrm{id}).
\]
Therefore, if \(\mathcal{C}\) is a sum-rank metric code in \(R / R \Hl\), the condition \(1 \in \mathcal{C}\) corresponds to the condition \((I_m, \ldots, I_m) \in \overline{\phi} \circ \overline{\Phi}_{\boldsymbol{\alpha}}(\mathcal{C}) \subseteq \operatorname{Mat}(\mathbf{m}, \mathbf{m}, \mathbb{F}_q)\). Therefore, in the following, when computing the centraliser of a code \(\mathcal{C}\), we assume that \(1 \in \mathcal{C}\).
\end{remark}

The first construction of MSRD codes was introduced in \cite{Martinez2018skew} by Martínez-Peñas, and such codes are refereed as \emph{linearized Reed-Solomon codes}. These are the sum-rank metric analog of Gabidulin codes in the rank metric and Reed-Solomon codes in the Hamming metric. In \cite{neri2021twisted}, Neri introduced a new family of MSRD codes. These codes are known as \emph{additive twisted linearized Reed-Solomon codes}, as they can be considered an extension in the sum-rank metric of additive twisted Gabidulin codes \cite{sheekey2016new,otal2016additive} in the rank metric and twisted Reed-Solomon codes in the Hamming metric \cite{beelen2018structural,beelen2022twisted}. 

\begin{definition}[see \textnormal{ \cite[Definition 31]{Martinez2018skew} and \cite[Definition 6.2]{neri2021twisted}}]\label{def:ATLRScodes}
Let $\tau \in \Aut(\F_{q^m})$. For every $1 \leq k < tm$, the code
\[
\C_{k,\theta}(\eta,\tau)=\left\{ f_0+\ldots+f_{k-1}x^{k-1}+\eta \tau(f_0)x^k+R\Hl \colon f_i \in \F_{q^m}\right\} \subseteq \frac{R}{R\Hl}, 
\]
is called \textbf{additive twisted linearized
Reed-Solomon (ATLRS) code}.
\end{definition}

\begin{theorem} [see \textnormal{\cite[Theorem 4]{Martinez2018skew}} and \textnormal{\cite[Theorem 6.3 and Remark 6.7]{neri2021twisted}}]
    Consider $\F:=\fq \cap \Fix(\F_{q^m},\tau)$ and let $u=[\fq:\F]$. Let $\eta \in \F_{q^m}$ such that \[(-1)^{ukm}\N_{\F_{q^m}/\F}(\eta) \notin \langle \Lambda \rangle,\] where $\langle \Lambda \rangle$ denotes the multiplicative subgroup of $\F_{q}^*$
generated by $\Lambda$. For every $1 \leq k <tm$, the code $\mathcal{C}_{k,\theta}(\eta,\tau)$ is an MSRD code
\end{theorem}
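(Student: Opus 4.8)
The plan is to show that $\mathcal{C}_{k,\theta}(\eta,\tau)$ has the right cardinality and minimum distance $d = tm - k + 1$, which forces equality in the Singleton-like bound. Since $\mathcal{C}_{k,\theta}(\eta,\tau)$ is parametrized by the free coefficients $f_0,\ldots,f_{k-1}\in\F_{q^m}$ (the coefficient of $x^k$ being the determined value $\eta\tau(f_0)$), it is an $\F_q$-subspace of $R/R\Hl$ of size $q^{mk}$, matching the MSRD cardinality $q^{n(M-d+1)} = q^{m(tm - (tm-k+1)+1)} = q^{mk}$ in the case $n_1=\cdots=n_t=m$. So the whole content is the distance bound: every nonzero $f = f_0 + \cdots + f_{k-1}x^{k-1} + \eta\tau(f_0)x^k + R\Hl$ satisfies $\wl(f) = tm - \deg(\gcrd(f,\Hl)) \geq tm - k + 1$, equivalently $\deg(\gcrd(f,\Hl)) \leq k - 1$.

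The key step is therefore to bound $\deg(\gcrd(f,H_\Lambda))$. Suppose for contradiction $\deg(\gcrd(f,H_\Lambda)) \geq k$. Writing $H_\Lambda = \prod_{i=1}^t (x^m - \lambda_i)$ with the $\lambda_i = \N_{q^m/q}(\alpha_i)$ pairwise distinct, the right divisors of $H_\Lambda$ are controlled by which of the "evaluation" conditions the polynomial $f$ meets — this is exactly the mechanism behind the existing theory of linearized Reed--Solomon codes via the skew-polynomial remainder/evaluation map. Concretely, I would pass through the isometry $\overline{\Phi}_{\boldsymbol\alpha}$ to $(\F_{q^m}[\theta])^t$: the weight of $f$ equals $\sum_i \rk(\Phi(f_{\alpha_i}))$, and $\Phi(f_{\alpha_i})$ is the $\theta$-linearized polynomial of $\theta$-degree $\leq k$ obtained from $f$ by the substitution $x\mapsto$ (operator associated to $\alpha_i$) — i.e. $f_{\alpha_i}(x) \bmod (x^m-1)$ under the norm twist. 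The point is that if $f$ has sum-rank weight $\leq k-1$, then the total dimension of the kernels $\sum_i \dim\ker\Phi(f_{\alpha_i}) \geq tm - (k-1)$, and by the linear-algebra core of Martínez-Peñas's argument (the $\F_q$-linear independence of the relevant root spaces across distinct norm classes — a skew-polynomial analogue of the fact that a degree-$k$ polynomial has at most $k$ roots, using that the $\alpha_i$ lie in distinct norm cosets) this forces $f$ to be a left multiple of a skew polynomial of degree $\geq tm - (k-1)$ dividing $H_\Lambda$; but $f$ has degree $\leq k$. So if $tm - k + 1 > k$, i.e. whenever the degree constraint bites, the only possibility left is to examine the boundary term carefully — and this is where the twist coefficient $\eta$ enters.

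The main obstacle, and the only place the hypothesis on $\eta$ is used, is the boundary case where $f$ could have weight exactly $k$ or $k-1$ "accidentally" because of the degree-$k$ term. Here one mimics the twisted-Gabidulin argument: if $\deg(\gcrd(f,H_\Lambda)) \geq k$, write the cofactor and compare leading/trailing coefficients; because $\gcrd(f,H_\Lambda)$ right-divides the central element $H_\Lambda$, its "norm" (the constant term of the associated central polynomial, up to sign) must lie in $\langle\Lambda\rangle$. On the other hand, reading off the top coefficient $\eta\tau(f_0)$ and the constant coefficient $f_0$ of $f$ and tracking them through the division, one computes that this norm equals $(-1)^{ukm}\N_{\F_{q^m}/\F}(\eta)$ times an element of $\langle\Lambda\rangle$ — contradicting $(-1)^{ukm}\N_{\F_{q^m}/\F}(\eta)\notin\langle\Lambda\rangle$. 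The delicate points are: (i) identifying precisely which norm/determinant invariant of a right divisor of $H_\Lambda$ is constrained to $\langle\Lambda\rangle$ (this is a computation with central elements of $R$, using that $x^m - \lambda_i$ is central and that a product of such right-divides $H_\Lambda$ only in restricted ways), and (ii) bookkeeping the $\tau$-semilinearity and the exponent $ukm$ correctly — the factor $u = [\F_q:\F]$ appears because $\tau$ restricted to $\F_q$ has order $u$, and $(-1)^{ukm}$ comes from the sign accrued in forming the relevant determinant over the degree-$m$ extension iterated through the $k$ twisted positions. I would structure the write-up as: (1) cardinality; (2) reduce MSRD to $d \geq tm-k+1$; (3) the generic weight bound via the evaluation/kernel-dimension argument; (4) the boundary case, isolating the norm obstruction and invoking the hypothesis on $\eta$; then cite \cite[Theorem 4]{Martinez2018skew} for the untwisted specialization $\eta = 0$ and \cite[Theorem 6.3]{neri2021twisted} for the structure of the argument in the twisted case.
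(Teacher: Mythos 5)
The paper gives no proof of this statement: it is imported directly from \cite[Theorem 4]{Martinez2018skew} and \cite[Theorem 6.3, Remark 6.7]{neri2021twisted}, so there is no in-paper argument to compare against. Judged on its own, your outline gets the global strategy right (count the code, reduce MSRD to $d\geq tm-k+1$, use the hypothesis on $\eta$ only to exclude one extremal configuration), but the middle portion is both misstated and heavier than it needs to be, and the decisive step is only named, not carried out.

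Concretely: the ``generic weight bound'' does not require any kernel-dimension or root-space-independence argument across the blocks. Every nonzero codeword $f$ is represented by a skew polynomial of degree at most $k$, and $\gcrd(f,\Hl)$ right-divides $f$, so $\deg(\gcrd(f,\Hl))\leq k$ and hence $\wl(f)=tm-\deg(\gcrd(f,\Hl))\geq tm-k$ for \emph{all} $k$; your remark that the degree constraint only ``bites'' when $tm-k+1>k$ suggests a misreading of where the difficulty lies. The entire content of the theorem is excluding weight exactly $tm-k$, which happens if and only if $\deg(\gcrd(f,\Hl))=k$, i.e.\ if and only if $f$ has degree exactly $k$ (which forces $f_0\neq 0$, since the leading coefficient is $\eta\tau(f_0)$) and $f$ itself right-divides $\Hl$. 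The proof therefore rests on a single lemma that you gesture at but neither state precisely nor prove: if a skew polynomial $g$ of degree $k$ with $g_0\neq 0$ right-divides $\prod_{i}(x^m-\lambda_i)$, then a signed norm of $g_k g_0^{-1}$ lands in $\langle\Lambda\rangle$; the version needed here takes the norm down to $\F=\fq\cap\Fix(\F_{q^m},\tau)$ precisely so that $\N_{\F_{q^m}/\F}(\tau(f_0))=\N_{\F_{q^m}/\F}(f_0)$, making the $f_0$-dependence cancel and leaving exactly the forbidden element $(-1)^{ukm}\N_{\F_{q^m}/\F}(\eta)$. That lemma is the actual mathematical content of \cite[Theorem 6.3]{neri2021twisted}; without it your write-up is an outline of the known proof rather than a proof.
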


When $\eta = 0$, these codes coincide with the linearized Reed–Solomon codes, and we define \[
\C_{k,\theta} := \C_{k,\theta}(0, \mathrm{id}).
\] 

The main problem regards to show whether the family of ATLRS codes contains examples that are not equivalent to linearized Reed-Solomon codes.
In order to solve this problem, we will study the nuclear parameters of the codes of these two families. 

We begin by computing the left and right idealisers of \(\mathcal{C}_{k, \theta}(\eta, \tau)\), assuming that \(3 \leq k \leq tm/2\). This restriction on \(k\) is necessary for our computation. Then, using a duality argument, we obtain the idealisers for the complementary range \(tm/2 + 1 \leq k \leq tm - 3\).

\begin{theorem} \label{lm:idealizersATLRS}
Let $\C=\C_{k,\theta}(\eta,\tau)$ as in \Cref{def:ATLRScodes}, with $3 \leq k \leq tm/2$. Then: 
    \begin{itemize}
    \item if $\eta=0$, we have
        \[
        \lid(\C)\simeq\F_{q^m} \ \mbox{ and }\ \rid(\C)\simeq\F_{q^m};
        \]
        \item if $\eta \neq 0$, we have
        \[
        \lid(\C)\simeq \Fix(\F_{q^m},\tau) \ \mbox{ and }\ \rid(\C)\simeq \Fix(\F_{q^m},\tau^{-1} \circ \theta^{k}).
        \]
    \end{itemize}
\end{theorem}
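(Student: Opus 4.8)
<br>

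The plan is to work entirely inside the quotient ring $R/R\Hl$, using the reformulation of left and right idealisers from \Cref{prop:invariantsskew}. So $\lid(\C)$ is identified with $\{g \in \RH : gf \in \C \text{ for all } f \in \C\}$ and $\rid(\C)$ with $\{g \in \RH : fg \in \C \text{ for all } f \in \C\}$. The case $\eta = 0$ is the LRS case; here $\C = \C_{k,\theta}$ consists of all skew polynomials of degree $< k$, so it is (the image of) a free left $\F_{q^m}[x;\theta]$-module truncated at degree $k$. Since $\C$ contains $1$ and $x, \ldots, x^{k-1}$, for $g$ to satisfy $gf \in \C$ for all $f$ it suffices (by $\F_p$-linearity) to check $g \cdot 1 = g \in \C$ and $g x^i \in \C$; already $g \in \C$ forces $\deg g < k$, and then $gx^{k-1} \in \C$ forces $g$ to be a constant $a \in \F_{q^m}$ (degree considerations, using $k \geq 2$ and that $x^{k-1}$ has degree $k-1$, and $k - 1 + \deg g < k$ iff $\deg g = 0$). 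Conversely every constant lies in $\lid(\C)$. This gives $\lid(\C) \cong \F_{q^m}$; the right idealiser is symmetric, noting that $x^{k-1} a = \theta^{k-1}(a) x^{k-1} \in \C$, so again constants work, and $\deg$ arguments rule everything else out.

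The substantive case is $\eta \neq 0$. Here a codeword has the shape $f = f_0 + f_1 x + \cdots + f_{k-1}x^{k-1} + \eta\tau(f_0)x^k + R\Hl$. For the left idealiser I would take $g = \sum_{i=0}^{tm-1} g_i x^i$ and impose $gf \in \C$ for all $f$; the crucial test cases are again $f = 1$ (giving $g + \eta\tau(1)\cdot 0 \cdot$ — careful: $f=1$ means $f_0 = 1$, so $f = 1 + \eta\tau(1)x^k$) and $f = x^j$ for $1 \leq j \leq k-1$ (these have $f_0 = 0$, so no twist term, $f = x^j$). From $gx^j \in \C$ with $j$ ranging over $1,\ldots,k-1$ and the assumption $3 \leq k \leq tm/2$ (so that $gx^j$ has degree $< tm$ when $\deg g$ is small, no wraparound modulo $\Hl$), I get that $g$ must have degree $0$, i.e. $g = a \in \F_{q^m}$ — this is where $k \geq 3$ gives enough room: one needs at least two shifts $x^j, x^{j+1}$ landing in $\C$ to pin down that $g$ is constant, and one needs $k \leq tm/2$ so that degrees stay below $tm$ and the reduction mod $\Hl$ is trivial for the relevant products. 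Then imposing $g \cdot (1 + \eta\tau(1)x^k) = a + a\eta\tau(1)x^k \in \C$: the degree-$k$ coefficient of a codeword with constant term $a$ must equal $\eta\tau(a)$, so we need $a\eta\tau(1) = \eta\tau(a)$, i.e. (since $\eta \neq 0$ and $\tau$ is a ring automorphism with $\tau(1)=1$) $a = \tau(a)$. Hence $\lid(\C) \cong \Fix(\F_{q^m},\tau)$, and conversely any such $a$ works.

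For the right idealiser with $\eta \neq 0$ the computation is analogous but the twist interacts with multiplication on the right, which is why $\theta$ enters. Writing $g = a$ a constant (ruled in by the same degree argument applied to $x^j g = \theta^j(a)x^j \in \C$ for $1 \le j \le k-1$), the condition $f g \in \C$ applied to $f = 1 + \eta\tau(1)x^k$ gives $a + \eta\tau(1) x^k a = a + \eta\tau(1)\theta^k(a) x^k$, whose degree-$k$ coefficient must be $\eta\tau(\text{constant term})= \eta\tau(a)$. So the condition is $\eta\tau(1)\theta^k(a) = \eta\tau(a)$, i.e. $\theta^k(a) = (\tau^{-1}\circ\tau(1)^{-1}\cdot\tau)(a)$; since $\tau(1)=1$ this is $\theta^k(a) = \tau^{-1}(\tau(a))$ — let me redo: $\tau^{-1}(\eta\tau(1)\theta^k(a)) = a$, i.e. $\theta^k(a) = \tau(\tau^{-1}(a)) $... the clean statement the authors want is $a \in \Fix(\F_{q^m}, \tau^{-1}\circ\theta^k)$, i.e. $(\tau^{-1}\circ\theta^k)(a) = a$, equivalently $\theta^k(a) = \tau(a)$. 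One should double-check the bookkeeping of where $\tau$ versus $\tau^{-1}$ lands, but the mechanism is: right-multiplication by the constant $a$ sends the coefficient of $x^k$ to $\theta^k(a)$ times itself, while membership in $\C$ demands that coefficient be $\eta\tau(a)$, forcing $\theta^k(a)=\tau(a)$.

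The main obstacle is the degree/wraparound argument that forces $g$ to be a constant. One must carefully track products $gf$ in $R$, reduce modulo $R\Hl$ (degree $tm$), and argue that for $3 \leq k \leq tm/2$ no cancellation or wraparound can produce a codeword of $\C$ unless $\deg g = 0$; the hypotheses $k \geq 3$ (enough consecutive shifts $x^j \mapsto gx^j$ to constrain all coefficients of $g$) and $k \leq tm/2$ (so $\deg g + k - 1 < tm$, keeping us away from the modulus) are exactly what makes this work, and spelling this out cleanly — rather than the twist-coefficient bookkeeping, which is a short linear computation — is the delicate part. The complementary range $tm/2+1 \le k \le tm-3$ is then handled by applying \Cref{prop:dualtranspideal} together with the fact that the dual of $\C_{k,\theta}(\eta,\tau)$ is again an ATLRS-type code (as recalled via \cite{neri2021twisted}), transporting the idealisers through the adjoint.
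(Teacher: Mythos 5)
Your overall strategy coincides with the paper's: pass to $R/R\Hl$ via \Cref{prop:invariantsskew}, force elements of the idealisers to be constants by testing against monomials, and then read off the fixed-field conditions from the twist coefficient. Your $\eta=0$ case and the final coefficient computations (giving $\tau(g_0)=g_0$ for $\lid$ and $\theta^k(g_0)=\tau(g_0)$ for $\rid$) are essentially the paper's, modulo your own bookkeeping hesitations.

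However, there is a genuine gap exactly at the step you yourself flag as "the delicate part": in the case $\eta\neq 0$ you never prove that an element $g\in\lid(\C)$ (or $\rid(\C)$) has small degree. Since $1\notin\C$ when $\eta\neq 0$, the element $g$ is a priori an arbitrary class in $R/R\Hl$, so $\deg g$ can be as large as $tm-1$, and then the products $gx^j$ genuinely wrap around modulo $\Hl$. Your justification ("$gx^j$ has degree $<tm$ when $\deg g$ is small, no wraparound", "$\deg g + k-1 < tm$, keeping us away from the modulus") presupposes the very bound $\deg g\leq k-1$ that has to be established, so the argument is circular; asserting that "no wraparound can produce a codeword of $\C$ unless $\deg g=0$" is precisely the claim requiring proof. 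The paper closes this gap by an explicit reduction: writing $\Hl=H_0+\cdots+H_{t-1}x^{(t-1)m}+x^{tm}$, it computes $gx$ modulo $R\Hl$, and imposing $gx\in\C$ yields $g_{j-1}=g_{mt-1}H_{j/m}$ for all $j\in\{k+1,\ldots,mt-1\}$ (in particular $g_{mt-2}=0$) together with the relation $g_{k-1}-g_{mt-1}H_{k/m}=\eta\tau(H_0 g_{mt-1})$ coming from the twist constraint; then imposing $gx^2\in\C$ (this is where $k\geq 3$ enters, since it guarantees $x^2\in\C$) forces $g_{k-1}-g_{mt-1}H_{k/m}=0$, whence $\eta\tau(H_0 g_{mt-1})=0$ and, since $\eta\neq 0$ and $H_0\neq 0$, $g_{mt-1}=0$ and therefore $\deg g\leq k-1$. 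Only after this does the clean "multiply by $x^{k-1}$ and compare degrees" argument (where $k\leq tm/2$ ensures $2k-2<tm$) apply; the same wraparound analysis must be repeated, mutatis mutandis, for the right idealiser. Without this computation your proof does not go through.
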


\begin{proof}
We start by computing $\lid(\C)$. By \Cref{prop:invariantsskew}, we know that $\lid(\C)$ consists of all the elements $g \in \RH$ such that $g \C \subseteq \C$. First, we prove that any element of $\lid(\C)$ has degree at most $k-1$. If $\eta=0$, then $1 \in \C$ and so $\lid(\C) \subseteq \C$, implying that any $g \in \lid(\C)$ has degree at most $k-1$. Assume now that $\eta \neq 0$, our aim is to show that if $g\in \lid(\C)$ then its degree is at most $k-1$. Let $g\in \lid(\C)$, we must have that $g\alpha x^i \in\C,$ for all $\alpha \in \F_{q^m}$ and $i\in \{1,\ldots,k-1\}$. As $k\geq 3$, this set is non-empty. 
If $g=\sum_{i=0}^{tm-1}g_ix^i+R\Hl$ and $\Hl(x)=H_0+\ldots+H_{t-1}x^{(t-1)m}+x^{tm}$, we get
\[
gx = \left(\sum_{i=1}^{mt-1} g_{i-1}x^i\right) - g_{mt-1}\left( \sum_{i=0}^{t-1}H_i x^{mi} \right) +R\Hl.
\]
Hence, for all $j\in \{k+1,\ldots,mt-1\}$, we have
\begin{equation} \label{eq:g_{mt-1}}
g_{j-1} = g_{mt-1}H_{j/m},
\end{equation}
where $H_{j/m}:=0$, if  $j \not\equiv 0 \pmod m$. In particular, note that 
\begin{equation} \label{eq:g_{mt-2}}
    g_{mt-2}=0.
\end{equation}
We prove that $g_{mt-1}=0$. The coefficient of $x^{k}$ in $ gx$ is $g_{k-1}-g_{mt-1}H_{k/m}$. While the constant coefficient is $-g_{mt-1}H_0$. Therefore, as $gx \in \C$, we have
\begin{equation}\label{eq:etatauH0g}
g_{k-1} -g_{mt-1}H_{k/m}= \eta \tau(H_0g_{mt-1}).
\end{equation}
Additionally, since $k>2$, we also get that $gx^2 \in \C$. The coefficient of $x^{k+1}$ in $gx^2$ is \[g_{k-1} -g_{mt-1}H_{k/m}-g_{mt-2}H_{(k+1)/m}=g_{k-1} -g_{mt-1}H_{k/m},\]
where the equality follows by \eqref{eq:g_{mt-2}}. Since $gx^2 \in \C$, such coefficient must be zero. Hence, we have that $\eta \tau(H_0g_{mt-1})=0$ by Equation \eqref{eq:etatauH0g}. Since $\eta\ne 0 $ and $H_0\ne 0$, we must have $g_{mt-1}=0$, as claimed. As a consequence, by \eqref{eq:g_{mt-1}}, we get that $\deg(g)\leq k-1$ also in the case in which $\eta \ne 0$.
Therefore, for any $\eta$ we have 
\[\lid(\C) \subseteq \{g \in \RH  \colon \deg(g)\leq k-1\}.\] So, let $g\in \lid(\C)$, and $\deg(g)\leq k-1$. By using that $k>1$, we have $x^{k-1}\in \C$, and hence $gx^{k-1} \in \C$. Now since $\deg(gx^{k-1})\leq 2k-2<mt$ and 
\[gx^{k-1} = g_0x^{k-1}+g_1x^{k}+\ldots g_{sk}x^{2k-2} +R\Hl,\]we get that $g$ is constant and $g = g_0+R\Hl$. Now, since $f_0+\eta \tau(f_0)x^k +R\Hl\in \C$, it follows that $g_0(f_0+\eta \tau(f_0)x^k)+R\Hl\in \C$, or equivalently, $g_0\eta \tau(f_0) = \eta \tau (g_0f_0)$. Thus, if $\eta \ne 0$ then $\tau(g_0)=g_0$. Hence 
\[
\lid(\C) \simeq \begin{cases}\Fix(\F_{q^m},\tau) &\eta \ne 0\\ \F_{q^m}&\eta=0,\end{cases}
\]
as claimed. 
Arguing as in the first part, we can show that if $g\in \rid(\C)$ then $\deg(g)\leq k-1$, and $g=g_0+R\Hl$. Now, $(a_0+\eta \tau(a_0)x^k)g_0+R\Hl\in \C$ if and only if $\theta^{k}(g_0)\eta \tau(f_0)= \eta \tau(g_0f_0)$, and so if $\eta\ne 0$ then $\tau(g_0)=\theta^{k}(g_0)$. Hence 
\[
\rid(\C) \simeq \left\{\begin{array}{ll}\Fix(\F_{q^m},\tau^{-1} \circ \theta^k) &\eta \ne 0\\ \F_{q^m}&\eta=0.\end{array}\right.
\]
Hence the left idealiser and the right idealiser are as claimed.
\end{proof}

To determine the idealisers the case $k>tm/2$, we need to recall the duals of the LRS codes and of the ATLRS codes.

\begin{theorem} [see \textnormal{\cite[Theorem 5.9 and Theorem 6.11]{neri2021twisted}}] \label{th:dualreed}
Let $\Lambda$ be a cyclic subgroup of $\F_q^*$ and let $1\leq k <tm$ be a positive integer. Then
\begin{enumerate}
    \item $(\C_{k,\theta})^{\perp}=\C_{tm-k,\theta} \cdot x^k=x^k \cdot \C_{tm-k,\theta}.
$
\item $
(\C_{k,\theta}(\eta,\tau))^{\perp}=\C_{tm-k,\theta}(\tau^{-1}(\eta),\tau^{-1}) \cdot x^k.$
\end{enumerate}
\end{theorem}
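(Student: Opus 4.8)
The strategy is to compute both orthogonal complements by hand, exploiting a drastic simplification of $\Hl$ forced by the hypothesis. Since $\Lambda$ is a cyclic subgroup of $\F_q^*$ of order $t$, it is exactly the group of $t$-th roots of unity of $\F_q$, so $\prod_{\lambda\in\Lambda}(y-\lambda)=y^t-1$ and therefore $\Hl(x)=\prod_{i=1}^t(x^m-\lambda_i)=x^{tm}-1$. Hence in $\RH$ we have $x^{tm}\equiv 1$; in particular $x$ is a unit, so right multiplication by $x^k$ is an $\F_{q^m}$-linear bijection of $\RH$ (sending $\F_p$-subspaces to $\F_p$-subspaces of the same dimension), and reducing a polynomial of degree $\le tm$ modulo $\Hl$ merely adds its $x^{tm}$-coefficient to the constant term. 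I will also use freely that $\langle\cdot,\cdot\rangle_\Lambda$ is symmetric and that $\mathrm{Tr}_{q^m/p}$ is invariant under every automorphism of $\F_{q^m}$, $\tau$ included.

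For part (1): first I would check $\C_{tm-k,\theta}\cdot x^k=x^k\cdot\C_{tm-k,\theta}$. For $g=\sum_{i=0}^{tm-k-1}g_ix^i$ one has $gx^k=\sum_i g_ix^{i+k}$ and $x^kg=\sum_i \theta^k(g_i)x^{i+k}$; since $g_i\mapsto\theta^k(g_i)$ is a bijection of $\F_{q^m}$ and the $g_i$ run over all of $\F_{q^m}$, the two families of polynomials coincide, and no reduction occurs as all exponents lie in $\{k,\dots,tm-1\}$. Next, every element of $\C_{tm-k,\theta}\cdot x^k$ is supported on $x^k,\dots,x^{tm-1}$ while every $f\in\C_{k,\theta}$ is supported on $x^0,\dots,x^{k-1}$, so the coefficient pairing $\sum_i f_ig_i$ vanishes identically and $\C_{tm-k,\theta}\cdot x^k\subseteq(\C_{k,\theta})^\perp$. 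Finally, as $\C_{k,\theta}$, $\C_{tm-k,\theta}\cdot x^k$ and $\RH$ are $\F_{q^m}$-linear of $\F_{q^m}$-dimensions $k$, $tm-k$ and $tm$ respectively, and the form is a nondegenerate $\F_p$-bilinear form, the inclusion must be an equality.

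For part (2): I would describe $(\C_{k,\theta}(\eta,\tau))^\perp$ directly. An element $g=\sum_{j=0}^{tm-1}g_jx^j+R\Hl$ lies in it iff $\mathrm{Tr}_{q^m/p}\!\big(g_0f_0+g_1f_1+\cdots+g_{k-1}f_{k-1}+g_k\,\eta\tau(f_0)\big)=0$ for all $f_0,\dots,f_{k-1}\in\F_{q^m}$. Letting $f_1,\dots,f_{k-1}$ vary independently and invoking nondegeneracy of the trace form forces $g_1=\cdots=g_{k-1}=0$; letting $f_0$ vary and rewriting $\mathrm{Tr}_{q^m/p}(g_k\,\eta\tau(f_0))=\mathrm{Tr}_{q^m/p}(\tau^{-1}(\eta g_k)\,f_0)$ forces a linear relation between $g_0$ and $g_k$ of the shape $g_0=c\,\tau^{-1}(g_k)$ with $c$ a nonzero scalar built from $\eta$ (and the leading coefficient of $\Hl$). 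Thus $\C^\perp$ is exactly the set of $g$ with $g_1=\cdots=g_{k-1}=0$, $g_0=c\,\tau^{-1}(g_k)$, and $g_k,\dots,g_{tm-1}$ free. It remains to recognize this as $\C_{tm-k,\theta}(\tau^{-1}(\eta),\tau^{-1})\cdot x^k$: for the generic codeword $h=\sum_{i=0}^{tm-k-1}h_ix^i+\tau^{-1}(\eta)\tau^{-1}(h_0)x^{tm-k}$ of $\C_{tm-k,\theta}(\tau^{-1}(\eta),\tau^{-1})$, multiplying by $x^k$ and reducing via $x^{tm}\equiv1$ turns the top term into the constant $\tau^{-1}(\eta)\tau^{-1}(h_0)$ and shifts the remaining terms to $h_0x^k+\cdots+h_{tm-k-1}x^{tm-1}$; the substitution $h_i\leftrightarrow g_{k+i}$ then matches the two sets.

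The routine parts are the support argument and dimension count of part (1). The main obstacle is part (2): one must carry out the reduction modulo $\Hl$ in the clean form $x^{tm}-1$ — which is precisely where the cyclicity of $\Lambda$ is used — and, above all, transport the automorphism $\tau$ across the trace form in the correct direction, keeping track of the scalar $c$ (and the adopted conventions for $\Hl$ and for the twist term). This last bookkeeping is delicate and easy to get wrong, and it is exactly what pins down the parameters $\tau^{-1}(\eta)$, $\tau^{-1}$ of the dual code and the exponent $k$ of the shift.
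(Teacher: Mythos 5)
The paper itself offers no proof of this statement (it is imported verbatim from \cite[Theorems 5.9 and 6.11]{neri2021twisted}), so your direct verification is the natural route, and most of it is sound. Your reduction $\Hl(x)=x^{tm}-1$ from cyclicity of $\Lambda$ is correct, and part (1) is complete: disjoint supports give the inclusion, the $\F_p$-dimension count against the nondegenerate form gives equality, and the identity $\C_{tm-k,\theta}\cdot x^k=x^k\cdot\C_{tm-k,\theta}$ follows since $\theta^k$ permutes the (free) coefficients. (Note that no reduction modulo $\Hl$ even occurs in part (1), so cyclicity is genuinely used only in part (2).)

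The gap is in part (2), and it sits exactly at the step you yourself flag and then skip: the determination of the scalar $c$. Carrying out your own computation, orthogonality gives $\mathrm{Tr}_{q^m/p}\bigl(f_0\,(g_0+\tau^{-1}(\eta)\tau^{-1}(g_k))\bigr)=0$ for all $f_0$, hence $g_0=-\tau^{-1}(\eta)\tau^{-1}(g_k)$; whereas multiplying a codeword of $\C_{tm-k,\theta}(\tau^{-1}(\eta),\tau^{-1})$ by $x^k$ and reducing via $x^{tm}\equiv 1$ produces the constant term $+\tau^{-1}(\eta)\tau^{-1}(h_0)$ with $h_0=g_k$. Your final sentence asserts that ``the substitution $h_i\leftrightarrow g_{k+i}$ then matches the two sets,'' but with these two values of $c$ the sets coincide only when $p=2$ (or $\eta=0$). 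So as written the proof does not establish the stated identity: you must either locate a compensating sign you have dropped, or conclude that the dual is $\C_{tm-k,\theta}(-\tau^{-1}(\eta),\tau^{-1})\cdot x^k=\C_{tm-k,\theta}(\tau^{-1}(-\eta),\tau^{-1})\cdot x^k$, i.e.\ that the first parameter in the displayed statement should carry a minus sign, consistent with the cited source. In the latter case you should also observe that the discrepancy is harmless for everything the paper deduces from this theorem (Theorems \ref{th:idealizersany} and \ref{th:centralisercenterreed}), since the idealisers, centraliser and center of $\C_{k,\theta}(\eta,\tau)$ computed there depend only on $\tau$ and on whether $\eta$ vanishes, not on the exact nonzero value of $\eta$. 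Until one of these two resolutions is made explicit, the matching step is not closed.
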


Now, we are able to completely determine the nuclear parameters of LRS codes and of the ATLRS codes.

\begin{theorem}
  \label{th:idealizersany}
Let $\Lambda$ be a cyclic subgroup of $\F_q^*$. Let $\C=\C_{k,\theta}(\eta,\tau)$ as in \Cref{def:ATLRScodes}, with $tm/2+1 \leq k \leq tm-3$. Then: 
    \begin{itemize}
          \item if $\eta=0$, we have
        \[
        \lid(\C) \simeq \F_{q^m}, \ \  \ \rid(\C)\simeq\Fix(\F_{q^m},\tau^{-1} \circ \theta^{k});
        \]
           \item if $\eta \neq 0$, we have
        \[
\lid(\C)\simeq\Fix(\F_{q^m},\tau) \ \mbox{ and }  \ \rid(\C)\simeq\Fix(\F_{q^m},\tau^{-1} \circ \theta^{k}).
        \]
    \end{itemize}
\end{theorem}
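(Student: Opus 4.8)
The plan is to reduce the range $tm/2+1\le k\le tm-3$ to the range $3\le k\le tm/2$, already settled in \Cref{lm:idealizersATLRS}, by passing to the dual code: when $tm/2+1\le k\le tm-3$ the complementary index $tm-k$ satisfies $3\le tm-k\le tm/2$, and by \Cref{th:dualreed} the dual $\C^{\perp}$ is, up to right-multiplication by $x^{k}$, again a code of the same shape with index $tm-k$. Here $x^{k}$ is a unit of $\RH$, since the constant term of $\Hl$ is $\prod_{i=1}^{t}(-\lambda_{i})\ne 0$, so $x$ (hence $x^{k}$) is invertible modulo $R\Hl$.

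I would first record two elementary transformation rules for the idealisers, working in $\RH$ with the descriptions from \Cref{prop:invariantsskew}. For any code $\mathcal D\subseteq\RH$ and any unit $u\in\RH$, cancelling $u$ on the right yields $\lid(\mathcal D u)=\lid(\mathcal D)$ and $\rid(\mathcal D u)=u^{-1}\rid(\mathcal D)u$; taking $u=x^{k}$, the conjugation $g\mapsto x^{-k}gx^{k}$ acts on constants as $\theta^{-k}$, and this is exactly the mechanism that injects the automorphism $\theta^{k}$ into the final description of the right idealiser. Secondly, transporting \Cref{prop:dualtranspideal} to the skew-polynomial model through the isometries \eqref{eq:isometryskewskew} and \eqref{eq:isometryskewmatrix} (and the compatibility of the bilinear form $\langle\cdot,\cdot\rangle_{\Lambda}$ with \eqref{eq:bilinearmatrix}), the idealisers of $\C^{\perp}$ are those of $\C$ transformed by the adjoint anti-automorphism; since by \Cref{cor:lridealfields} the idealisers of an MSRD code are fields, this anti-isomorphism is in fact an isomorphism of fields, so $\lid(\C)\simeq\lid(\C^{\perp})$ and $\rid(\C)\simeq\rid(\C^{\perp})$.

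Combining these with \Cref{th:dualreed}, one writes $\C^{\perp}=\C_{tm-k,\theta}(\eta',\tau')\cdot x^{k}$, where $(\eta',\tau')=(0,\mathrm{id})$ if $\eta=0$ and $(\eta',\tau')=(\tau^{-1}(\eta),\tau^{-1})$ if $\eta\ne 0$; as $3\le tm-k\le tm/2$, \Cref{lm:idealizersATLRS} evaluates $\lid$ and $\rid$ of $\C_{tm-k,\theta}(\eta',\tau')$. Then
\[
\lid(\C)\simeq\lid\big(\C_{tm-k,\theta}(\eta',\tau')\big),\qquad
\rid(\C)\simeq x^{-k}\,\rid\big(\C_{tm-k,\theta}(\eta',\tau')\big)\,x^{k},
\]
and inserting the values and simplifying (using $\theta^{m}=\mathrm{id}$, hence $\theta^{tm-k}=\theta^{-k}$, and $\Fix(\fqm,\sigma)=\Fix(\fqm,\sigma^{-1})$) gives: for $\eta\ne 0$, $\lid(\C)\simeq\Fix(\fqm,\tau^{-1})=\Fix(\fqm,\tau)$ and, after conjugating $\Fix(\fqm,\tau\circ\theta^{-k})$ by $x^{k}$, $\rid(\C)\simeq\Fix(\fqm,\tau^{-1}\circ\theta^{k})$; for $\eta=0$, the analogous computation with $\tau=\mathrm{id}$ yields $\lid(\C)\simeq\fqm$ and $\rid(\C)\simeq\Fix(\fqm,\theta^{k})$.

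The hard part will be the second step and, within it, the $\theta$-twist bookkeeping: one must make precise how the transpose of the matrix model is realised by an adjoint anti-automorphism of $\RH$, and — more delicately — verify that conjugation by $x^{k}$ carries $\rid\big(\C_{tm-k,\theta}(\eta',\tau')\big)$ exactly onto $\Fix(\fqm,\tau^{-1}\circ\theta^{k})$, i.e. that it converts the twist $\theta^{tm-k}$ into $\theta^{k}$. Once these identities are pinned down, everything else is a formal consequence of \Cref{th:dualreed}, \Cref{lm:idealizersATLRS}, and \Cref{cor:lridealfields}.
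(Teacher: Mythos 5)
Your proposal follows essentially the same route as the paper: dualise to land in the range $3\le tm-k\le tm/2$, use \Cref{th:dualreed} to identify $\C^{\perp}$ with $\C_{tm-k,\theta}(\tau^{-1}(\eta),\tau^{-1})\cdot x^{k}$, transfer the idealisers through \Cref{prop:dualtranspideal}, and read off the answer from \Cref{lm:idealizersATLRS}; the extra bookkeeping you supply (invertibility of $x^{k}$, the rule $\rid(\mathcal D u)=u^{-1}\rid(\mathcal D)u$, and the fact that the transpose anti-isomorphism is an isomorphism because the idealisers are fields by \Cref{cor:lridealfields}) is precisely what the paper leaves implicit, and is correct. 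Note only that conjugation by $x^{k}$ acts on constants by a power of $\theta$ and therefore fixes every subfield of $\F_{q^m}$ setwise, so it does not actually alter the fixed field: the passage from $\theta^{tm-k}$ to $\theta^{k}$ is already handled by $\theta^{tm}=\mathrm{id}$ together with $\Fix(\F_{q^m},\sigma)=\Fix(\F_{q^m},\sigma^{-1})$. One small flag: in the $\eta=0$ case a correct application of the $\eta'=0$ branch of \Cref{lm:idealizersATLRS} yields $\rid(\C)\simeq\F_{q^m}$ (consistent with Table~\ref{tab:parameters}), whereas you report $\Fix(\F_{q^m},\theta^{k})$ by applying the $\eta'\ne 0$ formula with $\tau=\mathrm{id}$ --- this matches the theorem's statement as printed, but that statement appears to carry the same slip relative to the paper's own proof.
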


\begin{proof}
Since $tm/2+1 \leq k \leq tm-3$, we have
$3 \leq tm-k \leq tm/2-1$, and by \Cref{th:dualreed}, we have that $\C^{\perp}$ is equivalent to $\mathcal{C}_{tm-k,\theta}(\tau^{-1}(\eta),\tau^{-1})$. Therefore, by \Cref{prop:dualtranspideal}, we obtain
\[
\lid(\C) \simeq \lid(\C^{\perp})  \simeq \lid(\mathcal{C}_{tm-k,\theta}(\tau^{-1}(\eta),\tau^{-1}))
\]
and 
\[
\rid(\C) \simeq \rid(\C^{\perp}) \simeq \rid(\mathcal{C}_{tm-k,\theta}(\tau^{-1}(\eta),\tau^{-1})).
\]
Hence, by using \Cref{lm:idealizersATLRS}, we prove that the idealisers are as claimed. 
\end{proof}

\begin{remark} \label{rk:lengthreed}
As shown in \cite[Proposition 6.8]{neri2021twisted}, the maximum possible number of blocks \(t\) for which ATLRS codes with \(\eta \neq 0\) turns out to be an MSRD code is \(t \leq \frac{q-1}{r}\), where \(r\) is the smallest prime dividing \(q-1\). This maximal length is achieved by choosing \(\Lambda\) as the largest proper subgroup of \(\mathbb{F}_q^*\). Therefore, the condition on \(\Lambda\) to be a cyclic subgroup of $\F_q^*$ allows us to construct ATLRS codes with the maximum possible number of blocks and is not overly restrictive.   
\end{remark}

In the next, we determine the other two nuclear parameters of LRS and ATLRS codes, the centralizers and the center. Since the computation of the center relies on the left idealiser, and the latter has been determined for \(tm/2 + 1 \leq k \leq tm-1 \) in the case where \(\Lambda\) is a cyclic group, in the following proposition, we will implicitly assume that \(\Lambda\) is cyclic whenever this condition for such values of $k$.

\begin{theorem}
\label{th:centralisercenterreed}
Let $\C=\C_{k,\theta}(\eta,\tau)$ as in \Cref{def:ATLRScodes}, with $2 \leq k \leq tm-1$. 
\begin{itemize}
\item If $\eta=0$, we have
    \[\mathrm{C}(\C')\simeq \F_q^t \ \mbox{ and } \ \mathrm{Z}(\C')\simeq \Fix(\F_q).
        \]
\item If $\eta\neq 0$, for every code $\C'$ containing the identity equivalent to $\C$, we have 
    \[\mathrm{C}(\C')\simeq \F_q^t \ \mbox{ and } \ \mathrm{Z}(\C')\simeq \Fix(\F_q,\tau).
        \]
        \end{itemize}
\end{theorem}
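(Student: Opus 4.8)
The plan is to compute the centraliser $\mathrm{C}(\C)$ and the center $Z(\C)$ directly in the skew polynomial framework, using the reformulation in Proposition \ref{prop:invariantsskew}. First I would treat the centraliser. By Proposition \ref{prop:invariantsskew}, $\mathrm{C}(\C)$ is (isomorphic to) the set of $g \in R/RH_\Lambda$ with $gf = fg$ for every $f \in \C$. Since $2 \le k \le tm-1$, the code $\C$ contains $1$, $x$, and $x^{k-1}$ (and, if $\eta \ne 0$, also $x$ with a twisted term only on $f_0$, but the pure monomials $x, x^2, \dots, x^{k-1}$ are genuine codewords as soon as $k \ge 2$). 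In particular $x \in \C$, so $g$ must commute with $x$ in $R/RH_\Lambda$. Writing $g = \sum_{i=0}^{tm-1} g_i x^i + RH_\Lambda$, the relation $gx = xg$ together with $x a = \theta(a) x$ forces $g_i = \theta(g_i)$ for all $i$ in the ``interior'' range, i.e.\ $g_i \in \F_q$; one must handle the top coefficient $g_{tm-1}$ separately using the reduction of $x \cdot x^{tm-1}$ modulo $H_\Lambda$ exactly as in the proof of \Cref{lm:idealizersATLRS} (Equations \eqref{eq:g_{mt-1}}, \eqref{eq:g_{mt-2}}), which shows the ``wrap-around'' contribution vanishes because $H_0 \ne 0$. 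This should pin down $\mathrm{C}(\C)$ as the image in $R/RH_\Lambda$ of $\F_q[x^m]/(H_\Lambda)$ — equivalently, the commutative subring $\F_q[x]/(H_\Lambda(x))$ restricted to the center of $R$ — which by CRT (the $\lambda_i$ being pairwise distinct) is isomorphic to $\F_q^t$. I would phrase this as: $\mathrm{C}(\C) \simeq \F_q[y]/\prod_{i=1}^t(y-\lambda_i) \simeq \F_q^t$, via $y = x^m$.

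Next, the center $Z(\C) = \lid(\C) \cap \mathrm{C}(\C)$. Here I would split on $\eta$. For $\eta = 0$ (the LRS case), $\lid(\C) \simeq \F_{q^m}$ by \Cref{lm:idealizersATLRS}/\Cref{th:idealizersany}, realized as the constant skew polynomials $g_0 + RH_\Lambda$, $g_0 \in \F_{q^m}$. Intersecting with $\mathrm{C}(\C) \simeq \F_q^t$: a constant $g_0$ lies in $\mathrm{C}(\C)$ iff $g_0 \in \F_q$ (the $\F_q^t$-structure of $\mathrm{C}(\C)$ collapses to the diagonal copy of $\F_q$ on the constant elements, since a constant $g_0$ maps under $y=x^m$ to $(g_0,\dots,g_0)$). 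Hence $Z(\C) \simeq \F_q$, which is $\Fix(\F_q)$ in the paper's notation (the fixed field of $\theta$ restricted to $\F_q$, i.e.\ $\F_q$ itself — I would double-check the intended reading of $\Fix(\F_q)$ here). For $\eta \ne 0$, one works with an equivalent code $\C'$ containing the identity $(I_m,\dots,I_m)$, as guaranteed by \Cref{prop:numberweightMSRD} and the remark after \Cref{prop:equivalencecenter}; then $\lid(\C') \simeq \lid(\C) \simeq \Fix(\F_{q^m},\tau)$, again realized by constants $g_0 \in \Fix(\F_{q^m},\tau)$, and intersecting with $\mathrm{C}(\C') \simeq \F_q^t$ on the constants gives $g_0 \in \F_q \cap \Fix(\F_{q^m},\tau) = \Fix(\F_q,\tau)$, so $Z(\C') \simeq \Fix(\F_q,\tau)$. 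I would invoke \Cref{prop:equivalencecenter} to transfer the centraliser and center across the equivalence to $\C'$, noting the isomorphism type is preserved.

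The main obstacle I expect is the careful bookkeeping at the top-degree coefficient $g_{tm-1}$ when forcing $g$ to commute with $x$: one cannot simply say ``$gx = xg$ coefficientwise'' because of the reduction modulo $H_\Lambda$, so the argument needs the same trick used in \Cref{lm:idealizersATLRS} — compare coefficients of $gx$ and, if necessary, $gx^2$, and use $H_0 \ne 0$ to kill $g_{tm-1}$. A secondary (minor) point is justifying cleanly that the commutative subring $\mathrm{C}(\C)$ is exactly $\F_q[x^m]/(H_\Lambda)$ and not something larger: this uses that $\C$ contains enough monomials $x^j$ (guaranteed by $k \ge 2$) so that commuting with all of them, combined with the center of $R$ being $\F_q[x^m]$, forces $g$ into the center of $R/RH_\Lambda$; then the CRT decomposition gives $\F_q^t$. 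Everything else is routine once the idealiser computations of \Cref{lm:idealizersATLRS} and \Cref{th:idealizersany} are in hand, together with the duality step of \Cref{th:idealizersany} for the range $k > tm/2$.
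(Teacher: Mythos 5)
Your overall strategy (compute the centraliser inside $R/R\Hl$ via Proposition~\ref{prop:invariantsskew}, then intersect with the left idealiser to obtain the center) is the same as the paper's, but your centraliser computation has a genuine gap. Commuting $g=\sum_{i=0}^{tm-1}g_ix^i+R\Hl$ with $x$ (or with any monomial $x^j$) only forces $\theta^j(g_i)=g_i$, i.e.\ $g_i\in\F_q$ for all $i$; this places $g$ in the image of $\F_q[x]$ in $R/R\Hl$, a commutative subring of size $q^{tm}$, and \emph{not} in $\F_q[x^m]/(\Hl)\simeq\F_q^t$. The step from $\F_q[x]$ down to $\F_q[x^m]$ cannot come from monomials at all: the missing ingredient is that all constants $\alpha\in\F_{q^m}$ are codewords (of $\C$ when $\eta=0$, and of the equivalent code $\C'$ when $\eta\neq0$), and $g\alpha=\alpha g$ forces $g_i\theta^i(\alpha)=\alpha g_i$ for every $\alpha$, hence $g_i=0$ unless $m\mid i$. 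This is precisely the first step of the paper's proof; it also yields $\deg(g)\leq tm-m\leq tm-2$ for free, so that the subsequent relation $gx=xg$ already holds in $R$ (no reduction modulo $\Hl$ occurs) and the wrap-around bookkeeping you flag as the main obstacle simply disappears. Without the constants step your necessary conditions do not pin down $\mathrm{C}(\C')$; indeed the element $x$ itself lies in $\F_q[x]\bmod R\Hl$ and commutes with every monomial, yet $x\alpha=\theta(\alpha)x\neq\alpha x$ for $\alpha\notin\F_q$, so $x$ is not in the centraliser.

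A secondary issue: for $\eta\neq 0$ the code $\C=\C_{k,\theta}(\eta,\tau)$ does not contain the identity, so computing $\mathrm{C}(\C)$ on $\C$ itself and then transferring via Proposition~\ref{prop:equivalencecenter} is not licensed, since that proposition assumes both codes contain $(I_{m_1},\ldots,I_{m_t})$. The paper instead first shows $x^m$ is invertible in $R/R\Hl$, constructs the explicit equivalent code $\C'=\C\cdot x^{m-1}r$ containing $1$, checks that $\C'$ contains all constants of $\F_{q^m}$ together with $x$, and only then runs the two-step commutation argument on $\C'$. Your computation of the center (intersecting the constants realizing $\lid(\C')$ with $\F_q[x^m]/(\Hl)$, giving $\F_q$ resp.\ $\Fix(\F_q,\tau)$) matches the paper once the centraliser is correctly established.
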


\begin{proof}
First, we find a code $\C'$ equivalent to $\C$ containing the identity. Clearly, if $\eta=0$, then $1 \in \C$ and then we can choose $\C'=\C$. Assume that $\eta \neq 0$. We note that the element $x^{m}+R\Hl \in \RH$ is such that $\wl(x^{m})=tm$. Indeed, let $h \in R$ such that $h \mid_r \Hl$ and $h \mid_r x^m$. Assume that $\Hl(x)=H_0+\cdots+H_{t-1}x^{(t-1)m}+x^{tm}$. Hence $h \mid_r x^{im}$, for every positive integer $i$. So $h \mid_r H_0$, implying that $\deg(h)=0$. Hence $\deg(\gcrd(x^m,\Hl))=0$, proving that $\wl(x^{m})=tm$. Therefore, there exists $r \in \RH$, with $\wl(r)=tm$ such that 
\[ x(x^{m-1})r=
x^{m}r + \RH =1+\RH,\]
and consequently $\wl(x^{m-1}r)=tm$, as it is the inverse of $x$ in $R/RH_{\Lambda}$.
Let 
\[
\begin{array}{rl}
  \C' :=& \C \cdot x^{m-1}r \\
  =&  \{f_1+f_2x+\cdots+f_{k-1}x^{k-2}+\eta \tau(f_0)x^{k-1}r(x)+f_0 x^{m-1}r+R\Hl\colon f_0,\ldots,f_{k-1}\in\fqm\}
\end{array}
\]
and note that $\C'$ contains the identity and it is equivalent to $\C$. \\
Now, we determine $\mathrm{C}(\C')$.
To this aim, let $g=g_0+g_1x+\cdots g_{tm-1}x^{tm-1} +R\Hl \in \RH$ such that $g \in C(\C') \setminus \{0\}$, i.e. 
    \[gf=fg,\] for every $f \in \C'$.
For any $\alpha \in \F_{q^m}$, we have $\alpha \in \C'$ and $\alpha g- g\alpha \in \RH$. Since $\deg(\alpha g- g\alpha) <mt$, we get that 
\[
\alpha \left(\sum_{i=0}^{tm-1}g_ix^i\right)=\left( \sum_{i=0}^{tm-1}g_ix^i\right)\alpha.
\]
Therefore, $\sum_{i=0}^{tm-1}g_ix^i \in \F_{q^m}[x^m]$ and \begin{equation} \label{eq:bounddegreecentral}\deg(g)<mt-m\leq mt-2.
\end{equation}
Furthermore, since $k \geq 2$ and $x+R\Hl \in \C'$, we have $ xg=g x$. Moreover, as a consequence of \eqref{eq:bounddegreecentral}, we have $\deg(xg-gx)\leq mt-1$, and so
\[
x \left(\sum_{i=0}^{tm-1}g_ix^i\right)=\left( \sum_{i=0}^{tm-1}g_ix^i\right)x.
\]
Therefore, $\sum_{i=0}^{tm-1}g_ix^i \in \F_q[x^m]=Z(R)$. This implies that \[C(\C')=\{g+R\Hl\colon g \in Z(R)\}\simeq \F_q[x^m]/(\Hl) \simeq \frac{\F_q[y]}{(y-\lambda_1)} \oplus \ldots \oplus \frac{\F_q[y]}{(y-\lambda_t)} \simeq \F_q^t.\] 
Finally, we determine the center $Z(\C')$. First, assume that \(\eta = 0\). By \Cref{lm:idealizersATLRS} and \Cref{th:dualreed}, we have \[\lid(\C') = \{\alpha + R \Hl : \alpha \in \mathbb{F}_{q^m}\}.\] Thus,
\[
\begin{array}{rl}
Z(\mathcal{C}') & = \{\alpha + R \Hl : \alpha \in \mathbb{F}_{q^m}\} \cap C(\mathcal{C}') \\
& = \{\alpha + R \Hl : \alpha \in \mathbb{F}_q\} \\
& \simeq \mathbb{F}_q.
\end{array}
\]
Similarly, if \(\eta \neq 0\), again by \Cref{lm:idealizersATLRS} and \Cref{th:dualreed}, we obtain 
\[\lid(\C')= \{\alpha + R \Hl : \alpha \in \operatorname{Fix}(\mathbb{F}_{q^m}, \tau)\}.\] Therefore, from the previous part, we obtain
\[
Z(\mathcal{C}') = \{\alpha + R \Hl : \alpha \in \operatorname{Fix}(\mathbb{F}_{q^m}, \tau) \cap \mathbb{F}_q\} \simeq \operatorname{Fix}(\mathbb{F}_q, \tau).
\]
This concludes the proof.
\end{proof}

\begin{remark}
As shown in \Cref{cor:lridealfields}, just as in the case of MRD codes in the rank metric, the left and right idealisers of MSRD codes in $\Mat(\bfm, \bfm, \F_q)$ are also fields. Moreover, it was recently proved in \cite{gomez2025adjoint} that the centraliser and the center of MRD codes are fields in the rank-metric setting. However, this property does not extend to the sum-rank metric. Indeed, as proved in \Cref{th:centralisercenterreed}, the centralisers of ATLRS and LRS codes are rings isomorphic to a direct product of $t$ copies of $\F_q$.
\end{remark}

In \cite{neri2021twisted}, also a new relevant family a MSRD codes was introduced. This construction extends in the context of the sum-rank metric the family of MRD codes provided by Trombetti and Zhou in \cite{trombetti2018new} and it can be described as follows. Let $\F_q^{(2)}$ be the multiplicative subgroup of $\F_q^*$, i.e. it is the set of all the squares over $\Fq$, i.e. 
$\F_q^{(2)}:=\{\alpha^2 \colon \alpha \in \F_q^*\}$.

\begin{definition}[see \textnormal{\cite[Definition 7.1]{neri2021twisted}}] \label{def:TZcodes}
Assume $m=2\ell \geq 2$. Let $\gamma \in \F_{q^m}$ such that $\N_{q^m/q}(\gamma) \notin \F_q^{(2)}$. Moreover, assume that $\Lambda \subseteq \F_q^{(2)}$. For every $1 \leq k < tm$, the code
\[
\mathcal{D}_{k,\theta}(\gamma)=\left\{ f_0+\sum_{i=1}^{k-1}f_ix^i+\gamma f_k x^k +R\Hl \colon f_1,\ldots,f_{k-1} \in \F_{q^m},f_0,f_k \in \F_{q^{\ell}}\right\} \subseteq \frac{R}{R\Hl}, 
\]
is called \textbf{twisted linearized
Reed-Solomon (TLRS) code of TZ-type}.
\end{definition}

In \cite[Theorem 7.2]{neri2021twisted}, it is proved that the code $\mathcal{D}_{k,\theta}(\gamma)$ is an MSRD code. Note that the condition $\N_{q^m/q}(\gamma) \notin \F_q^{(2)}$ on $\gamma$ also implies that $\gamma \notin \F_{q^{\ell}}$.

In the next, we determine the left and right idealisers, the centraliser and the center of $\mathcal{D}_{k,\theta}(\gamma)$. 
As for the ATLRS codes, we start by investigating the case $k \leq tm/2$ and by a duality argument we will obtain the case when $tm/2\leq k \leq tm-2$.

\begin{theorem} \label{lm:idealizersTZ}
Let $\C=\mathcal{D}_{k,\theta}(\gamma)$ as in \Cref{def:TZcodes}, with $2 \leq k \leq tm/2$. Then
        \[
        \lid(\C)\simeq\F_{q^{\ell}} \ \mbox{ and }\ \rid(\C)\simeq\F_{q^{\ell}}.
        \]
\end{theorem}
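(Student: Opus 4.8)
The plan is to argue entirely inside the skew‑polynomial model $\RH$, using the dictionary of \Cref{prop:invariantsskew} (its hypotheses hold here: $\bfm=(m,\dots,m)$ and $t=|\Lambda|\le q-1$ since $\Lambda\subseteq\F_q^{(2)}$), so that $\lid(\C)\simeq\{g\in\RH:gf\in\C\text{ for all }f\in\C\}$ and $\rid(\C)\simeq\{g\in\RH:fg\in\C\text{ for all }f\in\C\}$. The whole argument exploits the very rigid shape of $\C=\mathcal D_{k,\theta}(\gamma)$: because $k<tm$, every codeword is represented by its minimal‑degree skew polynomial, supported on $\{0,1,\dots,k\}$, with constant coefficient in $\F_{q^\ell}$, free coefficients in $\fqm$ at $x,\dots,x^{k-1}$, and coefficient of $x^k$ in $\gamma\F_{q^\ell}$. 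In particular $1\in\C$, $x^{k-1}\in\C$ (this is where $k\ge 2$ is used) and $\gamma x^{k}\in\C$.

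First I would record the two easy inclusions. Since $\F_{q^\ell}$ is stable under the $q$‑Frobenius $\theta$ and all its powers, left‑ or right‑multiplying a codeword by a scalar $\beta\in\F_{q^\ell}$ merely rescales its coefficients and preserves every defining constraint (for the right multiplication one uses $\theta^{k}(\beta)\in\F_{q^\ell}$). Hence $\{\beta+R\Hl:\beta\in\F_{q^\ell}\}$ is contained both in $\lid(\C)$ and in $\rid(\C)$.

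For the reverse inclusions, take $g\in\lid(\C)$. From $g=g\cdot 1\in\C$ we get $g=\sum_{i=0}^{k}g_ix^i$ with $g_0\in\F_{q^\ell}$. Since $k\le tm/2$, the product $gx^{k-1}=\sum_{i=0}^{k}g_ix^{i+k-1}$ has degree at most $2k-1<tm$, so no reduction modulo $\Hl$ occurs; requiring $gx^{k-1}\in\C$ forces every exponent $i+k-1$ with $g_i\ne 0$ to lie in $\{0,\dots,k\}$, i.e. $g_2=\dots=g_k=0$, so $g=g_0+g_1x$. Then $g\gamma x^{k}=g_0\gamma x^{k}+g_1\theta(\gamma)x^{k+1}$ has degree $k+1<tm$, and $g\gamma x^{k}\in\C$ kills the coefficient of $x^{k+1}$, so $g_1\theta(\gamma)=0$; as $\gamma\ne 0$ this gives $g_1=0$, whence $g=g_0\in\F_{q^\ell}$. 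Thus $\lid(\C)=\{\beta+R\Hl:\beta\in\F_{q^\ell}\}\simeq\F_{q^\ell}$. The computation for $\rid(\C)$ is the mirror image: for $g=\sum_i g_ix^i\in\rid(\C)$ one has $g\in\C$; then $x^{k-1}g=\sum_i\theta^{k-1}(g_i)x^{i+k-1}\in\C$ forces $g_2=\dots=g_k=0$, and $\gamma x^{k}g=\gamma\theta^{k}(g_0)x^{k}+\gamma\theta^{k}(g_1)x^{k+1}\in\C$ forces $g_1=0$, so again $g\in\F_{q^\ell}$ and $\rid(\C)\simeq\F_{q^\ell}$.

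The argument is almost entirely mechanical; the only point requiring care is to ensure that the auxiliary products $gx^{k-1}$, $g\gamma x^{k}$ (and their right analogues) do not wrap around modulo $\Hl$, which is exactly what $k\le tm/2$ guarantees (it yields $2k-1<tm$), while $k\ge 2$ is precisely what makes $x^{k-1}$ a codeword. I do not expect to need that $\C$ is MSRD, that $\Lambda$ is cyclic, or that the idealisers are fields: the proof depends only on the shape of $\mathcal D_{k,\theta}(\gamma)$.
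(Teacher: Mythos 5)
Your proof is correct and follows essentially the same route as the paper's: reduce to the skew-polynomial model via \Cref{prop:invariantsskew}, use $1\in\C$ to bound $\deg(g)\le k$, then multiply by $x^{k-1}$ (no wrap-around since $2k-1<tm$) to kill $g_2,\dots,g_k$, and by $\gamma x^k$ to kill $g_1$, with the mirrored computation for $\rid(\C)$. The only difference is that you spell out the easy inclusion $\F_{q^\ell}\subseteq\lid(\C)\cap\rid(\C)$ and the right-idealiser computation explicitly, which the paper leaves as ``a similar argument.''
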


\begin{proof}
We start by computing $\lid(\C)$. Since $1 \in \C$, we have $\lid(\C) \subseteq \C$, implying that any $g \in \lid(\C)$ has degree at most $k$. So, let $g=\sum_{i=0}^kg_ix^i+R\Hl$. Using the fact that $k>1$, we have $x^{k-1}\in \C$, and hence $gx^{k-1} \in \C$. Now, since $\deg(gx^{k-1})\leq 2k-1<mt$ and \[gx^{k-1} = g_0x^{k-1}+g_1x^{k}+\ldots g_{k}x^{2k-1}+R\Hl,\] we get $g_2=\cdots=g_k=0$ and so $g = g_0+g_1x^k+R\Hl$. Since $\gamma x^k \in \C$, it follows that \[(g_0+g_1x^k)\gamma x^k+R\Hl=g_0x^k+g_1\theta^k(\gamma)x^{k+1}+R\Hl \in \C.\] Thus, $g_1=0$ and $g=g_0+R\Hl$. Finally, $g_0+R\Hl \in \C$ if and only if $g_0 \in \F_{q^{\ell}}$. Hence, 
\[
\lid(\C)=\{\alpha +R\Hl:\alpha \in \F_{q^{\ell}}\} \simeq \F_{q^{\ell}}
\]
as claimed. 
A similar argument can be performed to get the statement on the right idealiser.
\end{proof}

If $\Lambda$ is a cyclic subgroup of $\F_{q^m}^*$, we can determine the left and right idealisers of $\mathcal{D}_{k,\theta}(\gamma)$ even in the case where $k > tm/2$.

\begin{theorem}
  \label{th:idealizersTZany}
Let $\Lambda$ be a cyclic subgroup of $\F_q^*$. Let $\C=\mathcal{D}_{k,\theta}(\gamma)$ defined as in \Cref{def:TZcodes}, with $tm/2+1 \leq k \leq tm-2$. Then 
        \[
        \lid(\C)\simeq \F_{q^{\ell}}, \ \  \ \mbox{ and }\ \ \ \rid(\C)\simeq \F_{q^{\ell}}.
        \]
\end{theorem}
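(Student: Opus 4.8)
The plan is to deduce the case $tm/2+1\le k\le tm-2$ from the already-settled range $2\le k\le tm/2$ of \Cref{lm:idealizersTZ}, by exactly the duality argument used in the proof of \Cref{th:idealizersany}. The one additional ingredient needed is a description of the dual of a TZ-type code. In analogy with the ATLRS case of \Cref{th:dualreed}, when $\Lambda$ is a cyclic subgroup of $\F_q^*$ one has
\[
\bigl(\mathcal{D}_{k,\theta}(\gamma)\bigr)^{\perp}=\mathcal{D}_{tm-k,\theta}(\gamma')\cdot x^{k}
\]
for a suitable twist parameter $\gamma'\in\F_{q^m}$ (see \cite[Section 7]{neri2021twisted}). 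The explicit form of $\gamma'$ is immaterial; what matters is only that $\mathcal{D}_{tm-k,\theta}(\gamma')$ is again a bona fide (MSRD) TZ-type code, i.e. that $\N_{q^m/q}(\gamma')\notin\F_q^{(2)}$. Since $m=2\ell$ is even, for every $u\in\F_q^*$ and every $j$ we have $\N_{q^m/q}\bigl(u\,\theta^{j}(\gamma)^{\pm1}\bigr)=(u^{\ell})^{\pm2}\,\N_{q^m/q}(\gamma)^{\pm1}$, and since $\F_q^{(2)}$ is a subgroup of $\F_q^*$ this element lies outside $\F_q^{(2)}$ precisely when $\N_{q^m/q}(\gamma)$ does; hence whatever the shape of $\gamma'$, the defining condition is preserved.

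With this in hand, note that $tm/2+1\le k\le tm-2$ forces $2\le tm-k\le tm/2-1<tm/2$, so $\mathcal{D}_{tm-k,\theta}(\gamma')$ lies in the range covered by \Cref{lm:idealizersTZ}, giving $\lid\bigl(\mathcal{D}_{tm-k,\theta}(\gamma')\bigr)\simeq\F_{q^{\ell}}$ and $\rid\bigl(\mathcal{D}_{tm-k,\theta}(\gamma')\bigr)\simeq\F_{q^{\ell}}$. Moreover, right multiplication by $x^{k}$ is a (strong) equivalence of sum-rank metric codes in $\RH$: indeed $x$ is a unit of $R/R\Hl$, because $\gcrd(x,\Hl)=1$ (the constant coefficient $\prod_{i}(-\lambda_i)$ of $\Hl$ is nonzero), just as observed in the proof of \Cref{th:centralisercenterreed}. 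Therefore $\C^{\perp}$ is equivalent to $\mathcal{D}_{tm-k,\theta}(\gamma')$, and by \Cref{prop:equivcodeimpliesequivideal} we get $\mathcal{I}_{\mathbf s}(\C^{\perp})\simeq\mathcal{I}_{\mathbf s}\bigl(\mathcal{D}_{tm-k,\theta}(\gamma')\bigr)$ for every $\mathbf s\in\Z_2^t$; in particular $\lid(\C^{\perp})\simeq\F_{q^{\ell}}$ and $\rid(\C^{\perp})\simeq\F_{q^{\ell}}$.

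It remains to transport this back to $\C$, exactly as in \Cref{th:idealizersany}. By \Cref{prop:dualtranspideal}, $\lid(\C^{\perp})=\mathcal{I}_{\mathbf 1}(\C^{\perp})=\mathcal{I}_{\mathbf 1}(\C)^{\mathbf 1}=\lid(\C)^{\mathbf 1}$ and $\rid(\C^{\perp})=\rid(\C)^{\mathbf 1}$. Blockwise transposition is a ring anti-isomorphism from $\lid(\C)$ onto $\lid(\C)^{\mathbf 1}$; since the latter is isomorphic to the commutative ring $\F_{q^{\ell}}$, this anti-isomorphism is automatically an isomorphism, so $\lid(\C)\simeq\F_{q^{\ell}}$, and likewise $\rid(\C)\simeq\F_{q^{\ell}}$. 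The only genuinely new point in the argument is the first one: correctly identifying (or citing) the dual of a TZ-type code and verifying that its twist parameter still satisfies $\N_{q^m/q}(\gamma')\notin\F_q^{(2)}$; once that is available, the rest is a routine transfer of \Cref{lm:idealizersTZ} through \Cref{prop:dualtranspideal} and \Cref{prop:equivcodeimpliesequivideal}, so I expect that to be the main obstacle.
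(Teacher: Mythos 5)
Your proposal is correct and follows essentially the same route as the paper: the paper likewise reduces to the range $2\le tm-k\le tm/2$ via the duality $(\mathcal{D}_{k,\theta}(\gamma))^{\perp}=\mathcal{D}_{tm-k,\theta}(-\gamma)\cdot x^{k}$ (citing \cite[Theorem 7.4]{neri2021twisted}, which pins down your $\gamma'$ as $-\gamma$, whose norm condition is preserved exactly as you argue since $m$ is even) and then transfers the idealisers back through \Cref{prop:dualtranspideal} and \Cref{lm:idealizersTZ}. Your extra verifications (that $x^k$ is a unit, hence right multiplication is an equivalence, and that transposition is an anti-isomorphism onto a commutative ring) are details the paper leaves implicit, but they do not change the argument.
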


\begin{proof}
Note that $tm/2+1 \leq k \leq tm-3$, implies that
$2 \leq tm-k \leq tm/2-1$. By \cite[Theorem 7.4]{neri2021twisted}, we have \[
(\mathcal{D}_{k,\theta}(\gamma))^{\perp}=\mathcal{D}_{tm-k,\theta}(-\gamma) \cdot x^k.
\] is equivalent to $\mathcal{D}_{tm-k,\theta}(-\gamma)$. Therefore, by \Cref{prop:dualtranspideal}, we get
\[
\lid(\mathcal{D}_{k,\theta}(\gamma))= \rid(((\mathcal{D}_{k,\theta}(\gamma))^{\perp})^{\top})\simeq \rid(\mathcal{D}_{tm-k,\theta}(-\gamma))\simeq\F_{q^{\ell}}
\]
and 
\[
\rid(\mathcal{D}_{k,\theta}(\gamma)) = \lid(((\mathcal{D}_{k,\theta}(\gamma))^{\perp})^{\top})\simeq \lid(\mathcal{D}_{tm-k,\theta}(-\gamma))\simeq\F_{q^{\ell}}.
\]
As a consequence, using \Cref{lm:idealizersTZ}, we find that the idealizers of $\C$ are as claimed. 
\end{proof}

\begin{remark} Similarly to \Cref{rk:lengthreed}, we note that the maximal length \(t\) for defining a TLRS code of TZ-type is attained by choosing \(\Lambda\) as the largest proper subgroup of \(\mathbb{F}_q^*\), i.e. \(\mathbb{F}_q^{(2)}\); see \cite[Proposition 7.6]{neri2021twisted}.  
\end{remark}

\begin{theorem}
  \label{th:centraliserTZ}
Let $\C=\mathcal{D}_{k,\theta}(\gamma)$ defined as in \Cref{def:TZcodes}, with $2 \leq k \leq tm-2$ and $\ell \geq 2$. Then 
        \[
        \mathrm{C}(\C)\simeq\F_q^t \ \mbox{ and } \ \mathrm{Z}(\C)\simeq \F_q;
        \]
\end{theorem}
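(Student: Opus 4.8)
The plan is to mimic the structure of the proof of \Cref{th:centralisercenterreed}, since the ambient ring is the same $\RH$ and the only difference is that the code $\mathcal{D}_{k,\theta}(\gamma)$ has a slightly more constrained shape (the top coefficient $f_k$ and the constant coefficient $f_0$ are tied to lie in $\F_{q^\ell}$, and are linked through $\gamma$). First I would handle the centraliser $\mathrm{C}(\C)$. The point is that $\mathrm{C}(\C)$ only depends on which $\F_{q^m}$-scalars and which monomials $\alpha x$ belong to $\C$, not on the twist. Since $2\le k\le tm-2$ and $\ell\ge 2$, the code $\C$ contains every $\alpha\in\F_{q^m}$ (as $f_1,\dots,f_{k-1}$ range freely, and also $f_0\in\F_{q^\ell}$ combined with $f_k=0$, but more directly $\alpha=f_1\cdot\! x^0$... actually one takes $f_0=\gamma f_k=0$ and uses a free middle coefficient if $k\ge 2$; if $k=1$ this needs separate care, but $k\ge 2$ is assumed, so $\F_{q^m}\subseteq\C$ via $f_1=\alpha$ when $k\ge 2$, wait $f_1 x$ is a monomial of degree $1$, not a scalar — the scalars come from $f_0$, so only $\F_{q^\ell}\subseteq\C$ as scalars) — so I must be a little careful: the scalars in $\C$ form $\F_{q^\ell}$, and $\C$ contains $\alpha x$ for every $\alpha\in\F_{q^m}$ (take $f_1=\alpha$, all other $f_i=0$, using $k\ge 2$). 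Commuting with all $\alpha\in\F_{q^\ell}$ already forces $g\in\F_{q^\ell}[x^m]$-ish control, but commuting with $x$ (via $\alpha=1$, i.e. $x\in\C$ since $k\ge 2$) is the decisive constraint: if $g=\sum_{i=0}^{tm-1}g_ix^i+\RH$ commutes with $x$ then, after the degree bound argument as in \eqref{eq:bounddegreecentral} (commuting with enough scalars and with $x^{k-1}\in\C$ bounds $\deg(g)\le tm-1$ tightly, actually one gets $\sum g_ix^i\in Z(R)=\F_q[x^m]$), we conclude $\sum_{i=0}^{tm-1}g_ix^i\in\F_q[x^m]$, hence $\mathrm{C}(\C)=\{g+\RH : g\in\F_q[x^m]\}\simeq \F_q[x^m]/(H_\Lambda)\simeq\bigoplus_{i=1}^t\F_q[y]/(y-\lambda_i)\simeq\F_q^t$, exactly as in \Cref{th:centralisercenterreed}. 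The slight novelty is checking that $x\in\C$ and that commuting with $x$ plus scalars in $\F_{q^\ell}$ (not the full $\F_{q^m}$) still suffices; but commuting with $x$ alone gives $\sum g_ix^i\in Z(R)$ regardless, so this is fine.

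For the center $\mathrm{Z}(\C)=\lid(\C)\cap\mathrm{C}(\C)$, I would simply intersect: by \Cref{lm:idealizersTZ} (for $k\le tm/2$) and \Cref{th:idealizersTZany} (for $k>tm/2$, using that $\Lambda$ is taken cyclic in that range as flagged before \Cref{th:centralisercenterreed}, and the analogous convention presumably applies here), $\lid(\C)\simeq\F_{q^\ell}$, realised concretely as $\{\alpha+\RH:\alpha\in\F_{q^\ell}\}\subseteq\RH$. Intersecting with $\mathrm{C}(\C)=\{g+\RH:g\in\F_q[x^m]\}$: a constant $\alpha\in\F_{q^\ell}$ lies in $\F_q[x^m]+\RH$ iff $\alpha\in\F_q$ (the constants in $\F_q[x^m]$ are exactly $\F_q$, and distinct constants are distinct mod $\RH$ since $H_\Lambda$ has no constant-degree terms below degree $m$). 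Hence $\mathrm{Z}(\C)\simeq\F_{q^\ell}\cap\F_q=\F_q$, as claimed. One should double check that the identifications of $\lid(\C)$ and $\mathrm{C}(\C)$ as explicit subsets of $\RH$ are compatible (both sit inside the same copy of $\RH$ via \Cref{prop:invariantsskew}), which they are by construction.

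The main obstacle I anticipate is not conceptual but bookkeeping: nailing the degree bound on elements of $\mathrm{C}(\C)$ cleanly when the code's scalar part is only $\F_{q^\ell}$ rather than $\F_{q^m}$, and making sure $x^{k-1}\in\C$ and $x\in\C$ are legitimately available for all $k$ in the range $2\le k\le tm-2$ (in particular $x^{k-1}=x^{k-1}$ requires $1\le k-1\le k-1$, i.e. $k\ge 2$, fine; and $x^{k-1}$ has degree $k-1<k$ so it is a genuine codeword with $f_{k-1}=1$ and all else zero when $k-1\ge 1$). Also, for the case $\eta$-free... here there is no $\eta$, but for $k>tm/2$ the duality input \cite[Theorem~7.4]{neri2021twisted} gives $(\mathcal{D}_{k,\theta}(\gamma))^\perp=\mathcal{D}_{tm-k,\theta}(-\gamma)\cdot x^k$, and one needs the centraliser/center to be invariant under multiplication by the unit $x^k$ and under duality — the former is covered by \Cref{prop:dualtranspideal} and the latter by \Cref{prop:equivalencecenter} (after translating to a code containing $1$, as in the proof of \Cref{th:centralisercenterreed}). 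So for $tm/2+1\le k\le tm-2$ I would replicate the reduction: pass to the equivalent code $\C\cdot x^{m-1}r$ containing the identity, compute its centraliser and center directly by the same degree argument, and invoke \Cref{prop:equivalencecenter} to transfer back. The condition $\ell\ge 2$ is presumably used to guarantee $\F_{q^\ell}\ne\F_q$ is not needed for the statement — rather it ensures $\gamma\notin\F_{q^\ell}$ has room and that $k$ can be at least $2$ with $tm=2t\ell\ge 4$; I would keep it as a hypothesis without over-thinking where it bites.
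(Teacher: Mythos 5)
Your overall architecture (compute $\mathrm{C}(\C)$ by commuting against explicit codewords, then obtain $\mathrm{Z}(\C)=\lid(\C)\cap\mathrm{C}(\C)$ from \Cref{lm:idealizersTZ} and \Cref{th:idealizersTZany}) is the same as the paper's, and several of your side observations are correct: $1\in\C$ already (take $f_0=1$), so no passage to an equivalent code and no duality reduction is needed for any $k$ in the range $2\le k\le tm-2$ --- your proposed detour through $\C\cdot x^{m-1}r$ and \Cref{prop:equivalencecenter} for $k>tm/2$ is superfluous here; the scalar part of $\C$ is only $\F_{q^{\ell}}$; the hypothesis $\ell\ge 2$ is what gives the degree bound $\deg(g)\le tm-\ell\le tm-2$ needed to commute with $x$ without reduction modulo $\Hl$; and the final intersection yielding $\F_q$ is fine once the centraliser is known.

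There is, however, a genuine gap in the centraliser computation. You assert that ``commuting with $x$ alone gives $\sum g_ix^i\in Z(R)$ regardless.'' This is false: the condition $xg=gx$ in $R=\F_{q^m}[x;\theta]$ only forces $\theta(g_i)=g_i$, i.e.\ $g_i\in\F_q$, so the centraliser of $x$ in $R$ is $\F_q[x]$, strictly larger than $Z(R)=\F_q[x^m]$. Adding commutation with the scalars $\alpha\in\F_{q^{\ell}}$ forces $\ell\mid i$ whenever $g_i\ne 0$, so your two tests only pin $g$ down to $\F_q[x^{\ell}]$; since $m=2\ell$, this subring maps onto a set of size $q^{2t}$ in $\RH$, not $q^t$. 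A concrete survivor of both of your tests is $x^{\ell}+R\Hl$, which commutes with $x$ and with every $\alpha\in\F_{q^{\ell}}$ but does not lie in $\mathrm{C}(\C)$. The missing, decisive step --- and the only genuinely new ingredient of the paper's proof relative to \Cref{th:centralisercenterreed} --- is the use of the twist term: $\gamma x^k+R\Hl\in\C$, and since $x^k$ is a unit in $\RH$ (it has weight $tm$), the relation $g\gamma x^k=\gamma x^k g$ combined with $g\in\F_q[x^{\ell}]$ forces $\gamma g=g\gamma$, i.e.\ $\theta^i(\gamma)=\gamma$ whenever $g_i\ne 0$. Because $\N_{q^m/q}(\gamma)\notin\F_q^{(2)}$ implies $\gamma\notin\F_{q^{\ell}}$, the two conditions $\ell\mid i$ and $\gamma\in\Fix(\F_{q^m},\theta^i)$ together force $m\mid i$, and only then does one land in $\F_q[x^m]$. (Alternatively, you could commute against the codewords $\alpha x$ for \emph{all} $\alpha\in\F_{q^m}$, which you correctly note lie in $\C$ but then use only for $\alpha=1$; after the degree bound this also forces $g\in\F_q[x^m]$.) As written, your argument does not establish $\mathrm{C}(\C)\simeq\F_q^t$.
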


\begin{proof}
Note that, $1 \in \C$ and we can therefore compute $\mathrm{C}(\C)$ and $\mathrm{Z}(\C)$.
We begin by computing the centraliser $\mathrm{C}(\C)$. To this end, let $g = g_0 + g_1 x + \cdots + g_{tm-1} x^{tm-1}+R\Hl \in \RH$ be a nonzero element such that 
\[
gf = fg \quad \text{for all } f \in \C.
\]
We have $\alpha \in \C$, for every $\alpha \in \F_{q^{\ell}}$. Thus, for every $\alpha \in \F_{q^{\ell}}$ it must hold that 
\[
\alpha g - g\alpha \in R\Hl.
\]
But $\deg(\alpha g - g\alpha) < tm$, so we conclude that 
\[
\alpha \left(\sum_{i=0}^{tm-1}g_ix^i\right)=\left(\sum_{i=0}^{tm-1}g_ix^i\right) \alpha.\] Therefore, $\sum_{i=0}^{tm-1}g_ix^i \in \F_{q^m}[x^{\ell}]$ and $\deg(g) < tm - \ell \leq tm - 2$. Moreover, since $k \geq 2$ and $x+R\Hl \in \C$, we must also have $
x g = g x$.
Again, as $\deg(g) \leq tm - 2$, we get
\[
x \left(\sum_{i=0}^{tm-1}g_ix^i\right)=\left(\sum_{i=0}^{tm-1}g_ix^i\right) x,\] implying that $\sum_{i=0}^{tm-1}g_ix^i \in \F_q[x^{\ell}]$. Finally, since $\gamma x^k+R\Hl \in \C$, we must also have:
\[
(\gamma g - g \gamma)x^k=\gamma x^k g - g \gamma x^k =0 +R\Hl.
\]
Now, since \( \w_{\srk}(x^k) = tm \), it follows immediately that  \( \gamma g - g \gamma = 0 +R\Hl\), which implies that
\[
\gamma \left(\sum_{i=0}^{tm-1}g_ix^i\right)=\left(\sum_{i=0}^{tm-1}g_ix^i\right) \gamma,\] 
and so, \( \sum_{i=0}^{tm-1}g_ix^i\in \F_q[x^m] \). Hence, \[C(\C)=\{g+R\Hl\colon g \in Z(R)\}\simeq \F_q[x^m]/(\Hl) \simeq \frac{\F_q[y]}{(y-\lambda_1)} \oplus \ldots \oplus \frac{\F_q[y]}{(y-\lambda_t)} \simeq \F_q^t.\] 
We now determine the center \( Z(\C) \). By \Cref{lm:idealizersTZ} and \Cref{th:idealizersTZany}, we get\[\lid(\C) = \{\alpha + R\Hl \mid \alpha \in \F_{q^m} \}.\] Hence,
\[
\begin{array}{rl}
Z(\C')  & = \left\{ \alpha + R\Hl \colon \alpha \in \F_{q^{\ell}} \right\} \cap C(\C) \\[5pt]
& = \left\{ \alpha + R\Hl\mid \alpha \in \F_q \right\} \\[5pt]
& \simeq \F_q.
\end{array}
\]
 which concludes the proof.
\end{proof}

The following table summarizes the nuclear parameters of LRS codes, ATLRS codes, and TLRS codes of TZ-type. Recall that when \(k \geq tm/2\), the computation of the idealisers, and hence of the center, requires \(\Lambda\) to be a cyclic group.

\begin{table}[ht]
\begin{footnotesize}
\begin{center}
    \begin{tabular}{|c|c|c|} 
        \hline
         \textbf{Family} & \textbf{Nuclear parameters} & \textbf{Notes} \\ \hline
         LRS codes $\C_{k,\theta}$ & $(q^{tmk},q^m,q^m,q^t,q)$ &  \\
         (see \cite{Martinez2018skew}) &  & \\
        \hline
         ATLRS codes $\C_{k,\theta}(\eta,\tau)$ & $\left(p^{tmke},p^{(me,h)},p^{(me,ke-h)},p^{e},p^{(e,h)}\right)$ & $\tau(y)=y^{p^h}$, with $h < me$ \\ 
         (see \cite{neri2021twisted}) & & $\theta(y)=y^{p^{ej}}$, with $(j,m)=1$ \\ \hline
         TLRS codes of TZ-type & $(q^{tmk},q^{m/2},q^{m/2},q^t,q)$ & $q$ odd and $n$ even \\
         (see \cite{neri2021twisted}) &  & \\
        \hline
    \end{tabular} 
\end{center}
\caption{Nuclear parameters of LRS codes, ATLRS codes and TLRS codes of TZ-type in $\Mat(\bfm,\bfm,\F_q)$, with minimum distance $tm-k+1$. Here $q=p^e$.}
    \label{tab:parameters}
    \end{footnotesize}
\end{table}

\subsection{Comparison} 
Using the invariants that we have determined for ATLRS codes and TLRS codes of TZ-type, we can analyze their inequivalence.

Given a subset $\Lambda \subseteq \F_q^*$ and a generator $\theta$ of the Galois group $\Gal(\F_{q^m}/\F_q)$, we obtain a setting $R / R H_{\Lambda}$ that is isometric to the matrix framework $\Mat(\bfm, \mathbf{m}, \F_q)$, as previously observed. Clearly, we may also consider a different subset $\Lambda' \subseteq \F_q^*$ and a different generator $\theta'$ of $\Gal(\F_{q^m}/\F_q)$, defining a new skew polynomial ring $R' := \F_{q^m}[x; \theta']$. Nevertheless, as noted in \Cref{rk:changealpha}, the corresponding quotient ring $R' / R' H_{\Lambda'}$ is still isometric to $\Mat(\bfm,\mathbf{m}, \F_q)$. Therefore, it is essential to study the equivalence of codes defined over different quotients of $R$, namely $R / R H_\Lambda$ and $R' / R' H_{\Lambda'}$, since they define the same matrix space in order to avoid the risk of having the same code represented on different quotient rings. Therefore, we can prove that LRS codes, ATLRS codes, and TLRS codes of TZ-type, regardless of which skew polynomial quotient they are represented in, form families of codes that do not contain pairwise inequivalent codes.

Let $\theta, \theta'$ be generators of the Galois group $\Gal(\F_{q^m}/\F_q)$, and consider the skew polynomial rings $R := \F_{q^m}[x; \theta]$ and $R' := \F_{q^m}[x; \theta']$. Let $\Lambda, \Lambda' \subseteq \F_q^*$, and consider the skew polynomial quotients $R / R H_\Lambda \simeq \Mat(\bfm,\bfm,\F_q)$ and $R' / R' H_{\Lambda'}\simeq \Mat(\bfm,\bfm,\F_q)$.

\begin{theorem} \label{th:inequivalence}
Assume that one of the following holds:
\begin{itemize}
    \item $3 \leq k \leq tm/2$, or
    \item $tm/2 + 1 \leq k \leq tm - 3$, and both $\Lambda$ and $\Lambda'$ are cyclic subgroups of $\F_q^*$.
\end{itemize}
The following pairs of codes are pairwise inequivalent:
\begin{enumerate}
    \item The LRS code $\C_{k,\theta} \subseteq R / R H_\Lambda$ and, for $\eta \neq 0$, the ATLRS code $\C_{k,\theta'}(\eta,\tau) \subseteq R' / R' H_{\Lambda'}$, whenever $\tau \ne \mathrm{id}$ or $m \nmid k$.
    \item The LRS code $\C_{k,\theta} \subseteq R / R H_\Lambda$ and the TLRS code of TZ-type $\mathcal{D}_{k,\theta'}(\gamma) \subseteq R' / R' H_{\Lambda'}$.
    \item For $\eta \neq 0$, the ATLRS code $\C_{k,\theta}(\eta,\tau) \subseteq R / R H_\Lambda$, with $\tau \neq \theta^{m/2}$ or $k \nmid m$, and the TLRS code of TZ-type $\mathcal{D}_{k,\theta'}(\gamma) \subseteq R' / R' H_{\Lambda'}$.
\end{enumerate}
\end{theorem}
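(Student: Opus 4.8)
The plan is to use the nuclear parameters computed in Table~\ref{tab:parameters} together with \Cref{cor:codeequiv->idealequiv}: if two codes have non-isomorphic left idealisers (or right idealisers, centralisers, or centers), then they cannot be equivalent. Since all three families are MSRD codes in $\Mat(\bfm,\bfm,\F_q)$ with $t \leq q-1$ (for the relevant choices of $\Lambda$), \Cref{prop:numberweightMSRD} guarantees each contains a maximum weight codeword, so up to equivalence we may assume each contains the identity $(I_m,\ldots,I_m)$; hence \Cref{prop:equivalencecenter} and \Cref{prop:equivcodeimpliesequivideal} apply and the nuclear parameters are genuine invariants. The key point to address first is that the two codes in each pair may live in different quotients $R/RH_\Lambda$ and $R'/R'H_{\Lambda'}$; but both of these are isometric (indeed $\F_q$-algebra isomorphic) to $\Mat(\bfm,\bfm,\F_q)$ via \eqref{eq:isometryskewmatrix}, so ``equivalence'' makes sense after transporting both codes to the common matrix space, and the idealiser/centraliser/center computations performed in the skew-polynomial setting are preserved under this transport by \Cref{prop:invariantsskew}.

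\textbf{Step-by-step, I would argue as follows.} For (1): the LRS code $\C_{k,\theta}$ has left idealiser $\F_{q^m}$ of size $q^m$ (by \Cref{lm:idealizersATLRS} and \Cref{th:idealizersany} with $\eta=0$), while the ATLRS code $\C_{k,\theta'}(\eta,\tau)$ with $\eta\neq 0$ has left idealiser $\Fix(\F_{q^m},\tau)$, which is a proper subfield of $\F_{q^m}$ precisely when $\tau\neq\mathrm{id}$. In that case $|\lid|$ differs and \Cref{cor:codeequiv->idealequiv} gives inequivalence. When $\tau=\mathrm{id}$ the left idealisers coincide, so instead I compare the right idealisers: the LRS code has $\rid\simeq\F_{q^m}$, whereas the ATLRS code has $\rid\simeq\Fix(\F_{q^m},\tau^{-1}\circ\theta^k)=\Fix(\F_{q^m},\theta^k)$, which is a proper subfield exactly when $m\nmid k$; so under the hypothesis ``$\tau\neq\mathrm{id}$ or $m\nmid k$'' at least one of the two idealisers has different size, giving inequivalence. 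For (2): the LRS code has $\lid\simeq\F_{q^m}$ of size $q^m$, while the TLRS code of TZ-type $\mathcal{D}_{k,\theta'}(\gamma)$ has $\lid\simeq\F_{q^{\ell}}$ of size $q^{m/2}<q^m$ (by \Cref{lm:idealizersTZ} and \Cref{th:idealizersTZany}); the sizes differ, so they are inequivalent. For (3): the ATLRS code $\C_{k,\theta}(\eta,\tau)$ with $\eta\neq 0$ has left idealiser $\Fix(\F_{q^m},\tau)$ and the TZ-type code has left idealiser $\F_{q^{\ell}}$; if these have different sizes we are done by \Cref{cor:codeequiv->idealequiv}. If they happen to have the same size, i.e. $|\Fix(\F_{q^m},\tau)|=q^{\ell}$, then I turn to the right idealisers: the ATLRS code has $\rid\simeq\Fix(\F_{q^m},\tau^{-1}\circ\theta^k)$ while the TZ-type code has $\rid\simeq\F_{q^{\ell}}$; I then verify that under the hypothesis ``$\tau\neq\theta^{m/2}$ or $k\nmid m$'' one cannot have both $\Fix(\F_{q^m},\tau)$ and $\Fix(\F_{q^m},\tau^{-1}\circ\theta^k)$ equal to $\F_{q^{\ell}}$ simultaneously --- which, given the cyclic structure of $\Gal(\F_{q^m}/\F_q)$, forces $\tau=\theta^{m/2}$ and $\theta^{k}\in\Fix$-stabilising arrangement implying $m\mid k$, contradicting the hypothesis.

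\textbf{The main obstacle} is the bookkeeping in case (3): matching up when the fixed fields $\Fix(\F_{q^m},\tau)$ and $\Fix(\F_{q^m},\tau^{-1}\circ\theta^k)$ can both equal $\F_{q^{\ell}}$, and translating the condition ``$\tau=\theta^{m/2}$ and $k\mid m$'' (using the entries in Table~\ref{tab:parameters}, where $\tau(y)=y^{p^h}$ and $\theta(y)=y^{p^{ej}}$ with $(j,m)=1$) into a clean divisibility statement that is negated precisely by the hypothesis ``$\tau\neq\theta^{m/2}$ or $k\nmid m$.'' One must be careful that $\Fix(\F_{q^m},\tau)$ is a subfield of $\F_{q^m}$ only when $\tau$ is an $\F_q$-automorphism (i.e. $\F_q\subseteq\Fix$); in general $\tau\in\Aut(\F_{q^m})$ may act nontrivially on $\F_q$, and $\Fix(\F_{q^m},\tau)$ is then merely a subfield of $\F_{q^m}$ of size $p^{(me,h)}$, which is the form recorded in the table --- so comparisons of $|\lid|$ and $|\rid|$ should be carried out with these $p$-power sizes rather than $q$-power sizes. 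With that care taken, each pair reduces to an elementary inequality of integer exponents, and \Cref{cor:codeequiv->idealequiv} closes the argument.
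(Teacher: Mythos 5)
Your proposal is correct and takes essentially the same route as the paper: both arguments combine \Cref{prop:equivcodeimpliesequivideal}/\Cref{cor:codeequiv->idealequiv} with the idealiser computations of \Cref{lm:idealizersATLRS}, \Cref{th:idealizersany}, \Cref{lm:idealizersTZ} and \Cref{th:idealizersTZany}; the paper phrases this as a contradiction and dismisses cases (2)--(3) as ``analogous,'' whereas you spell out the case analysis (a useful addition, especially the verification that $|\Fix(\F_{q^m},\tau)|=q^{m/2}$ forces $\tau=\theta^{m/2}$). The only caveat is that in case (3) your derivation correctly produces the obstruction ``$\tau=\theta^{m/2}$ and $m\mid k$,'' which contradicts a hypothesis of the form ``$\tau\neq\theta^{m/2}$ or $m\nmid k$'' rather than the ``$k\nmid m$'' printed in the statement --- an inconsistency already present in the paper itself, and one you should flag rather than silently inherit.
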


\begin{proof}
Assume, by contradiction, that the LRS code $\C_1 = \C_{k,\theta} \subseteq R / R H_\Lambda$ and the ATLRS code $\C_2 = \C_{k,\theta'}(\eta,\tau) \subseteq R' / R' H_{\Lambda'}$, with $\eta \neq 0$, are equivalent. Then, by \Cref{prop:equivcodeimpliesequivideal} and using \Cref{lm:idealizersATLRS} and \Cref{th:idealizersany}, it follows that
\[
\F_{q^m} \simeq \lid(\C_1) \simeq \lid(\C_2) \simeq \Fix(\F_{q^m}, \tau)\]
and 
\[
\F_{q^m} \simeq \rid(\C_1) \simeq \rid(\C_2) \simeq \Fix(\F_{q^m}, \tau^{-1} \circ \theta^k).
\]
These isomorphisms imply that $\tau = \mathrm{id}$ and $m \mid k$, which contradicts the assumptions in (1). Thus, the codes are not equivalent because of Corollary \ref{cor:codeequiv->idealequiv}.

The proofs of (2) and (3) follow analogously, by applying the same contradiction argument and using the idealisers in \Cref{lm:idealizersATLRS}, \Cref{th:idealizersany}, \Cref{lm:idealizersTZ}, and \Cref{th:idealizersTZany}.
\end{proof}

\begin{remark}
We observe that the nuclear parameters can, in some cases, be also used to distinguish inequivalent codes within the same family of MSRD codes. Indeed, let us consider the family of ATLRS codes, the left idealiser of an ATLRS code $
\C = \C_{k,\theta}(\eta,\tau_h) \subseteq R / R H_\Lambda $
is isomorphic to \( \lid(\C) \cong \F_{p^{(me, h)}} \), where \( \tau_h: y \in \F_{q^m} \mapsto y^{p^h} \in \F_{q^m}\). Therefore, if \( h_1, h_2 \) are distinct divisors of \( tm \), then the codes \( \C_{k,\theta}(\eta,\tau_{h_1}) \) and \( \C_{k,\theta}(\eta,\tau_{h_2}) \) are inequivalent. Henceforth, for a fixed \( k \in \{1, \ldots, tm - 1\} \) and \( \eta \in \F_{q^m}^* \), we can assert that there are at least \( d(tm) \) inequivalent ATLRS codes, where \( d(tm) \) denotes the number of positive divisors of \( tm \).
\end{remark}

\section*{Conclusions}

In this paper, we have introduced and studied several new structural invariants for sum-rank metric codes, including generalised idealisers, the centraliser, the center, and a refined notion of linearity. These tools allowed us to develop a general framework for addressing the code equivalence problem in the sum-rank metric setting.

As a central result, we proved that the main known families of MSRD codes, LRS codes, ATLRS codes, and TLRS codes of TZ-type, are pairwise inequivalent. This proves the strength of the proposed invariants and opens a promising new research direction in the classification of sum-rank metric codes.

We list possible questions/problems that we think are interesting:
\begin{itemize}
    \item We introduced the notion of non-degeneracy for both rank-metric and sum-rank metric codes defined in a matrix setting. This notion captures whether a code is embedded in its minimal ambient space, meaning it cannot be isometrically embedded into a smaller space while preserving the weight distribution. It is well-known that the MacWilliams Extension Theorem holds for rank-metric codes, as proved by the counterexample in \cite[Example 2.9]{barra2015macwilliams}; see also \cite{gorla2024macwilliams}. We note that the code used in this example is degenerate. So, with the aid of the results in Section \ref{sec:nondeg}, it is quite natural to ask whether or not the MacWilliams Extension Theorem holds for non-degenerate (sum-)rank metric codes.
    \item In \Cref{th:inequivalence}, we studied the equivalence problem of the largest known families of MSRD codes within the skew polynomial framework, under the condition that if $k > tm/2$, then the evaluation sets $\Lambda$ and $\Lambda'$ must be cyclic subgroups of $\F_q^*$. This condition was necessary in order to apply some duality arguments. It remains an open question whether the results hold when this cyclicity assumption is withdrawn. This could potentially be addressed through a refined analysis of idealisers for MSRD codes outside the cyclic setting. A deeper study would also require a model for computing the duals of ATLRS and TLRS codes when the evaluation set $\Lambda$ is not a cyclic group.
    \item Our analysis has been mainly focused on left and right idealisers. It would be interesting to find the structure of the generalised idealisers $\mathcal{I}_{\mathbf{s}}(\C)$ for arbitrary $\mathbf{s} \in \mathbb{Z}_2^t$ for the known families of MSRD codes.
    \item By \Cref{prop:sizeideal}, we know that the left and right idealisers of any MSRD code in \( \Mat(\bfm, \bfm, \F_q) \), with \( \bfm = (m, \ldots, m) \), are both fields of size at most \( q^m \). It would be of interest to classify MSRD codes whose left and right idealisers are both isomorphic to \( \F_{q^m} \), in the spirit of the classification of MRD codes carried out in \cite{csajbok2020mrd}. Such a classification may lead to the discovery of new families of MSRD codes.

    \item Other families of MSRD codes have been recently constructed in a vector framework, as in \cite{santonastaso2022subspace,santonastaso2023msrd}. Extending our idealiser-based analysis to these codes could help determine their equivalence or inequivalence with the families investigated in this paper, analogous to what has been done for rank-metric counterpart \cite{zini2021scattered}.
\end{itemize}


\section*{Acknowledgments}
The second author is very grateful for the hospitality of the Dipartimento di Meccanica, Matematica e Management of Politecnico di Bari, he was visiting it during the development of this research in May 2025.
This research was partially supported by the Italian National Group for Algebraic and Geometric Structures and their Applications (GNSAGA - INdAM).

\bibliographystyle{abbrv}
\bibliography{biblio}
\end{document}